\documentclass{article}

\usepackage{arxiv}

\usepackage[utf8]{inputenc}
\usepackage[T1]{fontenc}
\usepackage{amsmath}
\usepackage{amsthm}
\usepackage{newtxtext}
\usepackage{newtxmath}
\usepackage{bm}
\usepackage{graphicx}
\usepackage{booktabs}
\usepackage{microtype}
\usepackage{natbib}
\usepackage{url}
\usepackage{hyperref}
\usepackage{doi}
\usepackage{cleveref}

\allowdisplaybreaks[4]
\bibpunct[, ]{(}{)}{,}{a}{}{,}
\theoremstyle{plain}
\newtheorem{theorem}{Theorem}
\newtheorem{proposition}{Proposition}
\newtheorem{lemma}{Lemma}
\newtheorem{corollary}{Corollary}
\theoremstyle{definition}
\newtheorem{assumption}{Assumption}
\newtheorem{definition}{Definition}
\theoremstyle{remark}
\newtheorem{remark}{Remark}

\title{Regret-Based Max-Pressure Traffic Signal Control under Uncertain Traffic States}

\author{%
Chaopeng Tan* \qquad Qiongdan Hu \qquad Meng Wang\\
\normalfont Chair of Traffic Process Automation, Technische Universit\"at Dresden\\
\normalfont Hettnerstra\ss e 3, 01069 Dresden, Germany\\
\normalfont \texttt{chaopeng.tan@tu-dresden.de, qiongdan.hu@tu-dresden.de, meng.wang@tu-dresden.de}
}
\date{}

\renewcommand{\shorttitle}{Regret-Based Max-Pressure Control}

\hypersetup{
  pdftitle={Regret-Based Max-Pressure Traffic Signal Control under Uncertain Traffic States},
  pdfauthor={Chaopeng Tan, Qiongdan Hu, Meng Wang},
  pdfkeywords={max-pressure signal control, traffic-state uncertainty, phase regret, distributionally robust optimization, queue stability}
}

\begin{document}
\maketitle

\begin{abstract}
Max-pressure (MP) traffic signal control selects phases according to pressure scores constructed from estimated traffic states. In practice, these states are uncertain, and the consequence of an error depends on whether it changes the phase ranking and how much pressure is lost. To characterize these decision consequences, this paper develops a decision-oriented \emph{Regret-Based Max-Pressure} framework that represents traffic-state uncertainty through phase-selection regret, defined as the pressure loss of a candidate phase relative to the pressure-maximizing phase under each plausible latent state. We propose two control methods: Stochastic Regret MP, which minimizes expected phase-selection regret under a calibrated residual distribution, and Distributionally Robust (DR) Regret MP, which minimizes worst-case expected regret over a Wasserstein ambiguity set. A key modeling contribution is to robustify phase-selection regret rather than absolute pressure: because the regret representation is invariant to state-dependent pressure components shared by all candidate phases, the resulting robustification concentrates on relative phase quality and avoids sensitivity to scenario variation that does not affect phase ranking. We further translate uncertainty-representation error into an explicit queue-linear pressure-loss bound and establish strong queue stability when the available net-service reserve exceeds the resulting loss coefficient. Numerical experiments show that Stochastic Regret MP and DR Regret MP reduce normalized true-state pressure regret by about 70\% relative to Point-estimate MP. Under the same uncertainty information, DR Regret MP also produces lower network queues than a controller that directly robustifies absolute phase pressure. In microscopic SUMO experiments on a real-world network with 28 intersections, Stochastic Regret MP and DR Regret MP reduce mean delay by 18.9\% and 14.9\%, respectively, relative to Point-estimate MP, and both retain their benefits under detector undercount and detector failure. These findings show that representing uncertainty through its decision consequences can improve the reliability of traffic control under imperfect state information, providing a general design principle for data-driven transportation decision-making under uncertainty.
\end{abstract}

\keywords{max-pressure signal control \and traffic-state uncertainty \and phase regret \and distributionally robust optimization \and queue stability}
\section{Introduction}
\label{sec:introduction}

Max-pressure (MP) traffic signal control selects phases by comparing pressure scores constructed from local traffic states. Originating from maximum-throughput scheduling \citep{tassiulas1992stability}, MP was adapted to signalized road networks as a decentralized control principle with queue-stability guarantees for demand within the admissible region \citep{wongpiromsarn2012distributed,varaiya2013max}. Subsequent studies further extended the decentralized MP framework and its network-level analysis \citep{le2015decentralized,levin2023summary,liu2024nmp}. The combination of local computation and network-level stability makes MP attractive for large-scale traffic control. At the same time, because MP determines phase actions by ranking state-dependent pressure scores, the resulting decisions depend directly on the traffic-state information used to construct those scores.

Modern MP formulations increasingly rely on richer, and often estimated, traffic information. Beyond conventional queue lengths, existing formulations incorporate vehicle positions \citep{li2019position}, travel-time or delay information \citep{liu2022delay,zoabi2025ttd}, and connected-vehicle or heterogeneous observations for traffic-state estimation under partial observation \citep{cao2020queue,tan2026cvmp,tan2026transitmp,tan2026privatemp}. As a result, the pressures used for phase selection may depend on estimated queues, turning proportions, and effective movement service. Existing MP studies have considered unavailable turning information and noisy or estimated queues \citep{gregoire2014unknown,charalambous2023noisy,liaquat2024noisy}, but these treatments largely address individual information components. They do not provide a common representation of how heterogeneous traffic-state errors jointly affect the relative ranking of candidate phases and the pressure loss associated with the resulting phase decision.

The central difficulty is that the magnitude of a traffic-state error does not by itself determine its consequence for MP control. A sizeable perturbation can alter several phase-pressure values without changing their ordering, whereas a smaller perturbation can reverse the preferred phase when competing pressures are close. The relevant consequence of uncertainty is therefore decision-relative: it depends on how the uncertain state changes the relative quality of candidate phases and how much pressure is lost when a nonoptimal phase is selected. More broadly, traffic signal control has addressed uncertainty through stochastic, reliability-based, and robust optimization, including uncertain arrivals and demand \citep{jia2025adaptive,pedroso2024feedback}, traffic-flow variation \citep{yin2008robust,tan2025robust}, uncertain turning and network interactions \citep{fei2023stochastic,wan2024coordination}, and sensing or estimation errors \citep{jia2023uncertainty,tan2025contextual}. These approaches incorporate uncertainty into signal timing, prediction, network flows, or estimated states, but they do not directly characterize uncertainty in pressure-based phase ranking. Regret-based optimization provides a natural decision-relative perspective by measuring the loss of a chosen action relative to the action that would be optimal under the realized uncertain state \citep{kouvelis1997robust,aissi2009minmax}. Distributionally robust optimization (DRO) further accounts for ambiguity in the probability distribution of uncertain states \citep{delage2010distributionally,gao2023distributionally}, with Wasserstein ambiguity sets providing a data-driven construction \citep{esfahani2018data}; recent studies have also combined distributional robustness with regret-based criteria \citep{altaha2023distributionally,cho2024wasserstein,kargin2024infinite}. For MP control, however, two questions remain unresolved: how to define a decision loss that is consistent with pressure-based phase ranking, and how robustifying this decision-relative loss differs from directly robustifying absolute phase pressure under distributional ambiguity.

To address these questions, we develop a decision-oriented \emph{Regret-Based Max-Pressure} framework. Historical estimation residuals are used to reconstruct plausible latent traffic states, and \emph{phase-selection regret} measures the pressure loss of a candidate phase relative to the pressure-maximizing phase under each latent state. Based on this decision-loss representation, Stochastic Regret MP minimizes expected phase-selection regret under a calibrated residual distribution, while Distributionally Robust (DR) Regret MP minimizes worst-case expected regret over a Wasserstein ambiguity set.
The paper makes four main contributions.
\begin{itemize}

    \item \textbf{A decision-relative representation of traffic-state uncertainty for MP control.}
    We formulate phase-selection regret to characterize traffic-state uncertainty through its consequence for the phase decision rather than through state-error magnitude alone. This representation places heterogeneous uncertainty on a common pressure-loss scale and provides the basis for Stochastic Regret MP and Wasserstein DR Regret MP.

    \item \textbf{A decision-focused robustification of pressure-based phase selection.}
    A key modeling contribution is to robustify phase-selection regret rather than absolute phase pressure. Regret preserves the expected-score decision under a fixed distribution and, under distributional ambiguity, filters out state-dependent pressure variation shared by all candidate phases. The resulting robustification therefore concentrates on relative phase quality rather than on scenario variation that does not affect phase ranking, as may occur with direct absolute-pressure robustification.

    \item \textbf{A stability characterization under uncertain traffic states.}
    We translate uncertainty-representation error into an explicit pressure-loss bound with a queue-linear component and establish strong queue stability when the available net-service reserve exceeds the resulting loss coefficient. The same analysis places Regret MP and Point-estimate MP on a common pressure-loss scale and provides sufficient conditions for comparing their provable stability regions.

    \item \textbf{Mechanistic and network-level evidence of the benefits of regret-based control.}
    Numerical experiments show that Stochastic Regret MP and DR Regret MP substantially reduce true-state phase-selection loss and network queues relative to Point-estimate MP. They also show that, under the same uncertainty information, regret-based robustification achieves better network performance than direct robustification of absolute phase pressure. Microscopic experiments further demonstrate reductions in delay and congestion on a 28-intersection network, with the benefits persisting under detector undercount and detector failure.

\end{itemize}

The remainder of this paper is organized as follows. Section~\ref{sec:mp_uncertainty} formulates MP control with uncertain traffic states. Section~\ref{sec:regret_control} develops the regret-based controllers and their structural properties. Section~\ref{sec:theory} establishes the network-stability results. Section~\ref{sec:numerical} presents numerical experiments, and Section~\ref{sec:microscopic} evaluates the framework in microscopic simulation. Section~\ref{sec:conclusion} concludes the paper.

\section{Max-Pressure Control under Uncertain Traffic States}
\label{sec:mp_uncertainty}
\label{sec:problem}

The proposed framework retains the physical queue dynamics and decentralized phase-selection structure of conventional MP. Its point of departure is informational: the phase decision is made from an estimated traffic state, whereas the actual pressure is determined by a latent state that may differ from that estimate. This section defines the network dynamics and standard MP phase score, then distinguishes the available estimate from the latent traffic state. The construction of uncertainty scenarios and their conversion into phase-selection regret are developed in Section~\ref{sec:regret_control}.

Throughout the paper, calligraphic letters denote sets, bold letters denote vectors or matrices, and hats denote online estimates. The scenario index is $k$, while superscripts $\mathrm{T}$, $\mathrm{P}$, $\mathrm{Sto}$, $\mathrm{DR}$, and $\mathrm{Set}$ identify the controller or uncertainty treatment. Blackboard-bold $\mathbb P$ denotes a probability law, whereas the roman superscript $\mathrm P$ denotes Point-estimate MP. A superscript $\star$ is reserved for an optimizer.

\subsection{Network and traffic dynamics}
\label{sec:network_dynamics}

Following the conventional store-and-forward representation of signalized traffic networks used in MP control \citep{varaiya2013max}, we model traffic at the movement level through queues, routing proportions, and saturation service.
Consider a directed road network with a set of signalized intersections $\mathcal N$ and a set of permissible movements $\mathcal M$. For intersection $n\in\mathcal N$, let $\mathcal M_n\subseteq\mathcal M$ denote the movements controlled by that intersection. A movement represents a queue of vehicles sharing the same immediate downstream destination. Let $Q_m(t)\geq0$ be the number of vehicles in movement queue $m$ at the beginning of control interval $t$, and let $\bm Q(t)$ collect all movement queues. Intersection $n$ has a finite phase set $\mathcal S_n$. A phase vector $\bm s_n\in\mathcal S_n$ has component $s_m\in\{0,1\}$ for $m\in\mathcal M_n$, where $s_m=1$ activates movement $m$. The network phase vector is $\bm s(t)=\operatorname{col}(\bm s_n(t):n\in\mathcal N)$ and belongs to the Cartesian product $\mathcal S=\prod_{n\in\mathcal N}\mathcal S_n$.

Let $A_m(t)$ denote exogenous arrivals to movement $m$, and let $\bm A(t)$ collect all exogenous arrivals. The routing matrix $\bm R(t)=[r_{ab}(t)]$ is defined such that $r_{ab}(t)$ denotes the fraction of vehicles discharged from movement $a$ that subsequently enter downstream movement $b$. Thus, $r_{ab}(t)\geq0$ and $\sum_b r_{ab}(t)\leq1$; a strict inequality allows a fraction of the discharged traffic to leave the modeled network.
Let $c_m(t)>0$ denote the number of vehicles that can potentially depart from movement $m$ during control interval $t$ when the movement is actuated, and let $\bm C(t)=\operatorname{diag}\{c_m(t):m\in\mathcal M\}$. The quantity $c_m(t)$ represents the prevailing saturation service available to movement $m$ during the interval and may vary with local discharge conditions. The realized discharge is
\(D_m(t) = \min\{Q_m(t),c_m(t)s_m(t)\},\)
and $\bm D(t)$ collects all realized discharges. Let $\bm I$ denote the identity matrix of compatible dimension. With these definitions, conservation of vehicles gives the network queue dynamics
\begin{equation}
    \bm Q(t+1)
    =
    \bm Q(t)
    +
    \bm A(t)
    -
    [\bm I-\bm R^{\top}(t)]\bm D(t).
    \label{eq:queue_dynamics}
\end{equation}
Equation~\eqref{eq:queue_dynamics} is the physical backbone of the subsequent stability analysis: each realized discharge is removed from its current movement, while the routed fraction is added to downstream movements. The control problem is therefore to allocate the available saturation service across feasible phases so as to create sufficient net service relative to the external arrivals.

\subsection{Max-pressure phase selection}
\label{sec:mp_phase_selection}

MP converts the network state in~\eqref{eq:queue_dynamics} into a scalar score for each feasible phase. The score rewards service to a large upstream queue and discounts service that sends vehicles toward heavily occupied downstream movements. Under the latent traffic state, the pressure released by network phase $\bm s$ is
\begin{equation}
    H(\bm s,t)
    =
    \bm Q^{\top}(t)
    [\bm I-\bm R^{\top}(t)]
    \bm C(t)\bm s.
    \label{eq:true_pressure}
\end{equation}
Equivalently, each served movement $a$ contributes $c_a(t)s_a[Q_a(t)-\sum_b r_{ab}(t)Q_b(t)]$, i.e., its available saturation service multiplied by the upstream-minus-routed-downstream queue imbalance. Because each movement is controlled by one intersection, the network pressure decomposes as $H(\bm s,t)=\sum_{n\in\mathcal N}H_n(\bm s_n,t)$. This separability is the computational reason that MP can make local phase decisions while acting on a network-level Lyapunov drift. The true-state MP reference controller is therefore
\begin{equation}
    \bm s_n^{\mathrm T}(t)
    \in
    \arg\max_{\bm s_n\in\mathcal S_n}
    H_n(\bm s_n,t),
    \qquad n\in\mathcal N.
    \label{eq:true_state_mp}
\end{equation}
The term ``true-state MP'' is used throughout the paper to denote the MP phase rule evaluated at the latent traffic state.
The formulation above uses unweighted movement queues for clarity. The same pressure and regret construction extends to any fixed positive diagonal queue weighting, including the link-length normalization used in the microscopic experiments. Appendix~\ref{app:weighted_stability} gives the complete weighted formulation and stability result.

The true-state rule provides the reference decision when the latent network state is known. An implementable controller, however, acts on measurements or estimates of queues, turning proportions, and effective service. The next subsection therefore separates the available estimate from the latent state that generates the actual pressure.

\subsection{Estimated traffic state and point-estimate MP}
\label{sec:uncertain_state}

The controller does not generally observe the latent traffic state $\bm x(t)=(\bm Q(t),\bm R(t),\bm C(t))$ directly. Instead, it receives the estimate
$\widehat{\bm x}(t)=(\widehat{\bm Q}(t),\widehat{\bm R}(t),\widehat{\bm C}(t))$.
We represent the discrepancy between the estimate and the latent state by the joint residual vector
$\bm\xi(t)=\operatorname{col}(\bm\xi^Q(t),\bm\xi^R(t),\bm\xi^C(t))\in\Xi$,
where $\Xi$ denotes the residual support, so that dependence among queue, routing, and service errors can be retained.

Queue and saturation-service errors are represented multiplicatively,
$\widehat Q_m(t)=[1+\xi_m^Q(t)]Q_m(t)$ and 
$\widehat c_m(t)=[1+\xi_m^C(t)]c_m(t),$
whereas routing uncertainty is represented additively as
$\widehat{\bm R}(t)=\bm R(t)+\bm\Xi^R(t),$
with $\bm\Xi^R(t)$ denoting the matrix form of $\bm\xi^R(t)$. The multiplicative form reflects that queue length and service capacity are nonnegative scale quantities, while routing entries are bounded proportions. We assume
$1+\xi_m^Q\geq\underline\chi_Q>0$ and
$1+\xi_m^C\geq\underline\chi_C>0$
for all $\bm\xi\in\Xi$, ensuring that the latent queue and service states can be uniquely reconstructed from the estimate and a residual realization. Routing matrices are restricted to the feasible set defined by nonnegative, row-substochastic turning proportions and admissible network movements.

Define the pressure evaluated at the current estimate as
$$
\widehat H(\bm s,t)
=
\widehat{\bm Q}^{\top}(t)
[\bm I-\widehat{\bm R}^{\top}(t)]
\widehat{\bm C}(t)\bm s.
$$
A controller that directly treats the current estimate as the operative traffic state applies
\begin{equation}
    \bm s^{\mathrm P}(t)
    \in
    \arg\max_{\bm s\in\mathcal S}
    \widehat H(\bm s,t).
    \label{eq:point_estimate_mp}
\end{equation}
The point-estimate rule in~\eqref{eq:point_estimate_mp} is referred to as Point-estimate MP throughout the paper, and its component at intersection $n$ is denoted by $\bm s_n^{\mathrm P}(t)$. It applies the MP phase rule directly to the available traffic estimate without aggregating residual uncertainty; ``Point'' is used as a compact label in figures and tables where space is limited.

Historical paired estimates and corresponding latent traffic states are used offline to construct a joint empirical residual distribution.
For online control, the empirical distribution is compressed to a finite set of representative joint residual scenarios and projected to the components required by each intersection. Let $\mathcal K_n=\{1,\ldots,K_n\}$ denote the index set of the retained local scenarios. The resulting local nominal distribution is
\begin{equation}
    \widehat{\mathbb P}_n
    =
    \sum_{k\in\mathcal K_n}
    \widehat\pi_{n,k}
    \delta_{\bar{\bm\xi}_{n}^{k}},
    \label{eq:local_nominal_distribution}
\end{equation}
where $\bar{\bm\xi}_{n}^{k}$ is the $k$-th retained local residual scenario, $\widehat\pi_{n,k}$ is its calibrated probability, and $\delta_{\bar{\bm\xi}_{n}^{k}}$ denotes the Dirac measure, i.e., a unit point mass concentrated at $\bar{\bm\xi}_{n}^{k}$. Thus, $\widehat{\mathbb P}_n$ is a discrete distribution over joint error patterns: queue, routing, and service errors within one scenario are kept together rather than treated as independent marginal perturbations. We write $\widehat{\bm\pi}_n=(\widehat\pi_{n,1},\ldots,\widehat\pi_{n,K_n})$ for the corresponding probability vector. The detailed residual construction, scaling, scenario reduction, and local projection are given in Appendix~\ref{app:residual_calibration}. Section~\ref{sec:regret_control} uses these scenarios to reconstruct plausible latent states and evaluate the consequence of each candidate phase.

\section{Regret-Based Max-Pressure Control}
\label{sec:regret_control}

The proposed framework represents uncertain traffic states through their consequences for pressure-based phase decisions. This section defines phase-selection regret, develops Stochastic and Distributionally Robust (DR) Regret MP, and establishes the structural relationships among regret-based and absolute-pressure robustification.

\subsection{Decision-relative representation of uncertain traffic states}
\label{sec:phase_regret_definition}
\label{sec:general_regret}

This subsection converts traffic-state uncertainty into a decision loss for phase selection. Each retained residual scenario reconstructs a plausible latent traffic state, which yields scenario-specific phase pressures; the pressure gap from the best phase defines phase-selection regret. The resulting regret matrix is the common input to the two controllers developed in Section~\ref{sec:stochastic_regret}.

Consider a finite action set $\mathcal A$ with state-dependent score $S(a,\bm\xi)$. Let $M(\bm\xi)=\max_{u\in\mathcal A}S(u,\bm\xi)$ and define the decision regret as $L(a,\bm\xi)=M(\bm\xi)-S(a,\bm\xi)$. The first property needed below is that this representation preserves the nominal score-based decision under a fixed distribution.

\begin{proposition}[Expected-regret equivalence for score-based decisions]
\label{prop:fixed_distribution_equivalence}
Let $\mathbb P$ be any fixed probability distribution under which $M(\bm\xi)$ and $S(a,\bm\xi)$ are integrable for all $a\in\mathcal A$. Then
\begin{equation}
    \arg\min_{a\in\mathcal A}
    \mathbb E_{\mathbb P}[L(a,\bm\xi)]
    =
    \arg\max_{a\in\mathcal A}
    \mathbb E_{\mathbb P}[S(a,\bm\xi)].
    \label{eq:fixed_distribution_equivalence}
\end{equation}
\end{proposition}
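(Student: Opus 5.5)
The argument reduces to linearity of expectation: the regret $L(a,\bm\xi)$ differs from $-S(a,\bm\xi)$ only by the action-independent term $M(\bm\xi)$.

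First, we check that the expectations in \eqref{eq:fixed_distribution_equivalence} are well defined and finite. By assumption, $M(\bm\xi)$ and $S(a,\bm\xi)$ are integrable under $\mathbb P$ for every $a\in\mathcal A$. Hence $L(a,\bm\xi)=M(\bm\xi)-S(a,\bm\xi)$ is integrable as the difference of two integrable functions. It is also nonnegative, since $M(\bm\xi)\ge S(a,\bm\xi)$ pointwise. Linearity of expectation then gives, for every $a\in\mathcal A$,
\begin{equation*}
\mathbb E_{\mathbb P}[L(a,\bm\xi)] = \mathbb E_{\mathbb P}[M(\bm\xi)] - \mathbb E_{\mathbb P}[S(a,\bm\xi)],
\end{equation*}
where $c:=\mathbb E_{\mathbb P}[M(\bm\xi)]$ is a finite constant that does not depend on $a$.

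Second, consider the map $a\mapsto \mathbb E_{\mathbb P}[L(a,\bm\xi)]$. It equals $c$ minus the map $a\mapsto\mathbb E_{\mathbb P}[S(a,\bm\xi)]$. Subtracting a real function from a constant reverses the order and preserves ties. Therefore $\mathbb E_{\mathbb P}[L(a,\bm\xi)]\le \mathbb E_{\mathbb P}[L(a',\bm\xi)]$ if and only if $\mathbb E_{\mathbb P}[S(a,\bm\xi)]\ge \mathbb E_{\mathbb P}[S(a',\bm\xi)]$, for all $a,a'\in\mathcal A$.

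Third, since $\mathcal A$ is finite, both the argmin and the argmax are nonempty. By the order equivalence, the two sets coincide: $a$ is a minimizer of expected regret exactly when no $a'$ has strictly larger expected score. This yields \eqref{eq:fixed_distribution_equivalence} as an equality of sets, including when there are ties.

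The only step needing care is the first: the constant $c$ must be finite so that the subtraction is not $\infty-\infty$. This is exactly what the integrability hypothesis on $M$ ensures. If only the $S(a,\cdot)$ were assumed integrable, finiteness of $c$ would still follow, because $\mathcal A$ is finite and $|M|\le\sum_{u\in\mathcal A}|S(u,\cdot)|$. Otherwise the proof is routine.
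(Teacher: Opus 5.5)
Your proposal is correct and follows essentially the same argument as the paper: expand $\mathbb E_{\mathbb P}[L(a,\bm\xi)]=\mathbb E_{\mathbb P}[M(\bm\xi)]-\mathbb E_{\mathbb P}[S(a,\bm\xi)]$ and observe that the first term does not depend on $a$. Your extra remarks on the finiteness of $\mathbb E_{\mathbb P}[M(\bm\xi)]$ (including the bound $|M|\le\sum_{u}|S(u,\cdot)|$ for finite $\mathcal A$) and on ties are valid refinements that the paper leaves implicit.
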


By the definition $L(a,\bm\xi)=M(\bm\xi)-S(a,\bm\xi)$, the term $\mathbb E_{\mathbb P}[M(\bm\xi)]$ is independent of the candidate action, so minimizing expected regret and maximizing expected score have the same optimizer set. For MP, this means that using phase-selection regret under a fixed distribution preserves the probability-weighted pressure decision. This equivalence provides the basis for representing traffic-state uncertainty through regret before the scenario-specific losses are aggregated in the controllers below. The proof is given in Appendix~\ref{app:regret_structural_proofs}.

We now instantiate this representation for MP control. For intersection $n$, let $\mathcal L_n$ contain the controlled and immediate downstream movements required by its pressure calculation, let $\Xi_n$ denote the corresponding local residual support, and let $\mathcal K_n=\{1,\ldots,K_n\}$ index the retained local residual scenarios. Let $\mathcal R_n$ denote the closed convex set of feasible local routing matrices induced by the routing constraints in Section~\ref{sec:uncertain_state}, and let $\Pi_{\mathcal R_n}$ denote Euclidean projection onto this set. The estimated local state is
\(\widehat{\bm x}_n(t)=(\widehat{\bm Q}_{\mathcal L_n}(t),\widehat{\bm R}_n(t),\widehat{\bm C}_n(t)).\)
For scenario $k$, inversion of the residual relations in Section~\ref{sec:uncertain_state} gives
\begin{equation}
    Q_m^k(t)
    =
    \frac{\widehat Q_m(t)}{1+\bar\xi_{m,k}^Q},
    \qquad
    c_m^k(t)
    =
    \frac{\widehat c_m(t)}{1+\bar\xi_{m,k}^C},
    \label{eq:scenario_qc_reconstruction}
\end{equation}
and
\begin{equation}
    \bm R_n^k(t)
    =
    \Pi_{\mathcal R_n}
    \left[
        \widehat{\bm R}_n(t)
        -
        \overline{\bm\Xi}_{n,k}^R
    \right],
    \label{eq:scenario_r_reconstruction}
\end{equation}
where $\overline{\bm\Xi}_{n,k}^R$ is the matrix form of the routing residual in scenario $k$ and $\Pi_{\mathcal R_n}$ projects onto the feasible routing set. Equations~\eqref{eq:scenario_qc_reconstruction}--\eqref{eq:scenario_r_reconstruction} map each retained residual scenario to a plausible latent traffic state. These quantities define the reconstructed latent state
\(\bm x_n^k(t)=\mathcal T_n(\widehat{\bm x}_n(t),\bar{\bm\xi}_n^k),\)
where $\mathcal T_n$ denotes the local latent-state reconstruction map.

Each reconstructed state is then used to evaluate the feasible phases. For candidate phase $\bm s_n\in\mathcal S_n$, its pressure under scenario $k$ is
$$H_{n,k}(\bm s_n,t):=\sum_{a\in\mathcal M_n}c_a^k(t)s_a\!\left[Q_a^k(t)-\sum_{b\in\mathcal L_n}r_{ab}^k(t)Q_b^k(t)\right].$$
Let
$H_{n,k}^{\max}(t)=\max_{\bm u_n\in\mathcal S_n}H_{n,k}(\bm u_n,t).$
The phase-selection regret is
\begin{equation}
    L_{n,k}(\bm s_n,t)
    =
    H_{n,k}^{\max}(t)
    -
    H_{n,k}(\bm s_n,t)
    \geq0.
    \label{eq:scenario_regret}
\end{equation}
Thus, each scenario maps heterogeneous traffic-state uncertainty to a phase-wise pressure loss relative to the scenario-best phase. Figure~\ref{fig:phase_regret_schematic} illustrates this construction.

\begin{figure}
\centering
\includegraphics[width=0.9\linewidth]{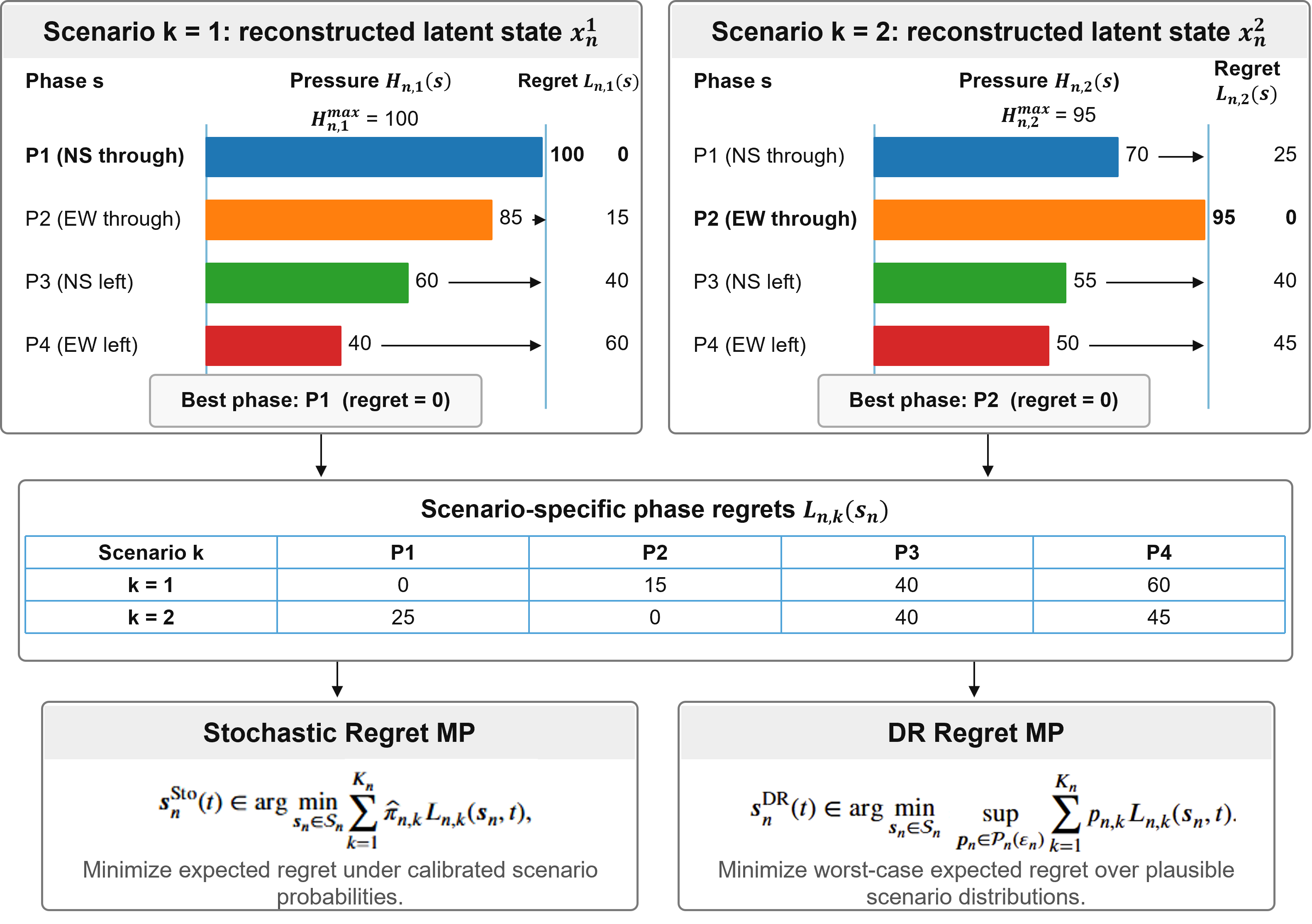}
\caption{Illustration of phase-selection regret. Each retained residual scenario reconstructs a plausible latent state and its phase pressures. The pressure gap from the scenario-best phase defines $L_{n,k}(\bm s_n)$; Stochastic Regret MP aggregates these regrets using calibrated probabilities, whereas DR Regret MP uses a Wasserstein ambiguity set. Values are illustrative.}
\label{fig:phase_regret_schematic}
\end{figure}

\subsection{Stochastic and distributionally robust regret control}
\label{sec:stochastic_regret}
\label{sec:dr_regret}

Given the regret matrix constructed above, this subsection defines two phase-selection rules. 
Let $\mathcal S_n=\{\bm s_{n,1},\ldots,\bm s_{n,N_n^{\mathrm{ph}}}\}$, where $N_n^{\mathrm{ph}}=|\mathcal S_n|$. Define the phase--scenario pressure matrix $\bm H_n(t)=[H_{n,k}(\bm s_{n,j},t)]_{j,k}$ and regret matrix $\bm L_n(t)=[L_{n,k}(\bm s_{n,j},t)]_{j,k}$. The two proposed controllers use the same regret matrix and differ only in how each phase row is aggregated across scenarios. For $\psi\in\{\mathrm{Sto},\mathrm{DR}\}$, let $G_n^\psi(\bm s_n,t)$ denote the aggregate regret score and write
$$\bm s_n^\psi(t)\in\arg\min_{\bm s_n\in\mathcal S_n}G_n^\psi(\bm s_n,t).$$
Stochastic Regret MP aggregates scenario regrets using the calibrated probabilities, whereas DR Regret MP minimizes worst-case expected regret over nearby scenario distributions.

\paragraph{Stochastic Regret MP.}
Using the calibrated scenario probabilities directly gives the expected-regret score
\begin{equation}
    G_n^{\mathrm{Sto}}(\bm s_n,t)
    =
    \sum_{k=1}^{K_n}
    \widehat\pi_{n,k}
    L_{n,k}(\bm s_n,t),
    \label{eq:stochastic_regret}
\end{equation}
and Stochastic Regret MP selects
\begin{equation}
    \bm s_n^{\mathrm{Sto}}(t)
    \in
    \arg\min_{\bm s_n\in\mathcal S_n}
    G_n^{\mathrm{Sto}}(\bm s_n,t).
    \label{eq:stochastic_controller}
\end{equation}
By Proposition~\ref{prop:fixed_distribution_equivalence},
\begin{equation}
    \arg\min_{\bm s_n\in\mathcal S_n}
    G_n^{\mathrm{Sto}}(\bm s_n,t)
    =
    \arg\max_{\bm s_n\in\mathcal S_n}
    \sum_{k=1}^{K_n}
    \widehat\pi_{n,k}H_{n,k}(\bm s_n,t),
    \label{eq:sto_expected_equivalence}
\end{equation}
so Stochastic Regret MP preserves the probability-weighted MP decision under the calibrated distribution.

\paragraph{Distributionally Robust Regret MP.}
DR Regret MP allows the scenario probabilities to vary within a Wasserstein ambiguity set $\mathcal P_n(\varepsilon_n)$ centered at the calibrated distribution. Its worst-case expected regret score is
\begin{equation}
    G_n^{\mathrm{DR}}(\bm s_n,t;\varepsilon_n)
    =
    \sup_{\bm p_n\in\mathcal P_n(\varepsilon_n)}
    \sum_{k=1}^{K_n}
    p_{n,k}L_{n,k}(\bm s_n,t),
    \label{eq:dr_regret_score}
\end{equation}
and DR Regret MP selects
\begin{equation}
    \bm s_n^{\mathrm{DR}}(t)
    \in
    \arg\min_{\bm s_n\in\mathcal S_n}
    G_n^{\mathrm{DR}}(\bm s_n,t;\varepsilon_n).
    \label{eq:dr_controller}
\end{equation}
To make the distributional protection in~\eqref{eq:dr_regret_score} operational, we need to specify which alternative scenario distributions are sufficiently close to the calibrated distribution. Because the retained scenarios occupy different locations in the residual space, this proximity should reflect both the amount of probability mass redistributed and the distance over which it is moved. We therefore use a discrete Wasserstein distance to characterize probability redistribution over the retained residual scenarios.

Let $d_n$ denote the local residual-space ground metric induced by the scaled residual metric in Appendix~\ref{app:residual_calibration}, and set $d_{n,ij}=d_n(\bar{\bm\xi}_n^i,\bar{\bm\xi}_n^j)$. For an alternative probability vector $\bm p_n\in\Delta_{K_n}$, where $\Delta_{K_n}$ is the probability simplex, define the discrete Wasserstein distance
\begin{equation}
    W_n(\bm p_n,\widehat{\bm\pi}_n)
    =
    \min_{\bm\varphi\succeq\bm0}
    \sum_{i=1}^{K_n}\sum_{j=1}^{K_n}
    d_{n,ij}\varphi_{ij}
    \label{eq:discrete_wasserstein}
\end{equation}
    $$\text{s.t.}\quad
    \sum_j\varphi_{ij}
    =\widehat\pi_{n,i},
    \quad i=1,\ldots,K_n,$$
    $$\sum_i\varphi_{ij}
    =p_{n,j},
    \quad j=1,\ldots,K_n,$$

where $\varphi_{ij}$ is the probability mass transported from nominal scenario $i$ to scenario $j$, and $d_{n,ij}$ is the corresponding transport cost. The ambiguity set is
\begin{equation}
    \mathcal P_n(\varepsilon_n)
    =
    \left\{
        \bm p_n\in\Delta_{K_n}:
        W_n(\bm p_n,\widehat{\bm\pi}_n)
        \leq\varepsilon_n
    \right\},
    \label{eq:local_ambiguity_set}
\end{equation}
where $\varepsilon_n\geq0$ is the transport budget. At $\varepsilon_n=0$, the calibrated distribution is recovered; larger values allow greater redistribution of probability mass. The ground metric and ambiguity radius are calibrated offline from residual data as described in Appendix~\ref{app:residual_calibration}.

At each decision epoch, DR Regret MP must evaluate the worst-case expected regret over this ambiguity set for every candidate phase. Because the retained support is finite, the Wasserstein constraint can be expressed directly in the transport variables $\bm\varphi$. This converts the inner maximization in~\eqref{eq:dr_regret_score} into the following finite-dimensional linear program:
\begin{equation}
    G_n^{\mathrm{DR}}(\bm s_n,t;\varepsilon_n)
    =
    \max_{\bm\varphi\succeq\bm0}
    \sum_{i=1}^{K_n}\sum_{j=1}^{K_n}
    \varphi_{ij}L_{n,j}(\bm s_n,t)
    \label{eq:dr_regret_transport}
\end{equation}
    $$\text{s.t.}\quad
    \sum_j\varphi_{ij}
    =\widehat\pi_{n,i},
    \quad i=1,\ldots,K_n,$$
    $$\sum_{i,j}d_{n,ij}\varphi_{ij}
    \leq\varepsilon_n.
    \nonumber$$

This form makes the worst-case redistribution explicit: transported mass contributes the regret of its destination scenario and consumes the corresponding transport budget. For online evaluation, the same problem admits an exact one-dimensional reformulation.

\begin{lemma}[Exact scalar reformulation of DR regret]
\label{lem:dr_regret_dual}
For every phase $\bm s_n$, define
$g_{n,i}(\nu;\bm s_n,t)=\max_{j=1,\ldots,K_n}\{L_{n,j}(\bm s_n,t)-\nu d_{n,ij}\}$.
Then
\begin{equation}
    G_n^{\mathrm{DR}}(\bm s_n,t;\varepsilon_n)
    =
    \min_{\nu\geq0}
    \{
        \nu\varepsilon_n
        +
        \sum_{i=1}^{K_n}
        \widehat\pi_{n,i}g_{n,i}(\nu;\bm s_n,t)
    \}.
    \label{eq:dr_regret_dual}
\end{equation}
\end{lemma}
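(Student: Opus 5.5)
The plan is to apply finite-dimensional linear programming duality to the transport form \eqref{eq:dr_regret_transport}, dualizing only the single budget constraint $\sum_{i,j}d_{n,ij}\varphi_{ij}\le\varepsilon_n$. For brevity I drop $(\bm s_n,t)$ and write $L_j=L_{n,j}(\bm s_n,t)$, $\widehat\pi_i=\widehat\pi_{n,i}$, $d_{ij}=d_{n,ij}$. The first step is to check that \eqref{eq:dr_regret_transport} correctly represents \eqref{eq:dr_regret_score} over the ambiguity set \eqref{eq:local_ambiguity_set}. This holds because any feasible $\bm\varphi$ induces $p_j=\sum_i\varphi_{ij}$ with $W_n(\bm p,\widehat{\bm\pi})\le\sum d_{ij}\varphi_{ij}\le\varepsilon_n$. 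Conversely, any $\bm p\in\mathcal P_n(\varepsilon_n)$ has an optimal coupling in \eqref{eq:discrete_wasserstein}, since the minimum is attained over a compact polytope. Under either map the objective equals $\sum_j p_jL_j$. I take this equivalence as the setup that the paper states before the lemma.

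Next, I form the Lagrangian with multiplier $\nu\ge0$:
\[
\mathcal L(\bm\varphi,\nu)=\nu\varepsilon_n+\sum_{i,j}\varphi_{ij}\bigl(L_j-\nu d_{ij}\bigr),
\]
and maximize over the set $\Phi=\{\bm\varphi\succeq\bm0:\sum_j\varphi_{ij}=\widehat\pi_i\ \forall i\}$. The set $\Phi$ is a product of scaled simplices, one per row $i$. Maximizing a linear function over a scaled simplex puts all of row $i$'s mass $\widehat\pi_i$ on an index $j$ maximizing $L_j-\nu d_{ij}$. Hence
\[
\max_{\bm\varphi\in\Phi}\mathcal L(\bm\varphi,\nu)=\nu\varepsilon_n+\sum_i\widehat\pi_i g_{n,i}(\nu).
\]
Weak duality then gives $G_n^{\mathrm{DR}}\le\min_{\nu\ge0}\{\nu\varepsilon_n+\sum_i\widehat\pi_ig_{n,i}(\nu)\}$.

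The main point is strong duality, meaning no gap and attainment of the minimum over $\nu$. Because the problem is a linear program with finitely many variables, LP strong duality applies as soon as the primal is feasible and bounded. Feasibility holds with $\varphi_{ii}=\widehat\pi_i$ and $\varphi_{ij}=0$ for $i\ne j$, which has cost $\sum_i\widehat\pi_id_{ii}=0\le\varepsilon_n$; this uses $d_{ii}=0$, which is true because $d_n$ is a metric. Boundedness holds because the objective is at most $\max_jL_j$ on $\Phi$. Alternatively, I can invoke the minimax theorem for a bilinear function on the compact convex set $\Phi$ and the convex set $\nu\ge0$; concavity-convexity suffices there, and Slater's condition is not needed for linear constraints.

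Attainment of the minimum over $\nu$ also needs an argument. The map $\nu\mapsto\nu\varepsilon_n+\sum_i\widehat\pi_ig_{n,i}(\nu)$ is convex and piecewise linear. If $\varepsilon_n>0$ it tends to $+\infty$ as $\nu\to\infty$, so a minimizer exists. If $\varepsilon_n=0$, then $g_{n,i}(\nu)=L_i$ for all sufficiently large $\nu$, because $d_{ij}>0$ for $j\ne i$ when the retained scenarios are distinct. The infimum, $\sum_i\widehat\pi_iL_i$, is therefore attained and coincides with the primal value. I expect this attainment step, and the $\varepsilon_n=0$ edge case with possibly coincident scenarios, to be the only delicate point; I would handle it by citing LP duality directly, which guarantees an optimal dual solution.
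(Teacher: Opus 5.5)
Your proposal is correct and follows essentially the paper's argument: LP duality for the finite transport program with a multiplier $\nu\geq0$ on the budget constraint, with strong duality justified by feasibility and boundedness of the primal. The only difference is that the paper writes the full dual with free multipliers $\theta_i$ on the row-sum constraints and then sets $\theta_i=g_{n,i}(\nu;\bm s_n,t)$, while you keep those constraints in the domain and maximize the partial Lagrangian row by row over scaled simplices, which is the same computation; your feasibility check via $d_{n,ii}=0$ and your discussion of attainment of the minimum over $\nu$ spell out details the paper leaves implicit.
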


The transport problem in~\eqref{eq:dr_regret_transport} is a finite linear program. Dualizing its transport-budget constraint introduces the scalar $\nu$, while the row-wise dual constraints reduce to the maxima collected in $g_{n,i}$. Lemma~\ref{lem:dr_regret_dual} therefore leaves the DR regret score unchanged but converts its evaluation into a one-dimensional convex piecewise-linear minimization for each candidate phase. This exact reformulation is the computational form used by DR Regret MP. The proof is given in Appendix~\ref{app:regret_structural_proofs}.

\subsection{Structural Advantage of Decision-Relative Robustification}
\label{sec:decision_relative_robustification}

This subsection characterizes the structural difference between robustifying decision regret and robustifying the absolute score. Let $\mathfrak P$ denote a generic ambiguity set of probability laws and define the robust-regret criterion
$$
J_{\mathrm{reg}}(a;\mathfrak P)
=\sup_{\mathbb P\in\mathfrak P}\mathbb E_{\mathbb P}[M(\bm\xi)-S(a,\bm\xi)]
$$
and the robust absolute-score criterion
$$
J_{\mathrm{abs}}(a;\mathfrak P)
=\inf_{\mathbb P\in\mathfrak P}\mathbb E_{\mathbb P}[S(a,\bm\xi)].
$$

The first structural property concerns state-dependent score components shared by all actions.

\begin{proposition}[Scenario-wise common-shift invariance]
\label{prop:common_shift_invariance}
Let $\widetilde S(a,\bm\xi)=S(a,\bm\xi)+c(\bm\xi)$, where $c(\bm\xi)$ is independent of $a$. Then the corresponding regret satisfies $\widetilde L(a,\bm\xi)=L(a,\bm\xi)$ for every $a$ and $\bm\xi$. Consequently, fixed-distribution expected regret, worst-case expected regret over any fixed ambiguity set, and support-wise maximum regret are all unchanged by the transformation.
\end{proposition}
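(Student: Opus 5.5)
The argument is a direct pointwise computation followed by the observation that every aggregate criterion is a functional of the regret function alone. First, fix an arbitrary residual $\bm\xi$. Since $c(\bm\xi)$ does not depend on the action, the shifted maximum satisfies
\[
\widetilde M(\bm\xi)=\max_{u\in\mathcal A}\bigl[S(u,\bm\xi)+c(\bm\xi)\bigr]=M(\bm\xi)+c(\bm\xi).
\]
Here we use that adding a constant to every element of a finite set shifts its maximum by that constant; $\mathcal A$ is finite, so the maximum is attained. Subtracting $\widetilde S(a,\bm\xi)=S(a,\bm\xi)+c(\bm\xi)$ then gives $\widetilde L(a,\bm\xi)=M(\bm\xi)-S(a,\bm\xi)=L(a,\bm\xi)$ for every $a$ and $\bm\xi$. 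This identity holds for each $\bm\xi$ separately, so no integrability or measurability assumption on $c$ is needed at this stage.

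Second, I would make the ``consequently'' part precise by noting that each criterion depends on the score only through $L(a,\cdot)$. For a fixed law $\mathbb P$, we have $\mathbb E_{\mathbb P}[\widetilde L(a,\bm\xi)]=\mathbb E_{\mathbb P}[L(a,\bm\xi)]$ because the integrands coincide. For an ambiguity set $\mathfrak P$ that does not depend on the score, the equality of integrands for every $\mathbb P\in\mathfrak P$ gives $\widetilde J_{\mathrm{reg}}(a;\mathfrak P)=J_{\mathrm{reg}}(a;\mathfrak P)$. The support-wise maximum regret is $\sup_{\bm\xi\in\Xi}L(a,\bm\xi)$, which is trivially unchanged. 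The optimizer sets over $a$ are therefore identical as well.

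Third, I would instantiate the result for the MP setting. The DR Regret MP score \eqref{eq:dr_regret_score} and its dual form in Lemma~\ref{lem:dr_regret_dual} involve only $L_{n,k}$ and the fixed set $\mathcal P_n(\varepsilon_n)$, whose ground metric depends on residual locations and not on the scores. Hence $G_n^{\mathrm{Sto}}$ and $G_n^{\mathrm{DR}}$ are invariant to any scenario-dependent shift common to all phases.

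The only delicate point is integrability. Since $L\ge 0$, the expectations $\mathbb E_{\mathbb P}[L]$ are always well defined in $[0,\infty]$, so the equality holds even when $c(\bm\xi)$ is not integrable and $\mathbb E_{\mathbb P}[\widetilde S]$ is undefined. This contrasts with $J_{\mathrm{abs}}$, which I would mention only as a remark and not prove here. I would also state explicitly that the ambiguity set must be \emph{fixed}, that is, independent of $S$, because that is the hypothesis under which the supremum is taken over the same family of laws on both sides.
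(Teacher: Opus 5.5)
Your proposal is correct and follows the paper's proof essentially verbatim: the pointwise identity $\widetilde M(\bm\xi)=M(\bm\xi)+c(\bm\xi)$ gives $\widetilde L=L$. Each aggregate criterion depends only on the regret function and on a family of laws that does not depend on the scores, so each is unchanged. Your extra remarks on integrability (via $L\geq0$) and on requiring the ambiguity set to be score-independent are valid refinements of points the paper leaves implicit.
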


Because the same $c(\bm\xi)$ enters both the state-specific best score and every candidate score, it cancels in $\widetilde L(a,\bm\xi)$. The invariance therefore holds before any expectation or worst-case aggregation is taken. For MP, any scenario-dependent pressure component shared by all feasible phases is removed from the phase regret, which is the structural property used in the comparison with absolute-pressure robustification below. The proof is given in Appendix~\ref{app:regret_structural_proofs}.

\begin{proposition}[Regret versus absolute-score robustification]
\label{prop:robustification_distinction}
Let $\mathfrak P$ be an ambiguity set for which the relevant expectations are finite. If there exists a constant $\bar M$ such that $\mathbb E_{\mathbb P}[M(\bm\xi)]=\bar M$ for every $\mathbb P\in\mathfrak P$, then
\begin{equation}
    J_{\mathrm{reg}}(a;\mathfrak P)=\bar M-J_{\mathrm{abs}}(a;\mathfrak P)
    \qquad\forall a\in\mathcal A,
    \label{eq:robust_equivalence_condition}
\end{equation}
and hence $\arg\min_a J_{\mathrm{reg}}(a;\mathfrak P)=\arg\max_a J_{\mathrm{abs}}(a;\mathfrak P)$. In general, when the expected state-specific best score varies over the ambiguity set, this decision equivalence is not guaranteed: there exist finite score systems and Wasserstein ambiguity sets for which the two criteria select different actions.
\end{proposition}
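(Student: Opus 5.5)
The proof has two parts: an algebraic identity when the expected best score is constant across the ambiguity set, and an explicit counterexample when it is not.

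\textbf{Equivalence under constant expected best score.} For any $\mathbb P\in\mathfrak P$, linearity of expectation (valid by the finiteness assumption) gives
\[
\mathbb E_{\mathbb P}[M(\bm\xi)-S(a,\bm\xi)]=\bar M-\mathbb E_{\mathbb P}[S(a,\bm\xi)].
\]
Taking the supremum over $\mathfrak P$ and using $\sup_{\mathbb P}(\bar M-f(\mathbb P))=\bar M-\inf_{\mathbb P}f(\mathbb P)$ yields \eqref{eq:robust_equivalence_condition}. Since $x\mapsto\bar M-x$ is strictly decreasing, the minimizers of $J_{\mathrm{reg}}$ over the finite set $\mathcal A$ coincide with the maximizers of $J_{\mathrm{abs}}$. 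This mirrors the argument for Proposition~\ref{prop:fixed_distribution_equivalence}, with the sup/inf interchange being the only new ingredient.

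\textbf{Counterexample when the expected best score varies.} Take two actions $\mathcal A=\{a,b\}$ and two scenarios $\bm\xi^1,\bm\xi^2$ with ground distance $d_{12}=1$. Use the nominal law $\widehat{\bm\pi}=(1/2,1/2)$ and the radius $\varepsilon=1/2$, so the ambiguity set is every $p=(p_1,1-p_1)$ with $p_1\in[0,1]$. This full range is reachable because moving mass $\delta$ costs $\delta$, and shifting all nominal mass onto either scenario costs exactly $1/2$.

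Choose the scores
\[
S(a,\cdot)=(10,0),\qquad S(b,\cdot)=(4,4).
\]
Then $M=(10,4)$, the regrets are $L(a,\cdot)=(0,4)$ and $L(b,\cdot)=(6,0)$, and the expected best score $6p_1+4$ indeed varies over the set. The two criteria evaluate as follows.
\begin{itemize}
\item Regret: $J_{\mathrm{reg}}(a)=4$ and $J_{\mathrm{reg}}(b)=6$, so the regret criterion selects $a$.
\item Absolute score: $J_{\mathrm{abs}}(a)=\min_{p_1}10p_1=0$ and $J_{\mathrm{abs}}(b)=4$, so the absolute criterion selects $b$.
\end{itemize}
The selected actions differ, which proves the second claim.

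\textbf{Main obstacle.} The only real work is choosing the counterexample so that the action gaps differ strictly and the Wasserstein set is specified exactly. To check it cleanly, I would compute each criterion as a linear function of $p_1$ over an interval and read off its endpoint extremum. If a less degenerate example is preferred, I would shrink the radius and recompute those endpoints. Optionally, I would note that $S$ can be embedded as phase pressures with two phases and two queue scenarios, but the proposition is stated for generic finite score systems, so this is not required.
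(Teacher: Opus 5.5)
Your proposal is correct and follows the same route as the paper. It uses the same sup/inf identity for the constant-$\bar M$ case and the same kind of counterexample: two actions and two states on a discrete 1-Wasserstein ball around $(1/2,1/2)$ with unit ground distance. The only difference is cosmetic: your radius $\varepsilon=1/2$ makes the ball the whole simplex, while the paper's table separates the criteria for every $\varepsilon\in(1/6,1/2]$, and your table would in fact do so for every $\varepsilon\in(1/10,1/2]$.
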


When $\mathbb E_{\mathbb P}[M(\bm\xi)]$ is constant over $\mathfrak P$, it separates from the supremum and the two robust criteria differ only by the common constant $\bar M$, so their optimizer sets coincide. When this expectation varies across distributions, that separation is no longer available; the robust regret and robust absolute-score criteria can then rank actions differently. Together with Proposition~\ref{prop:common_shift_invariance}, this identifies the structural distinction used to motivate the matched absolute-pressure benchmarks below. Appendix~\ref{app:regret_structural_proofs} gives the proof and the finite Wasserstein example.

For MP, the distributionally robust absolute-pressure counterpart is
\begin{equation}
    \Phi_n^{\mathrm{DR-P}}(\bm s_n,t)
    =
    \inf_{\bm p_n\in\mathcal P_n(\varepsilon_n)}
    \sum_{k=1}^{K_n}p_{n,k}H_{n,k}(\bm s_n,t),
    \label{eq:dr_pressure_score}
\end{equation}
and the support-wise absolute-pressure score is
\begin{equation}
    \Phi_n^{\mathrm{Set-P}}(\bm s_n,t)
    =
    \min_{k\in\mathcal K_n}H_{n,k}(\bm s_n,t).
    \label{eq:set_pressure_score}
\end{equation}
The corresponding DR-Pressure and Set-Pressure controllers maximize $\Phi_n^{\mathrm{DR-P}}$ and $\Phi_n^{\mathrm{Set-P}}$, respectively. In contrast, regret-based robustification removes phase-common scenario pressure components before aggregation and therefore operates on relative phase quality.

\subsection{Relationships among regret formulations}
\label{sec:regret_relationships}

This subsection positions DR Regret between the calibrated stochastic rule and the support-wise worst-case rule.

\begin{remark}[Support-wise maximum regret]
\label{rem:set_regret}
If only the retained support is used, the aggregate score is
\begin{equation}
    G_n^{\mathrm{Set}}(\bm s_n,t)
    =
    \max_{k\in\mathcal K_n}
    L_{n,k}(\bm s_n,t).
    \label{eq:set_regret_score}
\end{equation}
The corresponding controller minimizes the largest phase-selection regret over the retained support and does not use scenario probabilities.
\end{remark}

The stochastic, DR, and support-wise formulations therefore correspond to progressively larger sets of admissible probability distributions:
\(\{\widehat{\bm\pi}_n\}\subseteq\mathcal P_n(\varepsilon_n)\subseteq\Delta_{K_n}.\)
Over the full simplex,
$$\sup_{\bm p_n\in\Delta_{K_n}}\sum_{k=1}^{K_n}p_{n,k}L_{n,k}(\bm s_n,t)=\max_{k\in\mathcal K_n}L_{n,k}(\bm s_n,t)=G_n^{\mathrm{Set}}(\bm s_n,t).$$
The equality holds because an unrestricted probability vector can place all mass on a maximum-regret scenario. Thus, Set Regret is the extreme case in which only the retained support matters and no probability calibration is trusted. Because the associated probability sets are nested, the three regret scores are ordered phase by phase.

\begin{proposition}[Ordering of regret scores]
\label{prop:regret_ordering}
For every candidate phase $\bm s_n$ and every $\varepsilon_n\geq0$,
\begin{equation}
    G_n^{\mathrm{Sto}}(\bm s_n,t)
    \leq
    G_n^{\mathrm{DR}}(\bm s_n,t;\varepsilon_n)
    \leq
    G_n^{\mathrm{Set}}(\bm s_n,t).
    \label{eq:regret_ordering}
\end{equation}
\end{proposition}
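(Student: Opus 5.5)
The plan is to read all three scores as suprema of the same linear functional $\bm p_n\mapsto\sum_{k}p_{n,k}L_{n,k}(\bm s_n,t)$ over nested sets of probability vectors. Monotonicity of the supremum with respect to set inclusion then gives both inequalities at once. Fix a phase $\bm s_n$ and write $\ell_k=L_{n,k}(\bm s_n,t)$, which is nonnegative by \eqref{eq:scenario_regret}.

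\textbf{Left inequality.} I will show $\widehat{\bm\pi}_n\in\mathcal P_n(\varepsilon_n)$ for every $\varepsilon_n\ge0$. Take the diagonal coupling $\varphi_{ii}=\widehat\pi_{n,i}$ and $\varphi_{ij}=0$ for $i\ne j$. This coupling satisfies both marginal constraints in \eqref{eq:discrete_wasserstein} with $\bm p_n=\widehat{\bm\pi}_n$. Its cost is $\sum_i d_{n,ii}\widehat\pi_{n,i}=0$, since $d_n$ is a metric and so $d_{n,ii}=0$. Hence $W_n(\widehat{\bm\pi}_n,\widehat{\bm\pi}_n)=0\le\varepsilon_n$, and $\widehat{\bm\pi}_n$ lies in the ambiguity set \eqref{eq:local_ambiguity_set}. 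Because the supremum in \eqref{eq:dr_regret_score} runs over a set containing $\widehat{\bm\pi}_n$, it is at least $\sum_k\widehat\pi_{n,k}\ell_k=G_n^{\mathrm{Sto}}(\bm s_n,t)$.

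\textbf{Right inequality.} The ambiguity set satisfies $\mathcal P_n(\varepsilon_n)\subseteq\Delta_{K_n}$ by definition. For any $\bm p_n\in\Delta_{K_n}$ we have $\sum_k p_{n,k}\ell_k\le\max_k\ell_k\sum_kp_{n,k}=\max_k\ell_k$. Taking the supremum over $\mathcal P_n(\varepsilon_n)$ therefore gives $G_n^{\mathrm{DR}}\le G_n^{\mathrm{Set}}$ as defined in \eqref{eq:set_regret_score}. This is the displayed simplex identity preceding the proposition, used only in its ``$\le$'' direction.

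The only step needing care is the feasibility of $\widehat{\bm\pi}_n$, that is, $d_{n,ii}=0$. I will take this from the ground metric being a metric on residual space, as in Appendix~\ref{app:residual_calibration}. The set $\mathcal P_n(\varepsilon_n)$ is also nonempty, so the supremum is finite and attained, because the feasible region is a compact polytope. This matches the LP form \eqref{eq:dr_regret_transport}, where the diagonal transport plan is likewise feasible with zero budget use.
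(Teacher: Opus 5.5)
Your proof is correct and follows the same route as the paper. Membership of $\widehat{\bm\pi}_n$ in $\mathcal P_n(\varepsilon_n)$ gives the left inequality, and $\mathcal P_n(\varepsilon_n)\subseteq\Delta_{K_n}$ (so every expected regret is a convex combination bounded by the maximum scenario regret) gives the right one; your diagonal-coupling check that $W_n(\widehat{\bm\pi}_n,\widehat{\bm\pi}_n)=0$ just makes explicit a step the paper takes for granted.
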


The ordering follows from the nested probability sets $\{\widehat{\bm\pi}_n\}\subseteq\mathcal P_n(\varepsilon_n)\subseteq\Delta_{K_n}$: enlarging the admissible set can only increase the worst-case expected regret, while no convex combination can exceed the largest scenario regret. Proposition~\ref{prop:regret_ordering} therefore places DR Regret between the calibrated stochastic score and the support-wise maximum. The next proposition identifies when these two bounds are attained exactly. The proof is given in Appendix~\ref{app:regret_structural_proofs}.

\begin{proposition}[Limiting cases of DR regret]
\label{prop:dr_regret_limits}
For every candidate phase $\bm s_n$, DR Regret MP reduces to Stochastic Regret MP at zero ambiguity radius,
\(G_n^{\mathrm{DR}}(\bm s_n,t;0)=G_n^{\mathrm{Sto}}(\bm s_n,t).\)
Let $\mathcal K_n^{\max}(\bm s_n,t)=\arg\max_{j\in\mathcal K_n}L_{n,j}(\bm s_n,t)$ and define
$\varepsilon_{n,\bm s}^{\mathrm{Set}}(t)=\min_{j\in\mathcal K_n^{\max}(\bm s_n,t)}\sum_{i=1}^{K_n}\widehat\pi_{n,i}d_{n,ij}$.
If $\varepsilon_n\geq\varepsilon_{n,\bm s}^{\mathrm{Set}}(t)$, then
\(G_n^{\mathrm{DR}}(\bm s_n,t;\varepsilon_n)=G_n^{\mathrm{Set}}(\bm s_n,t).\)
\end{proposition}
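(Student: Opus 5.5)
I would work directly with the transport form \eqref{eq:dr_regret_transport} rather than the dual in Lemma~\ref{lem:dr_regret_dual}, because both limiting cases come from simple primal feasibility arguments. Throughout, fix the phase $\bm s_n$ and the time $t$, and write $L_j=L_{n,j}(\bm s_n,t)$ for brevity.

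\textbf{Zero radius.} Take $\varepsilon_n=0$. The budget constraint $\sum_{i,j}d_{n,ij}\varphi_{ij}\le 0$, together with $\varphi\succeq\bm0$, forces $\varphi_{ij}=0$ whenever $d_{n,ij}>0$.
\begin{itemize}
\item Since $d_n$ is a metric on the retained scenarios, $d_{n,ij}>0$ for every pair $i\neq j$, assuming the retained scenarios are distinct. I would state this assumption explicitly; if duplicates occur, they can be merged beforehand without changing either score.
\item Hence only the diagonal entries $\varphi_{ii}$ can be nonzero. The row constraints $\sum_j\varphi_{ij}=\widehat\pi_{n,i}$ then give $\varphi_{ii}=\widehat\pi_{n,i}$.
\item The objective equals $\sum_i\widehat\pi_{n,i}L_i=G_n^{\mathrm{Sto}}(\bm s_n,t)$.
\end{itemize}

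\textbf{Large radius.} For the upper bound, Proposition~\ref{prop:regret_ordering} already gives $G_n^{\mathrm{DR}}\le G_n^{\mathrm{Set}}$. It therefore suffices to exhibit one feasible transport plan that attains $\max_k L_k$.
\begin{itemize}
\item Let $j^\star\in\mathcal K_n^{\max}(\bm s_n,t)$ attain the minimum that defines $\varepsilon^{\mathrm{Set}}_{n,\bm s}(t)$.
\item Define the plan that sends all nominal mass to $j^\star$: set $\varphi_{ij^\star}=\widehat\pi_{n,i}$ and $\varphi_{ij}=0$ for all $j\ne j^\star$.
\item This plan is nonnegative and satisfies the row-marginal constraints. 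Its transport cost is $\sum_i\widehat\pi_{n,i}d_{n,ij^\star}=\varepsilon^{\mathrm{Set}}_{n,\bm s}(t)\le\varepsilon_n$, so it is feasible.
\item Its objective is $\sum_i\widehat\pi_{n,i}L_{j^\star}=L_{j^\star}=G_n^{\mathrm{Set}}(\bm s_n,t)$, because $\sum_i\widehat\pi_{n,i}=1$.
\end{itemize}
This gives $G_n^{\mathrm{DR}}\ge G_n^{\mathrm{Set}}$, and combining with the upper bound yields equality.

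\textbf{Main obstacle.} The only real care point is checking that the transport LP \eqref{eq:dr_regret_transport} matches the original definition \eqref{eq:dr_regret_score} over the ambiguity set \eqref{eq:local_ambiguity_set}. I would take this equivalence as established in the text preceding Lemma~\ref{lem:dr_regret_dual}: the destination marginal $p_{n,j}=\sum_i\varphi_{ij}$ is a feasible $\bm p_n$ with $W_n\le\varepsilon_n$. With that in hand, the argument above needs only the metric property used in the zero-radius step and normalization of $\widehat{\bm\pi}_n$. As a consistency check, the dual \eqref{eq:dr_regret_dual} at $\varepsilon_n=0$ with $\nu\to\infty$ reproduces the same value, $g_{n,i}\to L_i$.
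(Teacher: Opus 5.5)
Your proof is correct and follows essentially the same route as the paper: at $\varepsilon_n=0$ the budget admits only the nominal distribution, and for $\varepsilon_n\geq\varepsilon^{\mathrm{Set}}_{n,\bm s}(t)$ the plan that concentrates all nominal mass on a cost-minimizing maximum-regret scenario is feasible, with Proposition~\ref{prop:regret_ordering} supplying the matching upper bound. The zero-radius step needs $d_{n,ij}>0$ for $i\neq j$, which you state as an explicit distinctness assumption; the paper relies on it implicitly, and it is guaranteed because duplicate local projections are merged (Appendix~\ref{app:residual_calibration}).
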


At $\varepsilon_n=0$, the ambiguity set contains only the calibrated distribution, so the DR score equals the stochastic score. At the other endpoint, $\varepsilon_{n,\bm s}^{\mathrm{Set}}(t)$ is the minimum cost of transporting all calibrated probability mass to a maximum-regret scenario; once this budget is available, that point mass is feasible and the DR score reaches the support-wise maximum. Proposition~\ref{prop:dr_regret_limits} therefore gives the two reference limits for interpreting the Wasserstein radius. These controller relationships are carried into the stability analysis in Section~\ref{sec:theory}. The proof is given in Appendix~\ref{app:regret_structural_proofs}.

\section{Network Stability}
\label{sec:theory}

This section establishes the network-stability properties of Regret MP under uncertain traffic states. The analysis proceeds in three steps. We first express stability through a common Lyapunov-drift bound in which phase pressure is the controllable term. We then quantify how the finite residual representation affects conditional pressure and show that the resulting pressure loss grows at most linearly with queue magnitude. Finally, we compare this queue-linear uncertainty loss with the available net-service reserve to establish strong stability and to compare Regret MP with Point-estimate MP. The extension to fixed positive queue weighting is given in Appendix~\ref{app:weighted_stability}.

\subsection{Stability framework}
\label{sec:stability_preliminaries}

This subsection defines the stability criterion and physical service region, and then derives the drift bound used throughout the analysis. Let $\mathcal G_t$ denote the information available immediately before the phase decision, including the current estimated traffic state, the calibrated residual model, and past controls and observations.

We use strong queue stability as the network-level criterion.
\begin{definition}[Strong queue stability]
\label{def:strong_stability}
The network is strongly stable if
$$
\limsup_{T\rightarrow\infty}
    \frac1T
    \sum_{t=0}^{T-1}
    \mathbb E\|\bm Q(t)\|_1
    <\infty.
$$
\end{definition}

The drift analysis uses the following standard boundedness conditions.
\begin{assumption}[Bounded network primitives]
\label{ass:bounded_primitives}
Let $\mathcal H_t$ be the full pre-arrival sigma-field containing $\mathcal G_t$, the latent queue vector $\bm Q(t)$, the current routing matrix $\bm R(t)$, the service-capacity matrix $\bm C(t)$, and the selected feasible phase $\bm s(t)$. Exogenous arrivals satisfy $\mathbb E[\bm A(t)\mid\mathcal H_t]=\bm\lambda$ and $\mathbb E[\|\bm A(t)\|_2^2\mid\mathcal H_t]\leq\bar A_2<\infty$ almost surely, where $\bm\lambda$ is the mean exogenous-arrival vector and $\bar A_2$ is a finite second-moment bound. Routing matrices are nonnegative and row-substochastic, saturation service satisfies $0<c_m(t)\leq c_m^{\max}<\infty$ almost surely, and $\mathbb E\|\bm Q(0)\|_2^2<\infty$.
\end{assumption}

For any feasible phase $\bm s$, define the potential net-service vector
\(\bm Z(\bm s,t)=[\bm I-\bm R^\top(t)]\bm C(t)\bm s,\)
so that $H(\bm s,t)=\bm Q^\top(t)\bm Z(\bm s,t)$. To characterize the physical service capability of the network, let $\mathfrak U_{\mathrm{phy}}$ be the class of stationary randomized feasible policies whose phase distribution may depend on the current pair $(\bm R(t),\bm C(t))$. For $\mu\in\mathfrak U_{\mathrm{phy}}$, let $\bm s^\mu(t)$ denote the selected phase and $\bm Z^\mu(t)=\bm Z(\bm s^\mu(t),t)$. Define its componentwise long-run net service by $\bar{\bm Z}^{\mu}=\liminf_{T\rightarrow\infty}T^{-1}\sum_{t=0}^{T-1}\mathbb E[\bm Z^\mu(t)]$, where the $\liminf$ is taken componentwise. The physical admissible demand region is
\begin{equation}
    \mathcal D_{\mathrm{phy}}
    =
    \operatorname{cl}
    \left\{
        \bm\lambda\succeq\bm0:
        \exists\,\mu\in\mathfrak U_{\mathrm{phy}}
        \text{ such that }
        \bar{\bm Z}^{\mu}\succeq\bm\lambda
    \right\},
    \label{eq:physical_demand_region}
\end{equation}
and, with $\bm1$ denoting the all-ones vector of compatible dimension, its strict interior is
\begin{equation}
    \mathcal D_{\mathrm{phy}}^{\circ}
    =
    \left\{
    \bm\lambda\succeq\bm0:
    \begin{array}{l}
        \exists\,\mu\in\mathfrak U_{\mathrm{phy}},\ \epsilon_0>0
        \text{ such that}\\[-1pt]
        \bar{\bm Z}^{\mu}\succeq\bm\lambda+\epsilon_0\bm1
    \end{array}
    \right\}.
    \label{eq:physical_demand_interior}
\end{equation}
Here, $\epsilon_0$ is a strict componentwise net-service reserve above demand. For fixed $\bm R$ and $\bm C$, this reduces to the standard MP capacity description $\bm\lambda\preceq[\bm I-\bm R^\top]\bm C\bar{\bm s}$ for some $\bar{\bm s}\in\operatorname{co}(\mathcal S)$ \citep{wongpiromsarn2012distributed,varaiya2013max}.

The following lemma connects queue stability to the phase pressure selected by the controller.
\begin{lemma}[Quadratic drift under potential pressure]
\label{lem:quadratic_drift}
Let $V(\bm Q)=\frac12\|\bm Q\|_2^2$ and $\Delta V(t)=V(\bm Q(t+1))-V(\bm Q(t))$. Under Assumption~\ref{ass:bounded_primitives}, there exists a finite constant $B_0$, independent of the queue state and selected feasible phase, such that
\begin{equation}
    \mathbb E\!\left[
        \Delta V(t)
        \mid\mathcal H_t
    \right]
    \leq
    B_0
    +
    \bm Q^\top(t)\bm\lambda
    -
    H(\bm s(t),t).
    \label{eq:initial_drift_bound}
\end{equation}
\end{lemma}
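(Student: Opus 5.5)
Writing $\bm Q=\bm Q(t)$, $\bm A=\bm A(t)$, $\bm D=\bm D(t)$, $\bm R=\bm R(t)$, $\bm C=\bm C(t)$ and $\bm s=\bm s(t)$, I would expand the quadratic Lyapunov function along the dynamics~\eqref{eq:queue_dynamics}. Set $\bm\Delta=\bm A-[\bm I-\bm R^\top]\bm D$. Then
\begin{equation*}
\Delta V(t)=\bm Q^\top\bm\Delta+\tfrac12\|\bm\Delta\|_2^2 .
\end{equation*}
I would bound the linear term and the quadratic term separately, conditioning on $\mathcal H_t$. Note that $\bm Q$, $\bm R$, $\bm C$ and $\bm s$ are all $\mathcal H_t$-measurable, so $\bm D$ is $\mathcal H_t$-measurable as well.

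\textbf{Linear term.} Assumption~\ref{ass:bounded_primitives} gives $\mathbb E[\bm Q^\top\bm A\mid\mathcal H_t]=\bm Q^\top\bm\lambda$. The remaining piece is $-\bm Q^\top[\bm I-\bm R^\top]\bm D$, and the goal is to show it is at most $-H(\bm s,t)+\text{const}$, where $H(\bm s,t)=\bm Q^\top[\bm I-\bm R^\top]\bm C\bm s$. Componentwise, the difference is
\begin{equation*}
\bm Q^\top[\bm I-\bm R^\top](\bm C\bm s-\bm D)=\sum_a (c_as_a-D_a)\Bigl(Q_a-\sum_b r_{ab}Q_b\Bigr).
\end{equation*}
Each weight $c_as_a-D_a$ is nonnegative. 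It is nonzero only when $Q_a<c_as_a$, in which case $D_a=Q_a$ and the weight is $c_as_a-Q_a\le c_a^{\max}$. On that event $Q_a-\sum_b r_{ab}Q_b\le Q_a< c_a^{\max}$, so each summand is at most $(c_a^{\max})^2$. Hence $-\bm Q^\top[\bm I-\bm R^\top]\bm D\le -H(\bm s,t)+\sum_a (c_a^{\max})^2$. The negative part of the imbalance, coming from large downstream queues, needs no bound, because it only decreases the sum. This is the standard MP argument of \citet{varaiya2013max}.

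\textbf{Quadratic term.} I would use $\|\bm\Delta\|_2^2\le 2\|\bm A\|_2^2+2\|[\bm I-\bm R^\top]\bm D\|_2^2$. Since $0\le D_m\le c_m^{\max}$ and $\bm R$ is nonnegative and row-substochastic, each entry of $\bm R^\top\bm D$ is bounded by $\sum_a c_a^{\max}$. This gives a deterministic bound $\|[\bm I-\bm R^\top]\bm D\|_2^2\le |\mathcal M|\,(2\sum_a c_a^{\max})^2$. The arrival part is controlled by $\mathbb E[\|\bm A\|_2^2\mid\mathcal H_t]\le\bar A_2$.

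Collecting the pieces, $B_0=\bar A_2+|\mathcal M|(2\sum_a c_a^{\max})^2+\sum_a(c_a^{\max})^2$ works, and it depends on neither the queue state nor the phase. The main obstacle is the linear term: one has to handle the truncation $D_a=\min\{Q_a,c_as_a\}$ with a sign analysis rather than by absolute values, since the downstream-weighted imbalance is unbounded below. A careless bound using absolute values would introduce a queue-dependent term and break the constant $B_0$.
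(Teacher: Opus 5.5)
Your proof is correct and follows essentially the same route as the paper. Both arguments expand $V$ along the dynamics and use $\mathbb E[\bm Q^\top\bm A\mid\mathcal H_t]=\bm Q^\top\bm\lambda$. Both bound the unused-service correction $\bm Q^\top[\bm I-\bm R^\top](\bm C\bm s-\bm D)$ by $\sum_a(c_a^{\max})^2$ through the sign argument that discards the nonnegative downstream term, and both bound the quadratic term by $\bar A_2$ plus a deterministic constant. The only cosmetic difference is that the paper controls $\|[\bm I-\bm R^\top]\bm D\|_2^2$ through the operator-norm constant $K_R=\sup_t\|\bm I-\bm R^\top(t)\|_2$ instead of your entrywise bound, which changes only the numerical value of $B_0$.
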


Assumption~\ref{ass:bounded_primitives} bounds the arrival second moment, service magnitude, and unused-service contribution, which ensures that the constant $B_0$ in Lemma~\ref{lem:quadratic_drift} is finite. By Lemma~\ref{lem:quadratic_drift}, the one-step Lyapunov drift is then bounded by the queue-independent term $B_0$ plus the difference between the arrival term $\bm Q^\top(t)\bm\lambda$ and the pressure $H(\bm s(t),t)$ achieved by the selected phase. Once these network primitives are controlled, the effect of traffic-state uncertainty on stability therefore enters through its effect on the achieved phase pressure. The subsequent analysis quantifies this pressure loss and relates it to queue magnitude. The derivation of Lemma~\ref{lem:quadratic_drift} is given in Appendix~\ref{app:network_stability_proofs}.

When the traffic state is known exactly, the regret representation recovers the standard true-state MP benchmark.
\begin{corollary}[Perfect-information benchmark]
\label{cor:perfect_information_benchmark}
\label{prop:perfect_information}
\label{thm:perfect_information_stability}
Suppose the local uncertainty at every intersection collapses to the current latent traffic state. Then Stochastic Regret MP, DR Regret MP, and Set Regret select a true-state MP phase. If, in addition, Assumption~\ref{ass:bounded_primitives} holds, $(\bm R(t),\bm C(t))$ is exogenous and i.i.d. over time with its current realization observed before phase selection, and $\bm\lambda\in\mathcal D_{\mathrm{phy}}^{\circ}$ with reserve $\epsilon_0>0$, then the resulting controller strongly stabilizes the network and
\begin{equation}
    \limsup_{T\rightarrow\infty}
    \frac1T
    \sum_{t=0}^{T-1}
    \mathbb E\|\bm Q(t)\|_1
    \leq
    \frac{B_0}{\epsilon_0}.
    \label{eq:perfect_information_queue_bound}
\end{equation}
\end{corollary}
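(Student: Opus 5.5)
The proof has two parts: a structural part, in which the regret controllers reduce to true-state MP, and a drift part, which compares that controller with a randomized reference policy.

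\textbf{Reduction to true-state MP.} Suppose the local uncertainty collapses to the current latent state. Then every retained scenario reconstructs $\bm x_n^k(t)=\bm x_n(t)$, so $H_{n,k}(\bm s_n,t)=H_n(\bm s_n,t)$ for every $k$. Consequently $L_{n,k}(\bm s_n,t)=\max_{\bm u_n}H_n(\bm u_n,t)-H_n(\bm s_n,t)=:L_n(\bm s_n,t)$ does not depend on $k$. Any distribution on the retained scenarios then gives expected regret equal to $L_n(\bm s_n,t)$. Hence $G_n^{\mathrm{Sto}}=G_n^{\mathrm{DR}}=G_n^{\mathrm{Set}}=L_n$, which also follows from the sandwich in Proposition~\ref{prop:regret_ordering}. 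Minimizing $L_n$ is the same as maximizing $H_n$, by Proposition~\ref{prop:fixed_distribution_equivalence} with a point mass. So each controller selects an element of the argmax in \eqref{eq:true_state_mp}. By separability, $H=\sum_n H_n$, so the network phase maximizes $H(\cdot,t)$ over $\mathcal S$.

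\textbf{Drift comparison.} Let $\mu\in\mathfrak U_{\mathrm{phy}}$ attain the reserve, so that $\bar{\bm Z}^\mu\succeq\bm\lambda+\epsilon_0\bm1$. Because $(\bm R,\bm C)$ is i.i.d. and $\mu$ is stationary and depends only on the current $(\bm R(t),\bm C(t))$, the liminf is a stationary mean. Moreover $(\bm R(t),\bm C(t))$ is independent of the past, including $\bm Q(t)$. Therefore $\mathbb E[\bm Z^\mu(t)\mid\bm Q(t)]=\bar{\bm Z}^\mu$. Optimality of the selected phase gives $H(\bm s(t),t)\ge \bm Q^\top(t)\bm Z^\mu(t)$ pointwise. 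Taking expectations conditional on $\bm Q(t)$ then gives $\mathbb E[H(\bm s(t),t)\mid\bm Q(t)]\ge\bm Q^\top(t)(\bm\lambda+\epsilon_0\bm1)$. Inserting this into Lemma~\ref{lem:quadratic_drift} via the tower property yields
\[
\mathbb E[\Delta V(t)]\le B_0-\epsilon_0\,\mathbb E\|\bm Q(t)\|_1 ,
\]
using $\bm Q\succeq\bm0$.

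\textbf{Telescoping.} Sum the drift bound over $t=0,\dots,T-1$. Use $V\ge0$ and $\mathbb E V(\bm Q(0))<\infty$ from Assumption~\ref{ass:bounded_primitives}, and divide by $\epsilon_0T$. This gives $\frac1T\sum_t\mathbb E\|\bm Q(t)\|_1\le B_0/\epsilon_0+\mathbb EV(\bm Q(0))/(\epsilon_0T)$. Taking the limsup yields \eqref{eq:perfect_information_queue_bound}. Finiteness of every $\mathbb E V(\bm Q(t))$ along the way follows from bounded service and the arrival second moments.

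The main obstacle is the conditioning step. It needs the reference policy's net service to be independent of the queue, which is exactly why the i.i.d. exogeneity of $(\bm R,\bm C)$ and observation of the current realization are assumed. Observation of the current realization makes the comparison with $\mu$ valid realization by realization. If $\mu$'s phase randomization is not independent of the past, I would construct it fresh each period from the current $(\bm R,\bm C)$ alone; this keeps it in $\mathfrak U_{\mathrm{phy}}$.
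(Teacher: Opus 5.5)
Your proposal is correct and follows the paper's proof essentially step by step. The regret scores collapse to the single-state regret, so each controller picks a true-state MP phase; pointwise domination of the reference policy $\mu$, combined with the exogenous i.i.d.\ $(\bm R(t),\bm C(t))$, gives $\mathbb E[\Delta V(t)]\le B_0-\epsilon_0\mathbb E\|\bm Q(t)\|_1$; and telescoping yields the bound. The only cosmetic difference is that you condition on $\bm Q(t)$ rather than on the paper's pre-realization history $\mathcal F_t^{-}$. This works equally well, since all that is needed is that the current pair $(\bm R(t),\bm C(t))$, and hence $\bm Z^\mu(t)$, is independent of the current queue.
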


With a single latent state, phase regret differs from negative phase pressure only by a phase-independent term, so the regret-based phase choice coincides with true-state MP. Assumption~\ref{ass:bounded_primitives} and Lemma~\ref{lem:quadratic_drift} provide the finite drift constant $B_0$. Membership of $\bm\lambda$ in $\mathcal D_{\mathrm{phy}}^{\circ}$ supplies a comparison policy with reserve $\epsilon_0$, and the exogenous i.i.d. network-state condition converts this long-run reserve into a one-step expected pressure margin. Since true-state MP maximizes realized pressure, its pressure is no smaller than that of the comparison policy; inserting this margin into Lemma~\ref{lem:quadratic_drift} gives the negative queue term $-\epsilon_0\|\bm Q(t)\|_1$ and the bound in~\eqref{eq:perfect_information_queue_bound}. The proof of Corollary~\ref{cor:perfect_information_benchmark} is given in Appendix~\ref{app:network_stability_proofs}.

\subsection{Conditional pressure performance under uncertainty}
\label{sec:network_stability}
\label{sec:true_regret_pressure}

This subsection quantifies how the finite residual representation affects the conditional pressure achieved by Regret MP. The argument first links conditional regret to conditional pressure and then represents scenario-reduction and probability errors on the same residual-space scale.

The retained scenarios approximate a latent residual $\bm\xi_n(t)\in\Xi_n$ that may take values throughout the continuous residual support. For an arbitrary residual $\bm\xi\in\Xi_n$, let $\bm x_n(\bm\xi,t)=\mathcal T_n(\widehat{\bm x}_n(t),\bm\xi)$ denote the reconstructed local state and let $H_n(\bm s_n,\bm\xi,t)$ be the corresponding phase pressure. Define the continuous-state phase regret by
\begin{equation}
    L_n(\bm s_n,\bm\xi,t)
    =
    \max_{\bm u_n\in\mathcal S_n}
    H_n(\bm u_n,\bm\xi,t)
    -
    H_n(\bm s_n,\bm\xi,t).
    \label{eq:continuous_regret}
\end{equation}
Given the decision information $\mathcal G_t$, define the conditional expected pressure
$\bar H_n(\bm s_n,t)=\mathbb E[H_n(\bm s_n,\bm\xi_n(t),t)\mid\mathcal G_t]$
and the corresponding conditional expected regret
$\bar L_n(\bm s_n,t)=\mathbb E[L_n(\bm s_n,\bm\xi_n(t),t)\mid\mathcal G_t]$.

\begin{lemma}[Conditional regret--pressure identity]
\label{lem:conditional_regret_pressure}
For every candidate phase,
$$
\bar L_n(\bm s_n,t)
=
\mathbb E\!\left[
    \max_{\bm u_n\in\mathcal S_n}
    H_n(\bm u_n,\bm\xi_n(t),t)
    \mid\mathcal G_t
\right]
-
\bar H_n(\bm s_n,t).
$$
Consequently, minimizing conditional expected regret is equivalent to maximizing conditional expected pressure. If $\bm s_n^\star(t)\in\arg\max_{\bm s_n\in\mathcal S_n}\bar H_n(\bm s_n,t)$, then
$$
\bar H_n(\bm s_n^\star,t)-\bar H_n(\bm s_n,t)
=
\bar L_n(\bm s_n,t)-\bar L_n(\bm s_n^\star,t).
$$
\end{lemma}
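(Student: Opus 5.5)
The plan is to apply the pointwise definition of continuous-state regret in \eqref{eq:continuous_regret} and then take conditional expectations given $\mathcal G_t$. This is the conditional analogue of Proposition~\ref{prop:fixed_distribution_equivalence}.

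\emph{Step 1: integrability.} The pressures $H_n(\bm u_n,\bm\xi,t)$ are finite for each $\bm\xi\in\Xi_n$. The reconstruction map $\mathcal T_n$ divides by $1+\xi^Q_m\ge\underline\chi_Q>0$ and $1+\xi^C_m\ge\underline\chi_C>0$. The projected routing matrix is row-substochastic. Hence every $|H_n(\bm u_n,\bm\xi,t)|$ is bounded by a constant multiple of $\|\widehat{\bm Q}_{\mathcal L_n}(t)\|_1\max_m\widehat c_m(t)$, and this bound is $\mathcal G_t$-measurable. All phase pressures and their maximum over the finite set $\mathcal S_n$ are therefore conditionally integrable. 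If the paper's standing assumptions do not already give this, I would state it as the implicit integrability hypothesis, exactly as in Proposition~\ref{prop:fixed_distribution_equivalence}.

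\emph{Step 2: the identity.} For every realization,
$L_n(\bm s_n,\bm\xi_n(t),t)=\max_{\bm u_n}H_n(\bm u_n,\bm\xi_n(t),t)-H_n(\bm s_n,\bm\xi_n(t),t)$.
Linearity of conditional expectation then gives the displayed identity almost surely.

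\emph{Step 3: the consequences.} The first term, $\mathbb E[\max_{\bm u_n}H_n\mid\mathcal G_t]$, does not depend on the candidate phase. Minimizing $\bar L_n$ over the finite set $\mathcal S_n$ is therefore equivalent to maximizing $\bar H_n$, and the two optimizer sets coincide. Taking the difference of the identity at $\bm s_n^\star$ and at $\bm s_n$ cancels the common term and yields $\bar H_n(\bm s_n^\star,t)-\bar H_n(\bm s_n,t)=\bar L_n(\bm s_n,t)-\bar L_n(\bm s_n^\star,t)$.

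The only potential obstacle is the measurability and integrability issue in Step 1. Conditional expectations are defined only up to null sets, so all equalities should be read as holding almost surely. The algebra itself is immediate.
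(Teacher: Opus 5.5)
Your proposal is correct and follows essentially the same route as the paper. Both take the conditional expectation of the pointwise regret definition~\eqref{eq:continuous_regret}, note that $\mathbb E[\max_{\bm u_n}H_n(\bm u_n,\bm\xi_n(t),t)\mid\mathcal G_t]$ is independent of the candidate phase, and subtract the identities at $\bm s_n^\star$ and $\bm s_n$; your Step~1, giving a $\mathcal G_t$-measurable bound on the pressures to justify linearity of conditional expectation, is a refinement the paper leaves implicit rather than a different argument.
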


Lemma~\ref{lem:conditional_regret_pressure} follows by taking the conditional expectation of the continuous-state regret definition in~\eqref{eq:continuous_regret}; the conditional expectation of the state-wise maximum is common to all candidate phases and therefore cancels in phase comparisons. As a result, any bound on the difference between an implemented aggregate regret score and the true conditional regret can be translated directly into a bound on conditional pressure loss. This identity is used with the finite-scenario approximation bounds to obtain Proposition~\ref{prop:local_approx_pressure}. The proof of the lemma is given in Appendix~\ref{app:network_stability_proofs}.

To quantify the finite representation, let $\iota_n:\Xi_n\to\mathcal K_n$ map each residual to the index of its nearest retained scenario. Let
$\pi_{n,k}(t)=\Pr\{\iota_n(\bm\xi_n(t))=k\mid\mathcal G_t\}$
denote the current conditional probability of representative scenario $k$, and let $\bm\pi_n(t)=(\pi_{n,1}(t),\ldots,\pi_{n,K_n}(t))$. The sensitivity of phase regret to residual perturbations is measured by
\begin{equation}
    \Lambda_n(t)
    =
    \sup_{\bm s_n\in\mathcal S_n}
    \sup_{\substack{\bm\xi,\bm\zeta\in\Xi_n\\ \bm\xi\neq\bm\zeta}}
    \frac{
        |L_n(\bm s_n,\bm\xi,t)-L_n(\bm s_n,\bm\zeta,t)|
    }{
        d_n(\bm\xi,\bm\zeta)
    },
    \label{eq:regret_lipschitz_modulus}
\end{equation}
where $d_n$ is the local residual-space metric defined in Section~\ref{sec:dr_regret}. Let $d_n^{\max}=\max_{i,j\in\mathcal K_n}d_{n,ij}$ denote the diameter of the retained scenario support.

\begin{assumption}[Scenario approximation and probability coverage]
\label{ass:coverage}
For each intersection $n$, the residual support $\Xi_n$ is compact and the reconstruction map is well defined on $\Xi_n$. There exist finite constants $\tau_n\geq0$ and $\varepsilon_n^{\mathrm{cov}}\geq0$ such that
\begin{equation}
    \mathbb E\!\left[
        d_n\!\left(
            \bm\xi_n(t),
            \bar{\bm\xi}_n^{\iota_n(\bm\xi_n(t))}
        \right)
        \mid\mathcal G_t
    \right]
    \leq\tau_n,
    \label{eq:quantization_distortion_bound}
\end{equation}
and
\begin{equation}
    W_n\!\left(
        \bm\pi_n(t),
        \widehat{\bm\pi}_n
    \right)
    \leq\varepsilon_n^{\mathrm{cov}}
    \qquad\text{a.s.}
    \label{eq:coverage_condition}
\end{equation}
For DR Regret MP, the ambiguity radius satisfies $\varepsilon_n\geq\varepsilon_n^{\mathrm{cov}}$. For Set Regret, only~\eqref{eq:quantization_distortion_bound} and a finite support diameter are required.
\end{assumption}

The constant $\tau_n$ bounds the average distortion introduced by scenario reduction, while $\varepsilon_n^{\mathrm{cov}}$ bounds the probability mismatch between the current representative distribution and the calibrated distribution. The detailed regret-approximation bounds implied by this assumption are given in Appendix~\ref{app:network_stability_proofs}. Combining them with Lemma~\ref{lem:conditional_regret_pressure} gives the main local pressure result.

\begin{proposition}[Approximate conditional maximum pressure]
\label{prop:local_approx_pressure}
Let $\bm s_n^\star(t)\in\arg\max_{\bm s_n\in\mathcal S_n}\bar H_n(\bm s_n,t)$. Under Assumption~\ref{ass:coverage}, the phase $\bm s_n^\psi(t)$ selected by regret aggregation $\psi$ satisfies
\begin{equation}
    \bar H_n(\bm s_n^\psi(t),t)
    \geq
    \bar H_n(\bm s_n^\star(t),t)
    -
    a_n^\psi\Lambda_n(t),
    \label{eq:local_approx_pressure}
\end{equation}
where the uncertainty coefficient is
\begin{equation}
    a_n^\psi
    =
    \begin{cases}
        2(\varepsilon_n^{\mathrm{cov}}+\tau_n),
        & \psi=\mathrm{Sto},\\[1mm]
        \varepsilon_n+\varepsilon_n^{\mathrm{cov}}+2\tau_n,
        & \psi=\mathrm{DR},\\[1mm]
        d_n^{\max}+2\tau_n,
        & \psi=\mathrm{Set}.
    \end{cases}
    \label{eq:aggregation_pressure_coefficient}
\end{equation}
\end{proposition}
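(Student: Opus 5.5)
Via Lemma~\ref{lem:conditional_regret_pressure}, it suffices to show that the implemented aggregate score $G_n^\psi(\cdot,t)$ approximates the true conditional regret $\bar L_n(\cdot,t)$ uniformly over phases. If $|G_n^\psi(\bm s_n,t)-\bar L_n(\bm s_n,t)|$ is controlled on each side by $e^{+}$ and $e^{-}$ (possibly asymmetrically), then optimality of $\bm s_n^\psi$ for $G_n^\psi$ gives
\[
\bar L_n(\bm s_n^\psi,t)\le G_n^\psi(\bm s_n^\psi,t)+e^{-}\le G_n^\psi(\bm s_n^\star,t)+e^{-}\le \bar L_n(\bm s_n^\star,t)+e^{+}+e^{-}.
\]
The lemma then converts $\bar L_n(\bm s_n^\psi,t)-\bar L_n(\bm s_n^\star,t)\le e^{+}+e^{-}$ into the pressure bound~\eqref{eq:local_approx_pressure}. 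The task therefore reduces to one-sided error bounds whose sum is $a_n^\psi\Lambda_n(t)$.

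\textbf{Step 1 (quantization).} I introduce the intermediate quantity $\widetilde L_n(\bm s_n,t)=\sum_k\pi_{n,k}(t)L_{n,k}(\bm s_n,t)$, where $L_{n,k}$ is the scenario regret, i.e.\ $L_n$ evaluated at $\bar{\bm\xi}_n^k$. Writing $\widetilde L_n(\bm s_n,t)=\mathbb E[L_n(\bm s_n,\bar{\bm\xi}_n^{\iota_n(\bm\xi_n)},t)\mid\mathcal G_t]$, the Lipschitz modulus~\eqref{eq:regret_lipschitz_modulus} and~\eqref{eq:quantization_distortion_bound} give $|\bar L_n-\widetilde L_n|\le\Lambda_n\tau_n$. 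This holds for every phase, so it contributes $\tau_n\Lambda_n$ to each side, i.e.\ $2\tau_n\Lambda_n$ in total for every $\psi$.

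\textbf{Step 2 (probability mismatch, by aggregation).} I use Kantorovich–Rubinstein duality on the finite support: $k\mapsto L_{n,k}(\bm s_n,t)$ is $\Lambda_n$-Lipschitz with respect to $d_{n,ij}$, so $|\sum_k(p_k-q_k)L_{n,k}|\le\Lambda_n W_n(\bm p,\bm q)$. This follows directly from~\eqref{eq:discrete_wasserstein} by writing $\sum_{ij}\varphi_{ij}(L_{n,j}-L_{n,i})$.
\begin{itemize}
\item For $\psi=\mathrm{Sto}$, this with~\eqref{eq:coverage_condition} gives $|G^{\mathrm{Sto}}_n-\widetilde L_n|\le\Lambda_n\varepsilon_n^{\mathrm{cov}}$ on both sides, hence $a=2(\varepsilon^{\mathrm{cov}}_n+\tau_n)$.
\item For $\psi=\mathrm{DR}$, the bound is one-sided in a useful way. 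Since $\varepsilon_n\ge\varepsilon_n^{\mathrm{cov}}$, we have $\bm\pi_n(t)\in\mathcal P_n(\varepsilon_n)$, so $G^{\mathrm{DR}}_n\ge\widetilde L_n$ with no error on that side. Conversely, every $\bm p\in\mathcal P_n(\varepsilon_n)$ satisfies $W_n(\bm p,\bm\pi_n(t))\le\varepsilon_n+\varepsilon_n^{\mathrm{cov}}$ by the triangle inequality for $W_n$ (valid because $d_n$ is a metric). Hence $G^{\mathrm{DR}}_n\le\widetilde L_n+\Lambda_n(\varepsilon_n+\varepsilon^{\mathrm{cov}}_n)$. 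Summing the two sides gives $\varepsilon_n+\varepsilon_n^{\mathrm{cov}}+2\tau_n$.
\item For $\psi=\mathrm{Set}$, $G^{\mathrm{Set}}_n\ge\widetilde L_n$ trivially, since a maximum dominates any convex combination. For the other side, $G^{\mathrm{Set}}_n=L_{n,k^*}$ and $L_{n,k^*}-L_{n,i}\le\Lambda_n d_{n,ik^*}\le\Lambda_n d_n^{\max}$ for every $i$, so averaging under $\bm\pi_n$ gives $G^{\mathrm{Set}}_n\le\widetilde L_n+\Lambda_n d_n^{\max}$. This needs no coverage condition, matching Assumption~\ref{ass:coverage}.
\end{itemize}

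\textbf{Main obstacle.} The main care point is the asymmetric bookkeeping in the chain above: the DR and Set cases attain their stated constants only because one side of their error vanishes. A symmetric bound would double them. I must also check that $\Lambda_n(t)$ is finite and $\mathcal G_t$-measurable, so it can be pulled outside the conditional expectation. This follows from compactness of $\Xi_n$, the lower bounds $\underline\chi_Q,\underline\chi_C$, and the nonexpansive routing projection, all of which make the reconstructed pressure Lipschitz in $\bm\xi$ given $\widehat{\bm x}_n(t)$.
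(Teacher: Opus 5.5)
Your proposal is correct and follows the paper's route: the paper introduces the same representative-regret intermediate $\sum_k\pi_{n,k}(t)L_{n,k}$, which costs $\tau_n\Lambda_n(t)$ in quantization error. It then uses Kantorovich--Rubinstein on the retained support to obtain the same asymmetric one-sided bounds: feasibility of $\bm\pi_n(t)$ plus the Wasserstein triangle inequality for DR, and the range bound $d_n^{\max}\Lambda_n(t)$ for Set. It closes with the identical optimality chain and Lemma~\ref{lem:conditional_regret_pressure}, while your derivation of the Lipschitz--Wasserstein inequality from the coupling and your note on the $\mathcal G_t$-measurability of $\Lambda_n(t)$ only spell out steps the paper uses implicitly.
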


Assumption~\ref{ass:coverage} controls the two approximation steps introduced by the finite residual representation: $\tau_n$ bounds the residual-to-representative distortion, while $\varepsilon_n^{\mathrm{cov}}$ bounds the mismatch between the current and calibrated representative probabilities; for DR Regret MP, $\varepsilon_n\geq\varepsilon_n^{\mathrm{cov}}$ also makes the current representative distribution feasible in the ambiguity set. These quantities yield controller-specific bounds between the implemented aggregate regret and the true conditional regret. Lemma~\ref{lem:conditional_regret_pressure} then converts the resulting regret gap into exactly the same conditional-pressure gap, producing the coefficient $a_n^\psi$ in~\eqref{eq:aggregation_pressure_coefficient}. Proposition~\ref{prop:local_approx_pressure} therefore supplies the local pressure-loss term $a_n^\psi\Lambda_n(t)$ that is carried into the network-level stability analysis. The finite-scenario bounds and the proof are given in Appendix~\ref{app:network_stability_proofs}.

\subsection{Queue-linear pressure loss and strong stability}
\label{sec:queue_linear_stability}

The pressure gap in~\eqref{eq:local_approx_pressure} still depends on the state-dependent sensitivity $\Lambda_n(t)$. This subsection bounds that sensitivity by an affine function of queue magnitude and then compares the resulting queue-proportional pressure loss with the available service reserve.

Let $\Xi_n^{0}=\Xi_n\cup\{\bm0\}$ and define the local pressure modulus
$$
\Upsilon_n(t)
    =
    \sup_{\bm s_n\in\mathcal S_n}
    \sup_{\substack{\bm\xi,\bm\zeta\in\Xi_n^{0}\\ \bm\xi\neq\bm\zeta}}
    \frac{
        |H_n(\bm s_n,\bm\xi,t)-H_n(\bm s_n,\bm\zeta,t)|
    }{
        d_n(\bm\xi,\bm\zeta)
    }.
$$
Here, $\Upsilon_n(t)$ measures the sensitivity of phase pressure to residual variation.

\begin{assumption}[Affine reconstruction sensitivity]
\label{ass:affine_reconstruction}
For every intersection $n$, there exist finite constants $\beta_n\geq0$, $h_n>0$, $b_{Q,n}\geq0$, and $\eta_{Q,n}>0$, independent of the queue state, such that
\begin{equation}
    \Upsilon_n(t)
    \leq
    \beta_n
    +
    h_n\|\widehat{\bm Q}_{\mathcal L_n}(t)\|_1,
    \label{eq:affine_pressure_sensitivity}
\end{equation}
and
\begin{equation}
    \|\widehat{\bm Q}_{\mathcal L_n}(t)\|_1
    \leq
    b_{Q,n}
    +
    \eta_{Q,n}\|\bm Q_{\mathcal L_n}(t)\|_1.
    \label{eq:affine_estimator_growth}
\end{equation}
\end{assumption}

The constants $\beta_n$ and $b_{Q,n}$ collect queue-independent terms, while $h_n$ and $\eta_{Q,n}$ determine how pressure sensitivity grows with queue magnitude. The multiplicative reconstruction in Section~\ref{sec:uncertain_state} and the bounded additive reconstruction used in the microscopic experiments satisfy this form; the verification is given in Appendix~\ref{app:network_stability_proofs}.

\begin{proposition}[Queue-linear pressure-loss bound]
\label{prop:queue_linear_pressure_loss}
\label{lem:queue_linear_regret}
Under Assumption~\ref{ass:affine_reconstruction}, define
\begin{equation}
    \gamma_n
    =
    2(\beta_n+h_nb_{Q,n}),
    \qquad
    \ell_n
    =
    2h_n\eta_{Q,n},
    \label{eq:affine_sensitivity_constants}
\end{equation}
where $\gamma_n$ is the queue-independent local sensitivity bound and $\ell_n$ is its queue-linear coefficient. Then
\begin{equation}
    \Lambda_n(t)
    \leq
    2\Upsilon_n(t)
    \leq
    \gamma_n
    +
    \ell_n\|\bm Q_{\mathcal L_n}(t)\|_1.
    \label{eq:queue_linear_regret}
\end{equation}
For $\psi\in\{\mathrm{Sto},\mathrm{DR},\mathrm{Set}\}$, define
$$
B_\xi^\psi=\sum_{n\in\mathcal N}a_n^\psi\gamma_n,
\qquad
\delta^\psi=\max_{m\in\mathcal M}\sum_{n:m\in\mathcal L_n}a_n^\psi\ell_n.
$$
Here, $B_\xi^\psi$ is the bounded network-level uncertainty-loss term and $\delta^\psi$ is the coefficient of the queue-proportional uncertainty loss. The network-wide pressure gap satisfies
\begin{equation}
    \sum_{n\in\mathcal N}
    a_n^\psi\Lambda_n(t)
    \leq
    B_\xi^\psi
    +
    \delta^\psi\|\bm Q(t)\|_1.
    \label{eq:global_pressure_loss_bound}
\end{equation}
Since the left-hand side is $\mathcal G_t$-measurable, it also follows that
\begin{equation}
    \sum_{n\in\mathcal N}
    a_n^\psi\Lambda_n(t)
    \leq
    B_\xi^\psi
    +
    \delta^\psi
    \mathbb E[\|\bm Q(t)\|_1\mid\mathcal G_t].
    \label{eq:conditional_global_pressure_loss_bound}
\end{equation}
\end{proposition}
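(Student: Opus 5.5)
The proof has three steps: a local bound on the regret modulus, a summation over intersections that swaps the order of summation, and a conditioning step. The main obstacle is the first inequality $\Lambda_n(t)\le2\Upsilon_n(t)$. It needs the Lipschitz property of a maximum over finitely many phases.

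\emph{Step 1: $\Lambda_n\le 2\Upsilon_n$.} Fix a phase $\bm s_n$ and residuals $\bm\xi\neq\bm\zeta$ in $\Xi_n$. By the definition in~\eqref{eq:continuous_regret},
$$|L_n(\bm s_n,\bm\xi,t)-L_n(\bm s_n,\bm\zeta,t)|\le \Bigl|\max_{\bm u_n}H_n(\bm u_n,\bm\xi,t)-\max_{\bm u_n}H_n(\bm u_n,\bm\zeta,t)\Bigr|+|H_n(\bm s_n,\bm\xi,t)-H_n(\bm s_n,\bm\zeta,t)|.$$
Because $\mathcal S_n$ is finite, the difference of maxima is at most $\max_{\bm u_n}|H_n(\bm u_n,\bm\xi,t)-H_n(\bm u_n,\bm\zeta,t)|$. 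To see this, evaluate the larger maximum at its own maximizer and compare it with the other function at the same phase. Each of the two terms is therefore at most $\Upsilon_n(t)\,d_n(\bm\xi,\bm\zeta)$, since $\Xi_n\subseteq\Xi_n^0$. Dividing by $d_n(\bm\xi,\bm\zeta)$ and taking suprema over $\bm s_n$, $\bm\xi$ and $\bm\zeta$ gives $\Lambda_n(t)\le2\Upsilon_n(t)$.

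\emph{Step 2: affine bound in the latent queue.} Insert~\eqref{eq:affine_pressure_sensitivity} and then~\eqref{eq:affine_estimator_growth}:
$$2\Upsilon_n\le2\beta_n+2h_n\bigl(b_{Q,n}+\eta_{Q,n}\|\bm Q_{\mathcal L_n}(t)\|_1\bigr)=\gamma_n+\ell_n\|\bm Q_{\mathcal L_n}(t)\|_1.$$
This uses $h_n\ge0$, and it establishes~\eqref{eq:queue_linear_regret}.

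\emph{Step 3: network aggregation.} Multiply~\eqref{eq:queue_linear_regret} by $a_n^\psi\ge0$ and sum over $n\in\mathcal N$. The constant parts give $B_\xi^\psi$. For the queue parts, write $\|\bm Q_{\mathcal L_n}\|_1=\sum_{m\in\mathcal L_n}Q_m$, which is valid because queues are nonnegative. Swapping the sums gives
$$\sum_{m\in\mathcal M}Q_m\sum_{n:m\in\mathcal L_n}a_n^\psi\ell_n\le\delta^\psi\|\bm Q(t)\|_1,$$
which is~\eqref{eq:global_pressure_loss_bound}.

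For~\eqref{eq:conditional_global_pressure_loss_bound}, note that $\Lambda_n(t)$ is built from the estimate and the reconstruction map, so it is $\mathcal G_t$-measurable. Taking the conditional expectation given $\mathcal G_t$ of both sides of~\eqref{eq:global_pressure_loss_bound} leaves the left-hand side unchanged. By linearity and monotonicity of conditional expectation, the right-hand side becomes $B_\xi^\psi+\delta^\psi\mathbb E[\|\bm Q(t)\|_1\mid\mathcal G_t]$. Integrability holds because $\mathbb E\|\bm Q(t)\|_1<\infty$, which follows by induction from Assumption~\ref{ass:bounded_primitives}.
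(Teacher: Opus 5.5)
Your proposal is correct and follows the paper's proof step for step: you bound $\Lambda_n\le 2\Upsilon_n$ through the Lipschitz property of a finite maximum, substitute the two affine inequalities, interchange the sums over intersections and movements, and condition on the $\mathcal G_t$-measurable left-hand side. The extra details you supply are refinements the paper leaves implicit and do not change the argument: the nonnegativity of $a_n^\psi$ and $Q_m$, the explicit comparison of the maxima, and the integrability of $\|\bm Q(t)\|_1$.
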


Assumption~\ref{ass:affine_reconstruction} first bounds the phase-pressure modulus $\Upsilon_n(t)$ by an affine function of the estimated queues and then bounds those estimates by an affine function of the latent queues. Because phase regret is the difference between a state-wise maximum pressure and a candidate pressure, its modulus satisfies $\Lambda_n(t)\leq2\Upsilon_n(t)$, which gives the finite local constants $\gamma_n$ and $\ell_n$ in~\eqref{eq:queue_linear_regret}. Combining this bound with the finite coefficients $a_n^\psi$ from Proposition~\ref{prop:local_approx_pressure} and summing over the finite network yields the finite network terms $B_\xi^\psi$ and $\delta^\psi$. Proposition~\ref{prop:queue_linear_pressure_loss} therefore converts the state-dependent pressure gap into the drift-compatible form $B_\xi^\psi+\delta^\psi\|\bm Q\|_1$, with $\delta^\psi$ becoming the queue-proportional uncertainty coefficient used in the stability theorem. The proof is given in Appendix~\ref{app:network_stability_proofs}.

To state the stability result, let $\mathcal G_t^Q=\sigma(\mathcal G_t,\bm Q(t))$. Let $\widetilde{\bm s}(t)$ be generated by a stationary randomized feasible comparison policy whose phase randomization is based on $\mathcal G_t$ and is conditionally independent of the latent traffic state given $\mathcal G_t$. Write $\bm Z^{\widetilde s}(t)=\bm Z(\widetilde{\bm s}(t),t)$ for its potential net-service vector.

\begin{definition}[Provable stability region]
\label{def:provable_stability_region}
For $\psi\in\{\mathrm{Sto},\mathrm{DR},\mathrm{Set}\}$, define $\mathcal D_{\mathrm{prov}}^\psi$ as the set of $\bm\lambda\in\mathcal D_{\mathrm{phy}}$ for which there exist such a comparison policy and a constant $\epsilon_0>\delta^\psi$ satisfying
\begin{equation}
    \mathbb E\!\left[
        \bm Z^{\widetilde s}(t)
        \mid\mathcal G_t^Q
    \right]
    \succeq
    \bm\lambda+\epsilon_0\bm1
    \qquad\text{a.s. for every }t.
    \label{eq:conditional_service_reserve}
\end{equation}
\end{definition}

Thus, $\epsilon_0$ is the conditional net-service reserve available to offset the queue-proportional uncertainty loss $\delta^\psi$.

\begin{theorem}[Strong stability of regret-based MP]
\label{thm:regret_mp_stability}
Suppose Assumptions~\ref{ass:bounded_primitives}, \ref{ass:coverage}, and~\ref{ass:affine_reconstruction} hold, with the probability-coverage requirement omitted for Set Regret. If $\bm\lambda\in\mathcal D_{\mathrm{prov}}^\psi$ with reserve $\epsilon_0$, then the corresponding regret-based MP controller strongly stabilizes the network and
\begin{equation}
    \limsup_{T\rightarrow\infty}
    \frac1T
    \sum_{t=0}^{T-1}
    \mathbb E\|\bm Q(t)\|_1
    \leq
    \frac{B_0+B_\xi^\psi}{\epsilon_0-\delta^\psi}.
    \label{eq:average_queue_bound}
\end{equation}
\end{theorem}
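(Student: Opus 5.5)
The approach is a standard Lyapunov drift argument. We start from Lemma~\ref{lem:quadratic_drift} and use a tower-property step to move from $\mathcal H_t$ to the coarser $\sigma$-field $\mathcal G_t^Q$. Taking $\mathbb E[\cdot\mid\mathcal G_t^Q]$ in \eqref{eq:initial_drift_bound} gives
\[
\mathbb E[\Delta V(t)\mid\mathcal G_t^Q]\leq B_0+\bm Q^\top(t)\bm\lambda-\mathbb E[H(\bm s^\psi(t),t)\mid\mathcal G_t^Q],
\]
where $\bm s^\psi(t)$ is $\mathcal G_t$-measurable and hence known given $\mathcal G_t^Q$. The task is then to lower-bound the achieved pressure by the comparison policy's pressure minus the uncertainty loss. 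Once that is done, the reserve \eqref{eq:conditional_service_reserve} turns it into a negative multiple of $\|\bm Q\|_1$.

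\textbf{Step 1: local pressure guarantee.} We identify the conditional pressure. Given $\mathcal G_t$, the latent local state equals $\mathcal T_n(\widehat{\bm x}_n,\bm\xi_n(t))$, so the true local pressure is $H_n(\bm s_n,\bm\xi_n(t),t)$. For a $\mathcal G_t$-measurable phase, its $\mathcal G_t$-conditional mean is $\bar H_n(\bm s_n,t)$. Summing Proposition~\ref{prop:local_approx_pressure} over $n$ and using that $\bm s_n^\star$ maximizes $\bar H_n$ gives
\[
\sum_n\bar H_n(\bm s_n^\psi,t)\geq\sum_n\bar H_n(\widetilde{\bm s}_n,t)-\sum_n a_n^\psi\Lambda_n(t)
\]
for any phase $\widetilde{\bm s}$. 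For the randomized comparison policy, conditional independence of $\widetilde{\bm s}$ from the latent state given $\mathcal G_t$ lets us condition on $\widetilde{\bm s}$ and average. This yields $\mathbb E[H(\bm s^\psi,t)\mid\mathcal G_t]\geq\mathbb E[H(\widetilde{\bm s},t)\mid\mathcal G_t]-\sum_n a_n^\psi\Lambda_n(t)$.

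\textbf{Step 2: handling the $\sigma$-field mismatch.} This is the main obstacle. The drift is naturally conditioned on $\mathcal G_t^Q$, which includes $\bm Q(t)$, while Step~1 holds only given $\mathcal G_t$. The reserve condition \eqref{eq:conditional_service_reserve} is stated given $\mathcal G_t^Q$, and its product with $\bm Q$ is $\mathbb E[\bm Q^\top\bm Z^{\widetilde s}\mid\mathcal G_t^Q]\geq\bm Q^\top\bm\lambda+\epsilon_0\|\bm Q\|_1$. My plan is to avoid the mismatch by taking full expectations instead of trying to condition on $\mathcal G_t^Q$ throughout. First, $\mathbb E[H(\bm s^\psi,t)]\geq\mathbb E[H(\widetilde{\bm s},t)]-\mathbb E\sum_na_n^\psi\Lambda_n(t)$ follows from Step~1 by the tower property. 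Second, $\mathbb E[H(\widetilde{\bm s},t)]=\mathbb E[\bm Q^\top\mathbb E[\bm Z^{\widetilde s}\mid\mathcal G_t^Q]]\geq\mathbb E[\bm Q^\top\bm\lambda]+\epsilon_0\mathbb E\|\bm Q\|_1$. Third, Proposition~\ref{prop:queue_linear_pressure_loss}, via \eqref{eq:global_pressure_loss_bound} or \eqref{eq:conditional_global_pressure_loss_bound}, gives $\mathbb E\sum_na_n^\psi\Lambda_n\leq B_\xi^\psi+\delta^\psi\mathbb E\|\bm Q\|_1$. Combining the three bounds:
\[
\mathbb E[\Delta V(t)]\leq B_0+B_\xi^\psi-(\epsilon_0-\delta^\psi)\mathbb E\|\bm Q(t)\|_1.
\]
Some integrability care is needed. $\mathbb E\|\bm Q(t)\|_2^2<\infty$ for each $t$ follows inductively from the bounded second moments of arrivals and the bounded service in Assumption~\ref{ass:bounded_primitives}. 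This makes all the expectations finite and justifies the tower steps. We also need the pressure identity to hold with the product $\bm Q^\top\bm Z$ inside the expectation, not with $\bm Q$ pulled outside prematurely; this is fine because $\bm Q(t)$ is $\mathcal G_t^Q$-measurable.

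\textbf{Step 3: telescoping.} We sum the bound over $t=0,\dots,T-1$ and use $V\geq0$. This gives
\[
(\epsilon_0-\delta^\psi)\sum_{t<T}\mathbb E\|\bm Q(t)\|_1\leq T(B_0+B_\xi^\psi)+\mathbb E V(\bm Q(0)).
\]
Dividing by $T(\epsilon_0-\delta^\psi)>0$, which holds because $\epsilon_0>\delta^\psi$ by Definition~\ref{def:provable_stability_region}, and letting $T\to\infty$ removes the initial term, which is finite by assumption. This yields \eqref{eq:average_queue_bound}, and strong stability follows by Definition~\ref{def:strong_stability}. For Set Regret, the same argument applies with the Set coefficient of Proposition~\ref{prop:local_approx_pressure}, which does not use the coverage condition.
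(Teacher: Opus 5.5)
Your proposal is correct and follows essentially the same argument as the paper. It sums Proposition~\ref{prop:local_approx_pressure} over intersections, bounds the comparison policy's conditional pressure above by $\sum_n\max_{\bm s_n}\bar H_n$, applies the reserve~\eqref{eq:conditional_service_reserve} through the tower property, controls the loss with Proposition~\ref{prop:queue_linear_pressure_loss}, and telescopes the drift of Lemma~\ref{lem:quadratic_drift}. The differences are cosmetic: the paper combines the bounds conditionally on $\mathcal G_t$ before taking expectations, whereas you take unconditional expectations of each ingredient, and you also check the finiteness of $\mathbb E\|\bm Q(t)\|_2^2$, which the paper leaves implicit.
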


Theorem~\ref{thm:regret_mp_stability} combines the preceding bounds in the order in which they enter the drift. Assumption~\ref{ass:bounded_primitives} and Lemma~\ref{lem:quadratic_drift} make the queue-independent drift term $B_0$ finite. Assumption~\ref{ass:coverage}, together with Lemma~\ref{lem:conditional_regret_pressure}, gives the finite local pressure-loss coefficient $a_n^\psi$ and the conditional-pressure gap in Proposition~\ref{prop:local_approx_pressure}. Assumption~\ref{ass:affine_reconstruction} and Proposition~\ref{prop:queue_linear_pressure_loss} then make $\gamma_n$ and $\ell_n$ finite and aggregate the local losses into the finite network terms $B_\xi^\psi$ and $\delta^\psi$. Membership in $\mathcal D_{\mathrm{prov}}^\psi$ supplies a comparison policy whose conditional net-service reserve $\epsilon_0$ exceeds $\delta^\psi$. Comparing the selected phase with this policy and substituting the resulting pressure bound into Lemma~\ref{lem:quadratic_drift} gives
$$
\mathbb E[\Delta V(t)]
\leq
B_0+B_\xi^\psi
-
(\epsilon_0-\delta^\psi)\mathbb E\|\bm Q(t)\|_1.
$$
The coefficient of the queue term is strictly negative because $\epsilon_0>\delta^\psi$; summing this drift inequality over time therefore gives strong stability and the average-queue bound in~\eqref{eq:average_queue_bound}. The complete proof of Theorem~\ref{thm:regret_mp_stability} is given in Appendix~\ref{app:network_stability_proofs}.

\subsection{Comparison with Point-estimate MP}
\label{sec:point_stability_comparison}

The preceding result characterizes Regret MP through the coefficient $\delta^\psi$. To compare it with Point-estimate MP on the same pressure-loss scale, define the conditional distance from the latent residual to the zero-residual point estimate by
\begin{equation}
    \rho_n^{\mathrm P}(t)
    =
    \mathbb E\!\left[
        d_n(\bm\xi_n(t),\bm0)
        \mid\mathcal G_t
    \right],
    \qquad
    \rho_n^{\mathrm P}(t)\leq\bar\rho_n^{\mathrm P}<\infty,
    \label{eq:point_residual_distance}
\end{equation}
where $\bar\rho_n^{\mathrm P}$ is a uniform upper bound on the conditional point-estimation distance. Define
$$
B_\xi^{\mathrm P}=\sum_{n\in\mathcal N}\bar\rho_n^{\mathrm P}\gamma_n,
\qquad
\delta^{\mathrm P}=\max_{m\in\mathcal M}\sum_{n:m\in\mathcal L_n}\bar\rho_n^{\mathrm P}\ell_n.
$$
Here, $B_\xi^{\mathrm P}$ and $\delta^{\mathrm P}$ are the bounded and queue-proportional Point-estimate pressure-loss coefficients, respectively. The same pressure-sensitivity argument gives the network-wide conditional pressure-loss bound
$$
\begin{array}{c}
\displaystyle
\sum_{n\in\mathcal N}
\left[
\max_{\bm s_n\in\mathcal S_n}\bar H_n(\bm s_n,t)
-
\bar H_n(\bm s_n^{\mathrm P}(t),t)
\right]
\\[2pt]
\displaystyle
\leq
B_\xi^{\mathrm P}
+
\delta^{\mathrm P}\mathbb E[\|\bm Q(t)\|_1\mid\mathcal G_t].
\end{array}
$$
The uniform bound $\bar\rho_n^{\mathrm P}$ and Assumption~\ref{ass:affine_reconstruction} make $B_\xi^{\mathrm P}$ and $\delta^{\mathrm P}$ finite. Let $\mathcal D_{\mathrm{prov}}^{\mathrm P}$ be the set of $\bm\lambda\in\mathcal D_{\mathrm{phy}}$ for which a comparison policy satisfies~\eqref{eq:conditional_service_reserve} with some reserve $\epsilon_0>\delta^{\mathrm P}$. Assumption~\ref{ass:bounded_primitives} and Lemma~\ref{lem:quadratic_drift} then provide the same finite drift term $B_0$ as for Regret MP, while the Point-estimate pressure-loss bound contributes $B_\xi^{\mathrm P}+\delta^{\mathrm P}\mathbb E[\|\bm Q(t)\|_1\mid\mathcal G_t]$. The resulting drift has queue-linear coefficient $-(\epsilon_0-\delta^{\mathrm P})$, so Point-estimate MP strongly stabilizes every demand in $\mathcal D_{\mathrm{prov}}^{\mathrm P}$. Appendix~\ref{app:network_stability_proofs} gives the derivation.

The queue-linear coefficients now provide a direct comparison of the sufficient stability conditions.
\begin{corollary}[Comparison of provable stability regions]
\label{cor:stability_region_comparison}
If
$
2(\varepsilon_n^{\mathrm{cov}}+\tau_n)
\leq
\bar\rho_n^{\mathrm P}
\ \text{for all }n,
$
then
$
\mathcal D_{\mathrm{prov}}^{\mathrm P}
\subseteq
\mathcal D_{\mathrm{prov}}^{\mathrm{Sto}}.
$
If
$
\varepsilon_n+\varepsilon_n^{\mathrm{cov}}+2\tau_n
\leq
\bar\rho_n^{\mathrm P}
\ \text{for all }n,
$
then
$
\mathcal D_{\mathrm{prov}}^{\mathrm P}
\subseteq
\mathcal D_{\mathrm{prov}}^{\mathrm{DR}}
\subseteq
\mathcal D_{\mathrm{prov}}^{\mathrm{Sto}}.
$
\end{corollary}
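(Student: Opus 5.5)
The argument is a monotonicity comparison of the queue-linear coefficients. All the provable regions $\mathcal D_{\mathrm{prov}}^{\mathrm P}$, $\mathcal D_{\mathrm{prov}}^{\mathrm{Sto}}$ and $\mathcal D_{\mathrm{prov}}^{\mathrm{DR}}$ are defined by the same condition. Each requires $\bm\lambda\in\mathcal D_{\mathrm{phy}}$ and a comparison policy satisfying~\eqref{eq:conditional_service_reserve} with a reserve $\epsilon_0$. The only difference is the threshold that $\epsilon_0$ must exceed, namely $\delta^{\mathrm P}$, $\delta^{\mathrm{Sto}}$ or $\delta^{\mathrm{DR}}$. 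The comparison policy class and the conditional reserve requirement do not depend on the controller. Hence it suffices to show that the coefficients are ordered, e.g.\ $\delta^{\mathrm{Sto}}\le\delta^{\mathrm P}$. Then any $(\widetilde{\bm s},\epsilon_0)$ certifying $\bm\lambda\in\mathcal D_{\mathrm{prov}}^{\mathrm P}$ has $\epsilon_0>\delta^{\mathrm P}\ge\delta^{\mathrm{Sto}}$ and certifies $\bm\lambda\in\mathcal D_{\mathrm{prov}}^{\mathrm{Sto}}$.

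\textbf{Step 1 (coefficient ordering).} Recall $\delta^\psi=\max_{m}\sum_{n:m\in\mathcal L_n}a_n^\psi\ell_n$ and $\delta^{\mathrm P}=\max_m\sum_{n:m\in\mathcal L_n}\bar\rho_n^{\mathrm P}\ell_n$, with $\ell_n=2h_n\eta_{Q,n}\ge0$ from Proposition~\ref{prop:queue_linear_pressure_loss}. Under the first hypothesis, $a_n^{\mathrm{Sto}}=2(\varepsilon_n^{\mathrm{cov}}+\tau_n)\le\bar\rho_n^{\mathrm P}$ for every $n$, by~\eqref{eq:aggregation_pressure_coefficient}. Multiplying by $\ell_n\ge0$ and summing over $n$ with $m\in\mathcal L_n$ gives a termwise inequality for each $m$. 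Taking the maximum over $m$ preserves it, so $\delta^{\mathrm{Sto}}\le\delta^{\mathrm P}$. The same computation with $a_n^{\mathrm{DR}}=\varepsilon_n+\varepsilon_n^{\mathrm{cov}}+2\tau_n\le\bar\rho_n^{\mathrm P}$ gives $\delta^{\mathrm{DR}}\le\delta^{\mathrm P}$, which yields $\mathcal D_{\mathrm{prov}}^{\mathrm P}\subseteq\mathcal D_{\mathrm{prov}}^{\mathrm{DR}}$.

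\textbf{Step 2 (DR versus Sto).} Assumption~\ref{ass:coverage} requires $\varepsilon_n\ge\varepsilon_n^{\mathrm{cov}}$ for DR Regret MP. Therefore
\[
a_n^{\mathrm{DR}}=\varepsilon_n+\varepsilon_n^{\mathrm{cov}}+2\tau_n\ge 2\varepsilon_n^{\mathrm{cov}}+2\tau_n=a_n^{\mathrm{Sto}},
\]
so $\delta^{\mathrm{Sto}}\le\delta^{\mathrm{DR}}$ by the same termwise argument. This gives $\mathcal D_{\mathrm{prov}}^{\mathrm{DR}}\subseteq\mathcal D_{\mathrm{prov}}^{\mathrm{Sto}}$. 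Note that this second inclusion does not need the hypothesis of the corollary; it uses only the radius requirement from Assumption~\ref{ass:coverage}. The second hypothesis of the corollary also implies the first, since $a_n^{\mathrm{Sto}}\le a_n^{\mathrm{DR}}\le\bar\rho_n^{\mathrm P}$, so the chain in the statement is consistent.

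\textbf{Main obstacle.} The only delicate point is making sure the definitions of the provable regions really share everything except the threshold. In particular, $\mathcal D_{\mathrm{prov}}^{\mathrm P}$ must use the same comparison-policy class, the same conditioning $\mathcal G_t^Q$, and the same requirement $\bm\lambda\in\mathcal D_{\mathrm{phy}}$ as Definition~\ref{def:provable_stability_region}, which the text states explicitly. One should also check that the strict inequality $\epsilon_0>\delta$ transfers under the weak ordering of the thresholds, which it does. The underlying assumptions (Assumptions~\ref{ass:bounded_primitives}--\ref{ass:affine_reconstruction}) are needed only so that these coefficients are finite and the regions are the ones to which Theorem~\ref{thm:regret_mp_stability} applies. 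No new drift analysis is required.
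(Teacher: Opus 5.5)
Your proof is correct and follows essentially the same route as the paper. The per-intersection bounds $a_n^{\mathrm{Sto}}\le\bar\rho_n^{\mathrm P}$ and $a_n^{\mathrm{DR}}\le\bar\rho_n^{\mathrm P}$, together with $a_n^{\mathrm{Sto}}\le a_n^{\mathrm{DR}}$ from $\varepsilon_n\ge\varepsilon_n^{\mathrm{cov}}$, order the coefficients $\delta^\psi$ through $\ell_n\ge0$, and the inclusions follow because all provable regions share the same reserve condition and differ only in the threshold that $\epsilon_0$ must exceed; your termwise-sum-then-maximum step and your check that the strict inequality transfers just make explicit what the paper leaves implicit.
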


For Stochastic Regret MP, the first condition gives $a_n^{\mathrm{Sto}}\leq\bar\rho_n^{\mathrm P}$ at every intersection. Because $\ell_n\geq0$ and the same local neighborhoods enter the definitions of the network coefficients, this local ordering implies $\delta^{\mathrm{Sto}}\leq\delta^{\mathrm P}$. The second condition similarly gives $a_n^{\mathrm{DR}}\leq\bar\rho_n^{\mathrm P}$ and hence $\delta^{\mathrm{DR}}\leq\delta^{\mathrm P}$. Moreover, $\varepsilon_n\geq\varepsilon_n^{\mathrm{cov}}$ implies $a_n^{\mathrm{Sto}}\leq a_n^{\mathrm{DR}}$ and therefore $\delta^{\mathrm{Sto}}\leq\delta^{\mathrm{DR}}$. Since all provable regions use the same comparison-policy reserve condition and differ through these queue-linear coefficients, the coefficient orderings give the inclusions in Corollary~\ref{cor:stability_region_comparison}. The proof is given in Appendix~\ref{app:network_stability_proofs}; the fixed-positive-weight extension, including link-length normalization, is given in Appendix~\ref{app:weighted_stability}.

\section{Numerical Experiments}
\label{sec:numerical}
The numerical experiments evaluate regret-based phase selection and isolate the effects of traffic-state error, probability mismatch, and the robustification target.

\subsection{Experimental setting}
\label{sec:numerical_setting}

The numerical experiments use the eight-intersection corridor in Fig.~\ref{fig:numerical_topology}, which also summarizes the baseline demand, service, routing, and phase settings. Demand alternates between off-peak and peak periods, with Poisson arrivals and stochastic variations in movement service and routing. Each run contains 5000 control steps, with the first 500 discarded as burn-in; common traffic random numbers are used across controllers.

\begin{figure}
\centering
\includegraphics[width=0.8\linewidth]{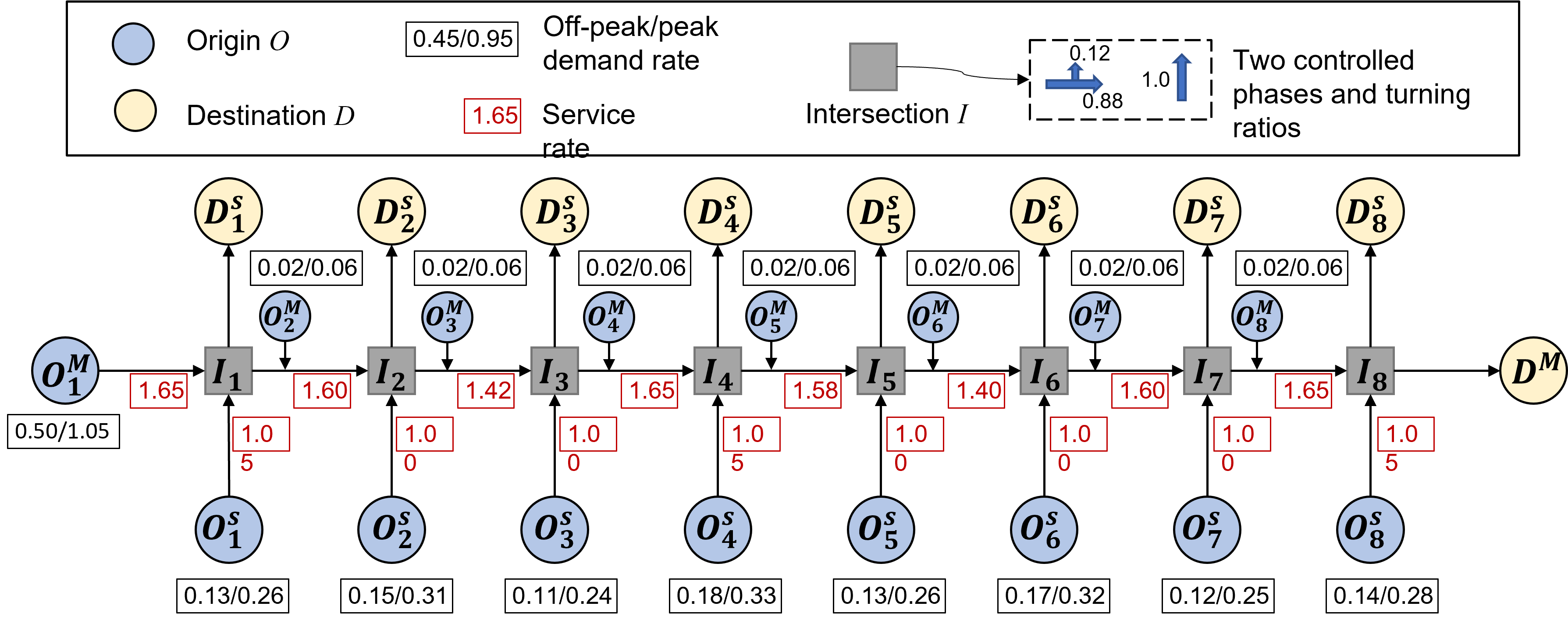}
\caption{Baseline topology and traffic parameters for the numerical experiments. Blue circles denote origins, yellow circles destinations, and gray squares signalized intersections. The mainline and side-street movements are served by separate phases at each intersection; the figure reports the exogenous demands, nominal movement saturation-flow rates, and mainline downstream routing proportions used in the baseline setting.}
\label{fig:numerical_topology}
\end{figure}

Traffic-state uncertainty jointly affects queues, routing proportions, and movement service. Historical residuals are generated from four error-severity modes with persistent temporal variation, and each intersection uses $N_{\mathrm{cal}}=1500$ residual samples reduced to $K_{\mathrm{sc}}=8$ representative joint scenarios. Stochastic Regret MP uses the calibrated scenario probabilities, while DR Regret MP uses the corresponding bootstrap-calibrated Wasserstein ambiguity set. Unless otherwise stated, the online residual distribution matches the historical calibration. Results are averaged over 20 independently generated error environments and two common traffic seeds per environment.

The main comparison includes True-state MP, Point-estimate MP, Stochastic Regret MP, and DR Regret MP. True-state MP provides the oracle reference and Point-estimate MP the practical benchmark. Set Regret is used as the support-wise reference, while DR-Pressure and Set-Pressure are included only in the formulation comparison to isolate the effect of robustifying phase regret rather than absolute pressure.

The experiments examine three questions. The baseline and formulation tests evaluate overall performance and compare regret with absolute-pressure robustification. The uncertainty-sensitivity tests vary residual magnitude and probability mismatch. The phase-decision analysis examines how regret margins and the Wasserstein radius affect implemented phase choices.

The principal network-level metric is the mean total queue. Decision quality is evaluated by agreement with True-state MP and by the normalized true-state pressure regret
$$
\mathrm{NPR}
    =
    \frac{1}{T}
    \sum_{t=1}^{T}
    \frac{
        \sum_{n\in\mathcal N}
        \left[
            H_n^{\max}(t)-H_n(\bm s_n(t),t)
        \right]
    }
    {1+\|\bm Q(t)\|_1}.
$$
The numerator is the true-state pressure loss of the implemented phases, while the denominator normalizes this loss by network load. Scenario best-phase disagreement and controller disagreement are additionally reported to distinguish variation in plausible phase rankings from changes in the implemented decision.
\subsection{Results and analysis}
\label{sec:numerical_results}

Figure~\ref{fig:numerical_main} summarizes baseline performance and robustification choice (A--D), sensitivity to residual magnitude and probability shift (E--G), and phase-decision sensitivity to regret margin and DR radius (H--I).

\begin{figure}
\centering
\includegraphics[width=0.95\linewidth]{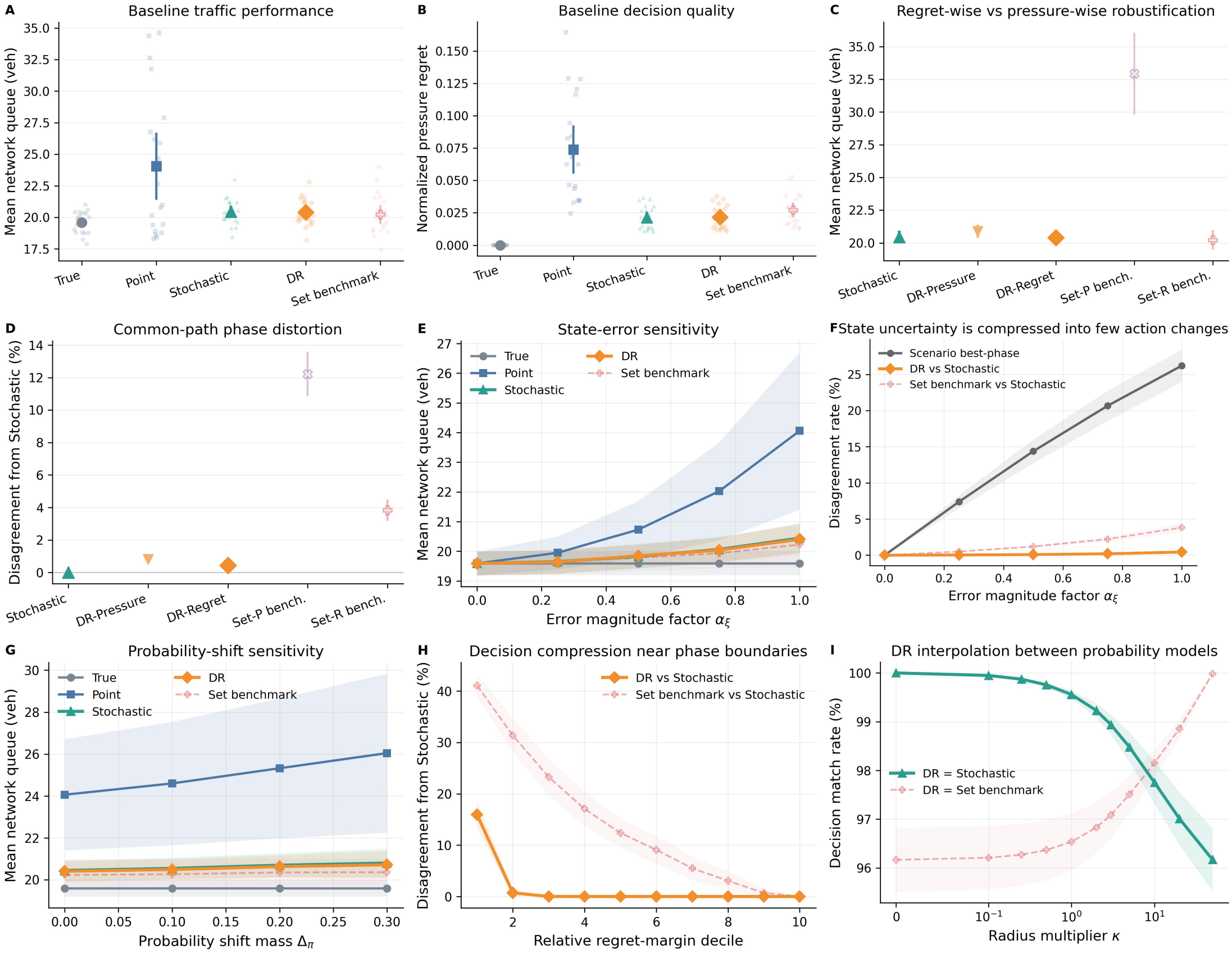}
\caption{Numerical evaluation of Stochastic and DR Regret MP under traffic-state uncertainty; Set Regret is retained as the support-wise benchmark. (A) Baseline mean network queue. (B) Baseline normalized true-state pressure regret. (C) Mean network queue when uncertainty is applied to absolute phase pressure or to phase regret. (D) Common-state phase disagreement from Stochastic Regret MP for the same pressure-wise and regret-wise rules. (E) Mean network queue as the residual magnitude is scaled by $\alpha_\xi$. (F) Scenario best-phase disagreement and controller-level disagreement over the same error-magnitude sweep. (G) Mean network queue under increasing probability shift $\Delta_\pi$. (H) Phase disagreement from Stochastic Regret MP conditioned on the stochastic regret-margin decile. (I) Agreement of DR Regret MP with the stochastic endpoint and the support-wise benchmark as the Wasserstein-radius multiplier $\kappa$ increases. Error bars and shaded bands show 95\% confidence intervals across independent error environments where applicable.}
\label{fig:numerical_main}
\end{figure}

Panels~A--B compare the four main controllers and show that the two regret-based methods recover most of the performance lost when MP acts directly on an uncertain point estimate. Relative to Point-estimate MP, Stochastic Regret MP and DR Regret MP reduce the mean network queue by about 15\% and the normalized true-state pressure regret by about 70\%. Their nearly identical baseline performance is itself informative: when the calibrated scenario probabilities remain representative, Stochastic Regret MP already captures most of the available benefit and DR Regret MP has little reason to alter the nominal decision.

Panels~C--D isolate the \emph{object} being robustified. DR Regret MP yields lower queues than DR-Pressure even though both controllers use the same latent-state scenarios, nominal probabilities, scenario metric, and ambiguity radius; the support-wise comparison shows the same pattern more strongly. The improvement therefore cannot be attributed to more uncertainty information or a different level of conservatism. Instead, it supports Propositions~\ref{prop:common_shift_invariance} and~\ref{prop:robustification_distinction}: regret-based robustification removes scenario-wise pressure components shared by competing phases and concentrates robustness on score differences that can change the phase ranking.

Panels~E--G separate uncertainty in the traffic state from uncertainty in its probability distribution. As residual magnitude increases, Point-estimate MP deteriorates substantially faster than Stochastic Regret MP and DR Regret MP. At the largest residual magnitude, the scenario-best phase varies across plausible latent states in about 26\% of cases, yet the implemented decisions of Stochastic Regret MP and DR Regret MP differ in less than 1\% of cases. This gap shows that large state uncertainty does not automatically imply large control uncertainty: much of the variation is filtered out by the discrete phase ranking. When the online scenario probabilities are shifted away from the calibration distribution, DR Regret MP degrades slightly less than Stochastic Regret MP, consistent with the intended role of Wasserstein protection against probability mismatch.

The remaining question is when changes in regret aggregation are large enough to change the selected phase. Let $F_n$ be a reference aggregate-regret score with unique minimizing phase $\bm s_n^0$, let $F_{n,(1)}$ and $F_{n,(2)}$ be its smallest and second-smallest phase scores, and define the regret margin $\Gamma_n=F_{n,(2)}-F_{n,(1)}$. The following margin condition gives a sufficient criterion for the phase decision to remain unchanged.

\begin{proposition}[Phase-decision invariance]
\label{prop:decision_invariance}
Let $\widetilde F_n$ be another phase score satisfying $\max_{\bm s_n\in\mathcal S_n}|\widetilde F_n(\bm s_n)-F_n(\bm s_n)|\leq\vartheta_n$ for some $\vartheta_n\geq0$. If
\(\Gamma_n>2\vartheta_n,\)
then $\bm s_n^0$ is also the unique minimizer of $\widetilde F_n$.
\end{proposition}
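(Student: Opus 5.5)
The argument is a direct perturbation comparison on the finite phase set $\mathcal S_n$. I will show that $\widetilde F_n(\bm s_n^0)<\widetilde F_n(\bm s_n)$ for every $\bm s_n\neq\bm s_n^0$. This gives both that $\bm s_n^0$ minimizes $\widetilde F_n$ and that it is the unique minimizer. No structure of the regret scores is needed beyond the uniform bound $\vartheta_n$ and the margin $\Gamma_n$; in particular, Propositions~\ref{prop:regret_ordering}--\ref{prop:dr_regret_limits} are not required.

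First, I will use the definition of the margin. Since $\bm s_n^0$ is the unique minimizer of $F_n$, we have $F_n(\bm s_n^0)=F_{n,(1)}$. For every $\bm s_n\neq\bm s_n^0$, we have $F_n(\bm s_n)\geq F_{n,(2)}=F_{n,(1)}+\Gamma_n$. If ties among the remaining phases occur, $F_{n,(2)}$ is read as the smallest score among phases other than $\bm s_n^0$, so this inequality still holds.

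Second, I will apply the uniform perturbation bound on each side. Using it at $\bm s_n^0$ gives
\[
\widetilde F_n(\bm s_n^0)\leq F_n(\bm s_n^0)+\vartheta_n=F_{n,(1)}+\vartheta_n .
\]
Using it at any $\bm s_n\neq\bm s_n^0$ gives
\[
\widetilde F_n(\bm s_n)\geq F_n(\bm s_n)-\vartheta_n\geq F_{n,(1)}+\Gamma_n-\vartheta_n .
\]
Subtracting the two bounds yields
\[
\widetilde F_n(\bm s_n)-\widetilde F_n(\bm s_n^0)\geq\Gamma_n-2\vartheta_n>0 ,
\]
and the final inequality is exactly the hypothesis $\Gamma_n>2\vartheta_n$.

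The main point requiring care is the strict inequality, because it is what delivers uniqueness and not just optimality. The strict margin condition supplies it directly. I will also remark that the factor $2$ is needed: with only $\Gamma_n>\vartheta_n$, the score at $\bm s_n^0$ could rise by $\vartheta_n$ while a competitor's score falls by $\vartheta_n$, and the ranking could reverse.

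The bound can then be instantiated in two ways. Taking $F_n=G_n^{\mathrm{Sto}}$ and $\widetilde F_n=G_n^{\mathrm{DR}}$ gives a radius-dependent $\vartheta_n$. Alternatively, taking $\widetilde F_n$ to be the true conditional regret gives $\vartheta_n$ from the approximation bounds behind Proposition~\ref{prop:local_approx_pressure}.
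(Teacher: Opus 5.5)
Your proposal is correct and follows the paper's proof essentially verbatim: the paper also bounds $\widetilde F_n(\bm u_n)-\widetilde F_n(\bm s_n^0)\geq F_n(\bm u_n)-F_n(\bm s_n^0)-2\vartheta_n\geq\Gamma_n-2\vartheta_n>0$ for every competing phase. The strict margin then gives uniqueness, exactly as you argue.
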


The uniform perturbation bound limits the score change of both the reference phase and any competing phase to at most $\vartheta_n$. Their pairwise gap can therefore decrease by at most $2\vartheta_n$. Proposition~\ref{prop:decision_invariance} shows that a reference margin $\Gamma_n>2\vartheta_n$ is sufficient to preserve the unique minimizing phase, which links score perturbations directly to implemented phase changes in the numerical analysis below. The proof is given in Appendix~\ref{app:decision_invariance_proof}.

Panels~H--I confirm this decision-compression mechanism. Stochastic--DR disagreement is concentrated almost entirely in the lowest regret-margin decisions and becomes negligible once the leading phase has a clear score advantage. As the Wasserstein-radius multiplier increases, DR Regret MP moves continuously from the stochastic endpoint toward the support-wise benchmark, yet the mean network queue changes by only about 1\% over the full sweep. Together, these results explain why different uncertainty models can produce materially different regret scores while changing relatively few implemented signal actions: distributional robustness matters primarily near phase-decision boundaries.

\section{Microscopic Simulation Evaluation}
\label{sec:microscopic}
\label{sec:simulation}

The SUMO experiments test whether the phase-selection improvements observed in Section~\ref{sec:numerical} translate into closed-loop traffic benefits under detector-based state estimation.
The microscopic experiments use SUMO 1.26.0 with TraCI on the 28-intersection Lianyungang network shown in Fig.~\ref{fig:simulation_scenario}. The 7200~s time-varying demand is shown in Fig.~\ref{fig:simulation_scenario}(b), and simulations may continue to 9000~s for network clearance. SUMO runs at 1~s resolution, while detector aggregation and phase decisions occur every 15~s. A phase change uses 3~s yellow followed by 12~s of the selected phase; an unchanged phase continues for the full interval.

\begin{figure}
\centering
\begin{tabular}{@{}cc@{}}
\includegraphics[width=0.47\linewidth]{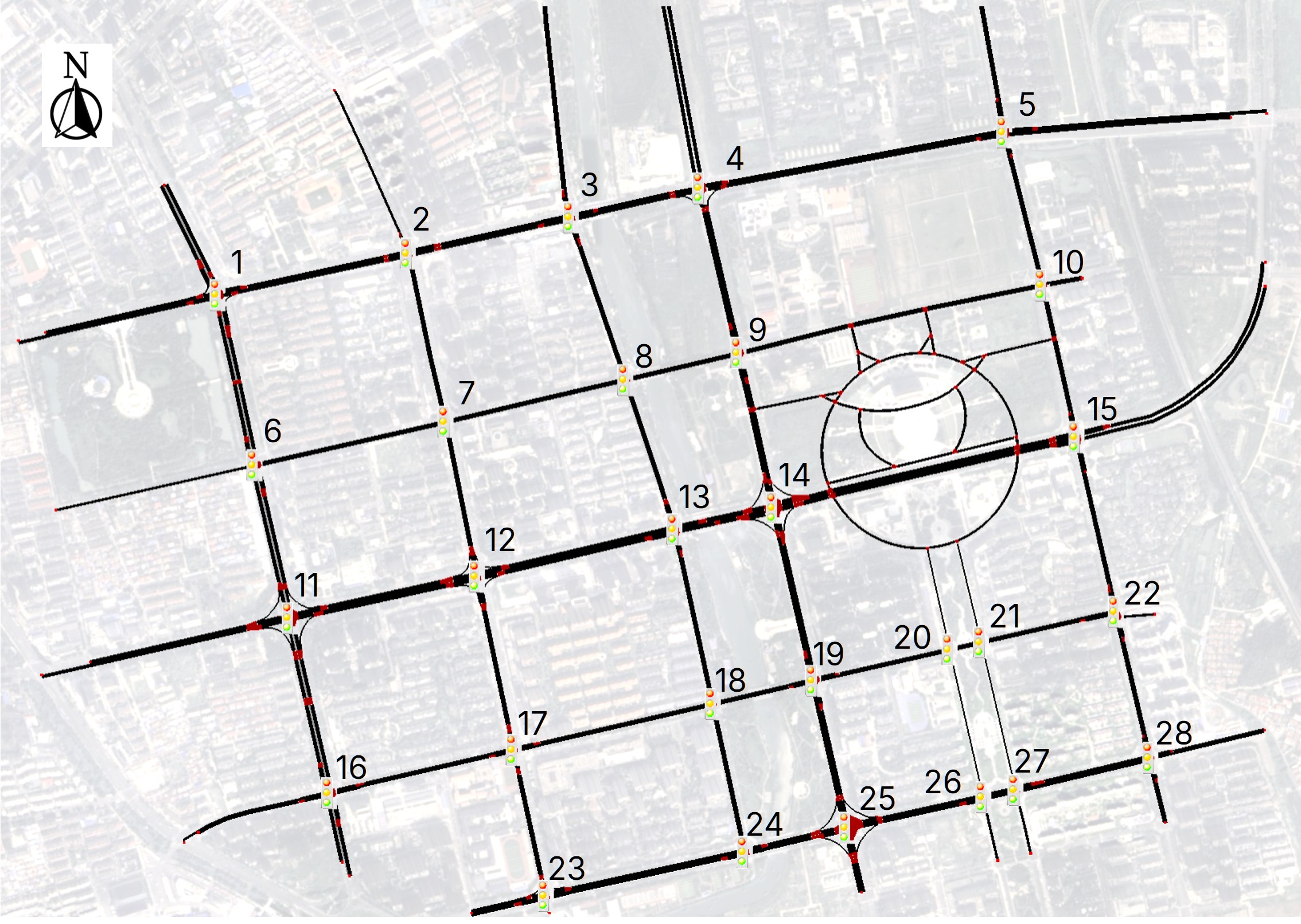} &
\includegraphics[width=0.47\linewidth]{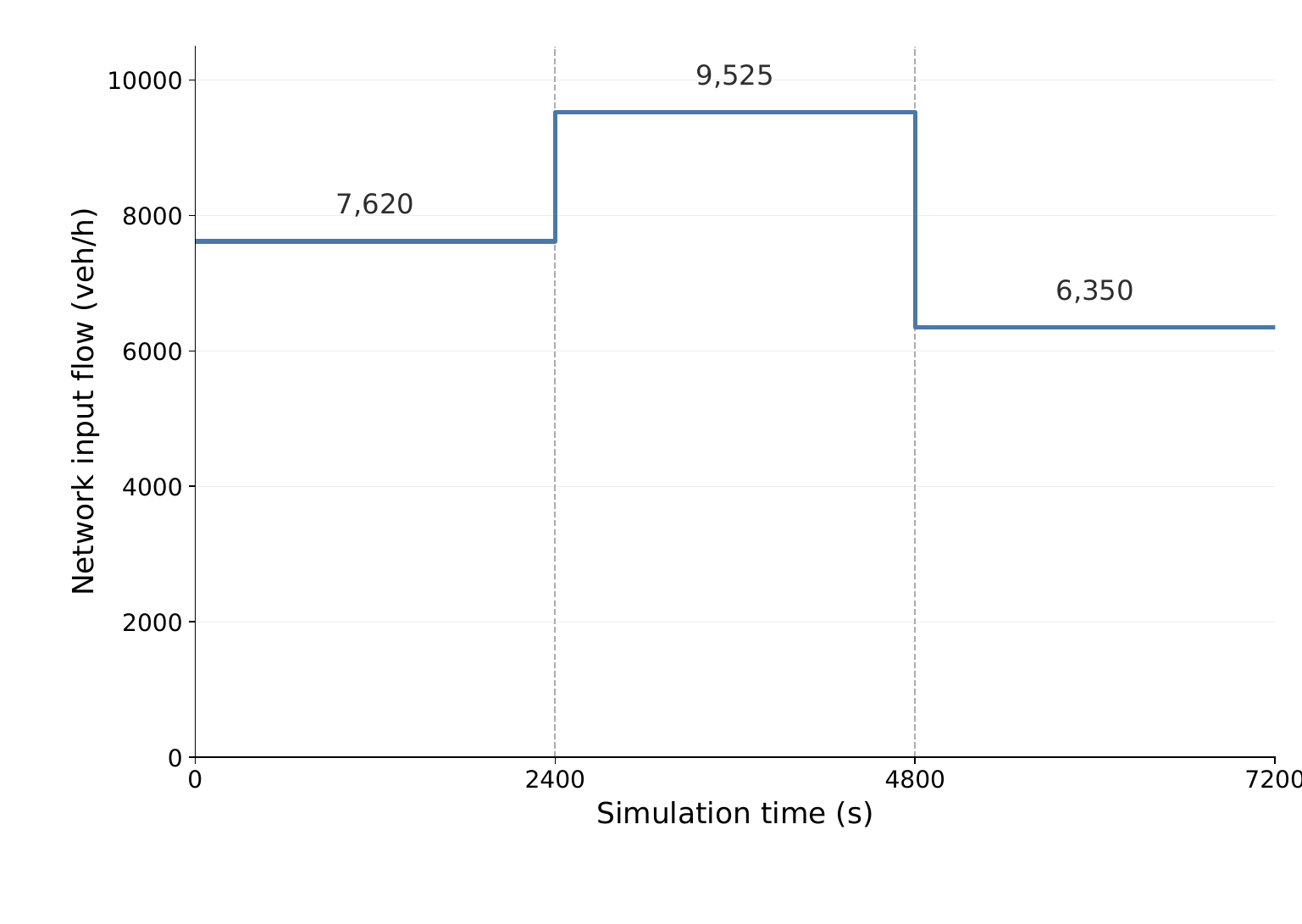} \\
{\small (a) Lianyungang test network.} & {\small (b) Time-varying network input demand.}
\end{tabular}
\caption{Microscopic simulation scenario. Panel (a) shows the 28-intersection Lianyungang test network, and Panel (b) shows the time-varying network input demand over the 7200~s control horizon.}
\label{fig:simulation_scenario}
\end{figure}

At each decision epoch, detector counts and flow conservation reconstruct each logical-link total, which is distributed among outgoing movements using turning proportions from a 300~s rolling window. The movement state is normalized by upstream-link length,
\(\widehat X_{ij}(t) = \frac{\widehat Q_{ij}(t)}{L_i},\)
with latent counterpart $X_{ij}(t)=Q_{ij}(t)/L_i$. This is the link-length special case of the fixed positive queue weighting in Appendix~\ref{app:weighted_stability}.

The uncertainty model is calibrated offline from three independent simulation runs, while three disjoint runs are used for baseline evaluation. At each intersection, residuals between the estimated and latent normalized movement states and turning ratios are reduced to eight representative joint scenarios. Stochastic Regret MP uses their empirical probabilities, whereas DR Regret MP allows the scenario probabilities to vary within a bootstrap-calibrated Wasserstein ambiguity set. The resulting calibration is fixed during each evaluation run.

Four controllers use the same link-length-weighted pressure described above: True and Point use the weighted counterparts of~\eqref{eq:true_state_mp} and~\eqref{eq:point_estimate_mp}, while Stochastic and DR use the corresponding weighted regret rules from~\eqref{eq:stochastic_controller} and~\eqref{eq:dr_controller}. Latent simulator states are used only for offline calibration and evaluation; Appendix~\ref{app:microscopic_estimator} gives estimator details.

The primary measures are mean delay (SUMO time loss), mean travel time, and the time-mean network total movement queue $\overline Q_{\mathrm{mov}}=|\mathcal I_{\mathrm{eval}}|^{-1}\sum_{t\in\mathcal I_{\mathrm{eval}}}\sum_m Q_m(t)$ for $\mathcal I_{\mathrm{eval}}=\{t:300\leq t\leq7200,\ t/15\in\mathbb Z\}$~s. Trip measures use completed vehicles; mean absolute error (MAE) summarizes estimation accuracy. We evaluate baseline performance, intervention timing, and sensing degradation.

\subsection{Baseline closed-loop performance}
\label{sec:simulation_baseline}

With complete unbiased detectors, Stochastic Regret MP and DR Regret MP reduce mean delay by 18.9\% and 14.9\%, respectively, relative to Point-estimate MP; both controllers also reduce travel time and the time-mean network total movement queue. True-state MP remains substantially better and therefore quantifies the performance still lost to imperfect state information. The computational cost is small relative to the 15~s update interval: a network-wide phase-selection update remains below 0.15~s for both proposed controllers.

\begin{figure}
\centering
\includegraphics[width=\linewidth]
{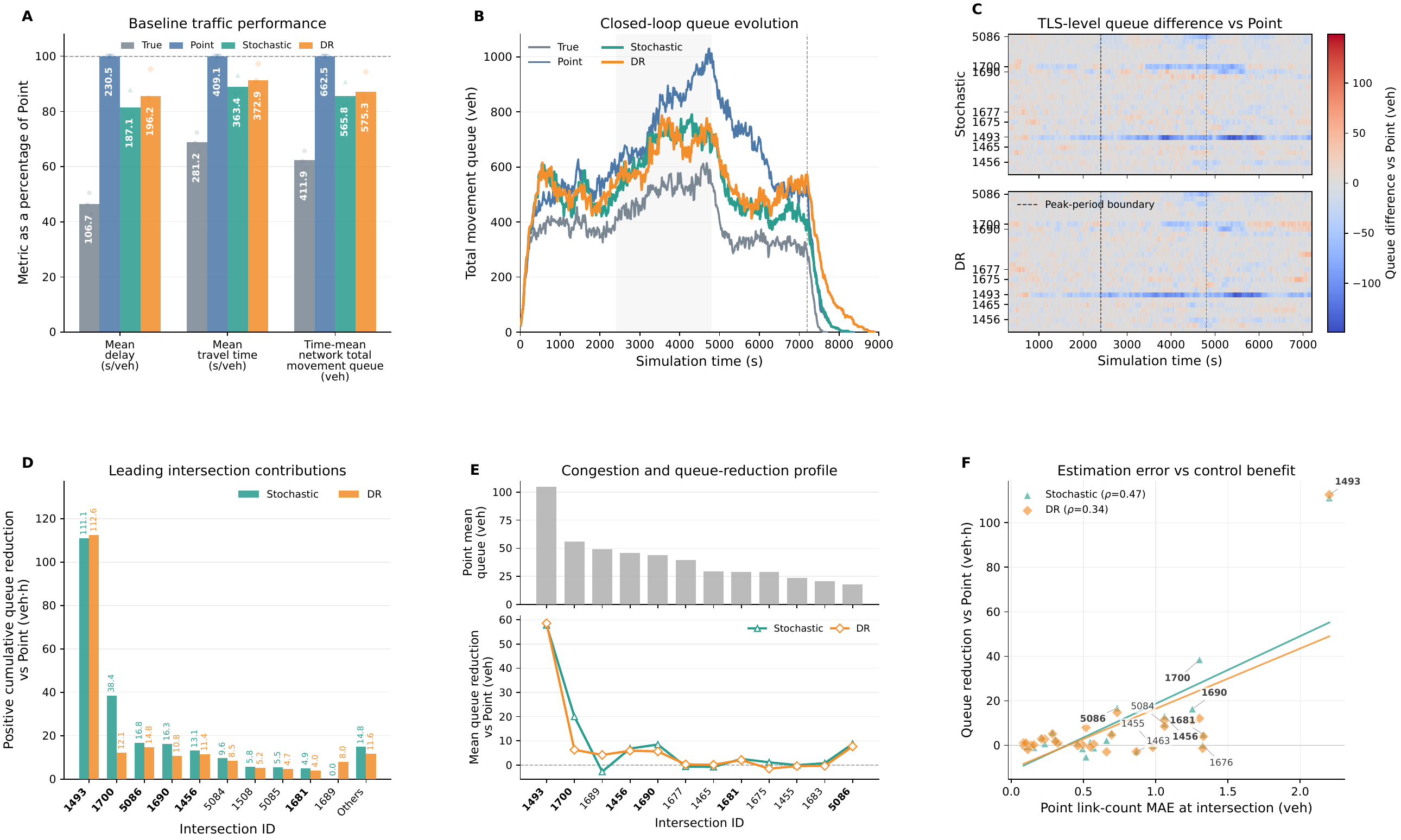}
\caption{Baseline closed-loop performance. Panel A compares mean delay, travel time, and time-mean network total movement queue over three evaluation runs as percentages of Point. Panel B shows the network total true movement queue for a representative evaluation run. Panel C shows intersection-level true queue differences relative to Point over time. Panel D reports the leading positive cumulative queue-reduction contributions. Panel E compares the Point mean queue with the mean queue reduction at the most congested intersections. Panel F relates Point estimation MAE to cumulative queue benefit. Intersection IDs appearing in the leading-benefit, high-queue, and high-error groups are shown in bold in Panels D--F.}
\label{fig:simulation_baseline_composite}
\end{figure}

Panel~B shows that the queue advantage does not arise from a single isolated decision. As congestion builds, repeated regret-aware phase choices alter movement discharge and downstream arrivals, so the subsequent traffic states under Stochastic Regret MP and DR Regret MP progressively diverge from the Point-estimate MP trajectory. The resulting benefit is therefore a closed-loop accumulation of locally improved phase rankings.

The spatial results in Panels~C--F sharpen this mechanism. For controller $\psi\in\{\mathrm{Sto},\mathrm{DR}\}$, let $Q_n^\psi(t)$ denote the total true movement queue at intersection $n$ and define its cumulative queue benefit relative to Point-estimate MP as $\mathrm{CQB}_n^\psi=\sum_{t\in\mathcal I_{\mathrm{eval}}}[Q_n^{\mathrm P}(t)-Q_n^\psi(t)]\Delta t$, with $\Delta t=15/3600$~h. Only five intersections account for 80\% of the positive Stochastic Regret MP benefit and six intersections account for 80\% of the positive DR Regret MP benefit. These high-impact locations also tend to have both larger queues and larger estimation errors. Hence the network gains are concentrated where an incorrect phase ranking is simultaneously more likely and more consequential, rather than being spread uniformly across intersections.

\subsection{Timing of regret-control intervention}
\label{sec:simulation_intervention}

The intervention experiment starts from the common Point-estimate MP trajectory and switches to Stochastic Regret MP or DR Regret MP at $t_s\in\{2400,3600,4800,6000\}$~s. All estimator, traffic, signal, and calibration states are carried through without reset, so the post-switch trajectories isolate the effect of changing the controller.

For controller $\psi\in\{\mathrm{Sto},\mathrm{DR}\}$, let $Q_{\mathrm{net}}^\psi(t)$ denote the network total true movement queue and define the post-intervention queue burden as $\mathrm{QB}^\psi(t_s)=3600^{-1}\int_{t_s}^{7200}Q_{\mathrm{net}}^\psi(t)\,dt$. The recoverable benefit is $\Delta\mathrm{QB}^\psi(t_s)=\mathrm{QB}^{\mathrm P}(t_s)-\mathrm{QB}^\psi(t_s)$, which measures how much cumulative congestion can still be avoided after the controller is activated.

\begin{figure}
\centering
\includegraphics[width=\linewidth]
{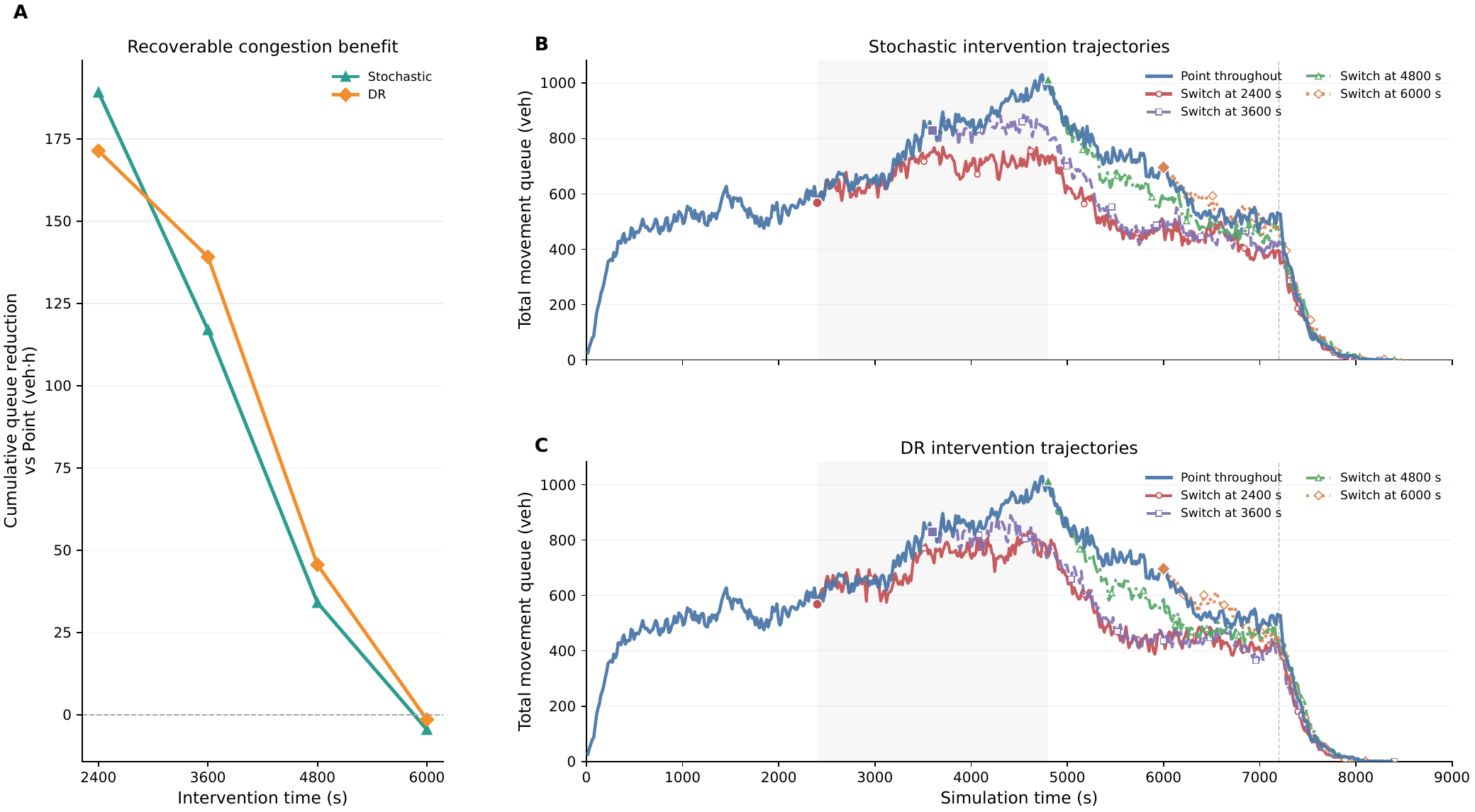}
\caption{Timing of regret-control intervention. Panel A shows the recoverable cumulative queue benefit relative to continuing Point. Panels B and C show the network total true movement-queue trajectories after switching from the common Point trajectory to Stochastic and DR, respectively, at different intervention times.}
\label{fig:simulation_intervention_composite}
\end{figure}

Figure~\ref{fig:simulation_intervention_composite} shows a strong timing effect. Activation at 2400~s recovers roughly 170--190~veh$\cdot$h of subsequent queue burden, whereas activation at 6000~s leaves too little time to offset congestion accumulated under Point-estimate MP. The trajectory panels explain why: earlier regret-aware decisions reshape downstream arrivals sooner, and these altered states affect many later decisions. The intervention benefit is therefore path dependent, with early local improvements propagating into much larger network-level gains.

\subsection{Performance under sensing degradation}
\label{sec:simulation_robustness}

The final experiment considers persistent detector undercount and detector failure. Each sensing condition is recalibrated from three runs and evaluated on a disjoint traffic realization across five spatial layouts, so the retained residual scenarios reflect the error structure produced by that sensing condition.

For detector undercount, a fixed subset of detectors is assigned persistent count bias throughout each run. We vary both the affected-detector fraction and the maximum undercount level, as shown in Fig.~\ref{fig:simulation_sensing_composite}. Stochastic Regret MP and DR Regret MP remain better than Point-estimate MP across the tested settings, indicating that scenario-based phase evaluation remains effective as systematic sensing bias increases state-estimation uncertainty.

\begin{figure}
\centering
\includegraphics[width=\linewidth]
{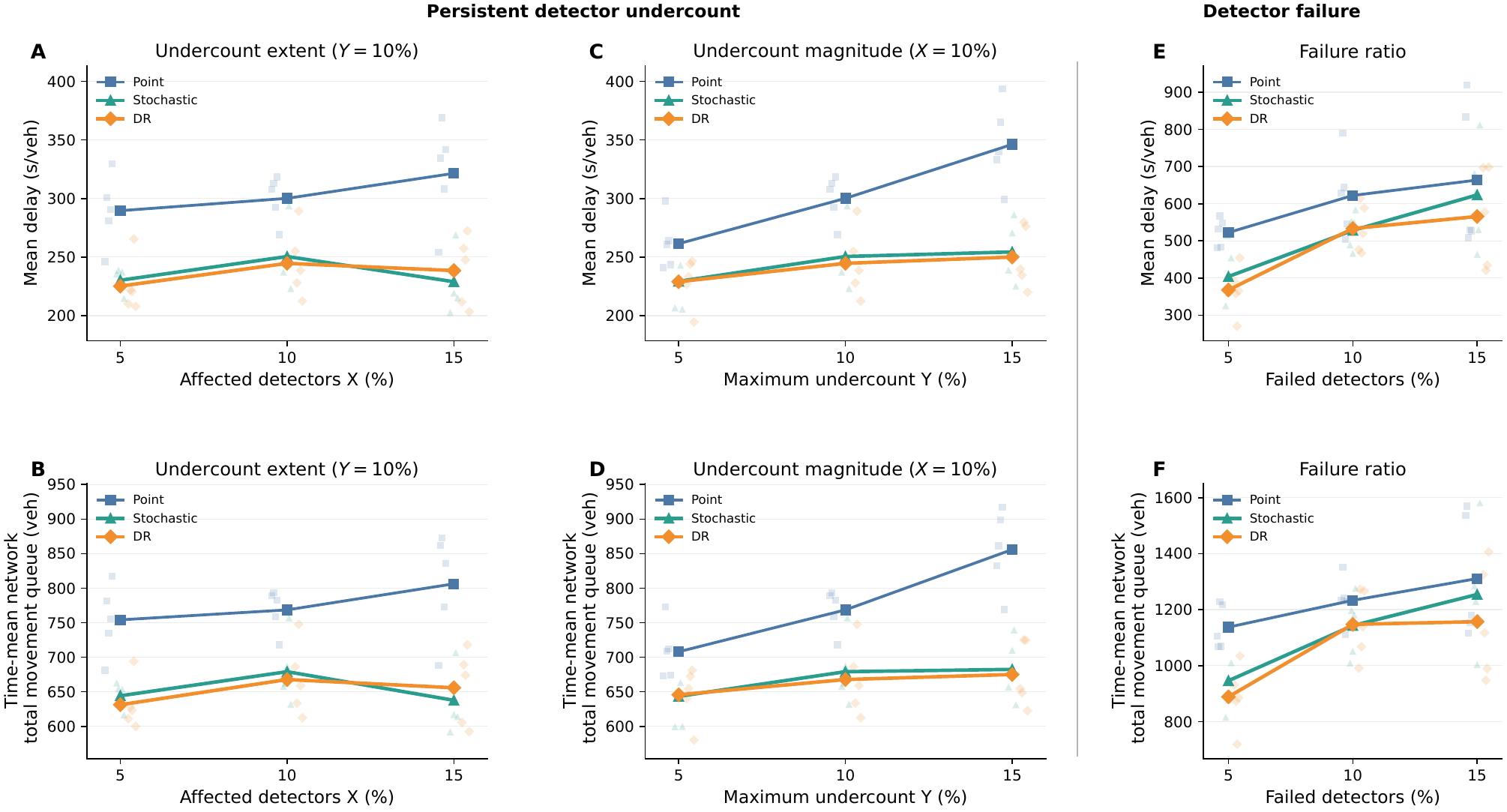}
\caption{Performance under sensing degradation. Panels A and B show mean delay and time-mean network total movement queue as the affected-detector fraction $p_{\mathrm{uc}}$ varies at $b_{\max}=10\%$; Panels C and D show the same measures as $b_{\max}$ varies at $p_{\mathrm{uc}}=10\%$. Panels E and F show mean delay and time-mean network total movement queue under increasing detector-failure rates. Faint points denote spatial detector realizations and lines denote realization means.}
\label{fig:simulation_sensing_composite}
\end{figure}

For detector failure, increasing fractions of eligible internal detectors are removed while network-entry detectors remain available; missing internal flows are reconstructed using the conservation-based estimator described in Appendix~\ref{app:microscopic_estimator}. The benefit of uncertainty-aware phase selection persists at the largest tested failure rate: with 15\% detector failure, Stochastic Regret MP and DR Regret MP reduce mean delay by 6.0\% and 14.8\%, respectively, relative to Point-estimate MP. The wider dispersion across spatial layouts also shows that performance becomes increasingly sensitive to where sensing information is lost.

Across the microscopic experiments, the same mechanism appears at three levels: the largest gains occur at congested, poorly estimated intersections; those local gains accumulate through closed-loop traffic evolution; and the benefit persists when sensing quality deteriorates. The SUMO results therefore connect the decision-level insight from the numerical experiments to network-level congestion reduction under a realistic detector-based implementation.

\section{Conclusion}
\label{sec:conclusion}

This paper developed a decision-oriented treatment of traffic-state uncertainty for max-pressure control. Phase-selection regret measures the pressure loss of a candidate phase relative to the state-specific pressure-maximizing phase, allowing heterogeneous uncertainty in queues, turning proportions, and service conditions to be evaluated on a common decision-loss scale. Stochastic Regret MP minimizes expected regret under the calibrated residual distribution, whereas DR Regret MP protects the same decision-relative objective against ambiguity in scenario probabilities. The analysis shows that expected-regret minimization preserves the expected-score decision under a fixed distribution, while regret removes state-dependent score components shared by all actions before uncertainty aggregation. Under distributional ambiguity, this property can make regret-based and absolute-pressure robustification select different phases. The Lyapunov analysis further establishes strong queue stability when the conditional net-service reserve exceeds the queue-linear uncertainty-loss coefficient, and the regret-margin result identifies when uncertainty perturbations cannot change the selected phase.

The experiments support these mechanisms at both decision and network levels. In the numerical experiments, Stochastic Regret MP and DR Regret MP reduce normalized true-state pressure regret by about 70\% and mean network queues by about 15\% relative to Point-estimate MP. The formulation and sensitivity tests show that the robustification target matters, but that differences among uncertainty treatments mainly affect low-margin phase decisions. In the 28-intersection SUMO network, Stochastic Regret MP and DR Regret MP reduce mean delay by 18.9\% and 14.9\%, respectively. The largest gains occur where heavy congestion coincides with larger estimation errors, and the intervention experiment shows that early local improvements accumulate through downstream traffic evolution. Both proposed controllers also retain benefits under detector undercount and detector failure, indicating that the decision-relative uncertainty representation remains useful as sensing quality deteriorates.

Several extensions remain important. First, future work can update the residual distribution and Wasserstein radius online as traffic regimes and sensing quality change, together with stability analysis for time-varying uncertainty models. Second, spatial dependence among neighboring estimation errors can be incorporated while preserving decentralized computation. Third, the framework can be extended to cyclic MP and joint phase-service allocation \citep{levin2020cyclic}. Finally, validation on larger heterogeneous networks, field data, and additional sensing modalities is needed to assess the transferability of residual calibration, ambiguity selection, and computational performance beyond the present simulation settings.

\section*{Declaration of generative AI and AI-assisted technologies in the writing process}

During the preparation of this work, the authors used ChatGPT in order to improve the language and assist with programming. After using this tool/service, the authors reviewed and edited the content as needed and take full responsibility for the content of the publication.

\bibliographystyle{plainnat}
\bibliography{dr_mp_references}

\newpage

\appendix

\section{Residual Calibration and Scenario Construction}
\label{app:residual_calibration}

This appendix gives the residual-calibration and scenario-reduction details summarized in Section~\ref{sec:uncertain_state}. Each calibration record pairs the traffic-state estimate with the corresponding latent traffic state. For calibration record $r=1,\ldots,N_{\mathrm{cal}}$, let $\bm\xi^r=\operatorname{col}(\bm\xi^{Q,r},\bm\xi^{R,r},\bm\xi^{C,r})$, where, for positive queue and service observations, $\xi_m^{Q,r}=\widehat Q_m^r/Q_m^r-1$, $\xi_m^{C,r}=\widehat c_m^r/c_m^r-1$, and $\bm\Xi^{R,r}=\widehat{\bm R}^r-\bm R^r$. When $Q_m^r=0$, the zero-queue state is taken as directly identifiable, so we set $\widehat Q_m^r=0$ and use $\xi_m^{Q,r}=0$ by convention. The empirical residual measure is $\widehat{\mathbb P}_{N_{\mathrm{cal}}}=\sum_{r=1}^{N_{\mathrm{cal}}}\omega_r\delta_{\bm\xi^r}$, where $\omega_r\geq0$ and $\sum_r\omega_r=1$.
Uniform $\omega_r=1/N_{\mathrm{cal}}$ gives the standard empirical distribution, while nonuniform weights can be used for regime balancing or time-of-day reweighting.

Routing realizations are constrained to the row-substochastic set
$$
\mathcal R
    =
    \left\{
        \bm R\succeq\bm0:
        \bm R\bm1\preceq\bm1,
        \ r_{ab}=0\ \text{for infeasible transitions}
    \right\}.
$$
This set is consistent with the convention that $r_{ab}$ denotes routing from movement $a$ to downstream movement $b$.

Because the residual components have different numerical scales, let $d_\xi$ denote the residual-vector dimension and let $\bm S_{\xi}=\operatorname{diag}(s_1,\ldots,s_{d_\xi})$ with $s_\ell>0$. Define the normalized weighted $L_1$ metric $d(\bm\xi,\bm\zeta)=d_\xi^{-1}\|\bm S_\xi^{-1}(\bm\xi-\bm\zeta)\|_1$.
The scales can be chosen from robust empirical spreads or physically meaningful error ranges.

For online control, the empirical distribution is compressed to $K_{\mathrm{sc}}$ representative joint residuals
\(\bar\Xi_{K_{\mathrm{sc}}} = \{\bar{\bm\xi}^{1},\ldots,\bar{\bm\xi}^{K_{\mathrm{sc}}}\} \subseteq \{\bm\xi^1,\ldots,\bm\xi^{N_{\mathrm{cal}}}\}.\)
Each calibration record is assigned to its closest representative,
\(\iota(r) \in \arg\min_{k=1,\ldots,K_{\mathrm{sc}}} d(\bm\xi^r,\bar{\bm\xi}^{k}),\)
and the representative probabilities are
\(\widehat\pi_k = \sum_{r:\iota(r)=k}\omega_r.\)
The resulting reduced empirical measure is $\widehat{\mathbb P}_{K_{\mathrm{sc}}}=\sum_{k=1}^{K_{\mathrm{sc}}}\widehat\pi_k\delta_{\bar{\bm\xi}^{k}}$.
A forward scenario-reduction procedure can select the representative set by successively reducing weighted quantization distortion \citep{dupacova2003scenario,heitsch2003scenario}. Because complete residual vectors are retained, dependence among queue, routing, and service errors is preserved.

For each intersection, the reduced residuals are projected to the controlled movements and the immediate downstream movements required by the local pressure calculation. Duplicate local projections are merged and their probabilities added, yielding the local nominal distribution $\widehat{\mathbb P}_n$ in~\eqref{eq:local_nominal_distribution}. The same local metric induces the pairwise scenario distances $d_{n,ij}$ used by the Wasserstein ambiguity set in Section~\ref{sec:dr_regret}.

\section{Proofs and Derivations for Regret-Based Control}
\label{app:regret_control_proofs}

\subsection{Score-gap equivalence and robustification structure}
\label{app:regret_structural_proofs}

\begin{proof}[Proof of Proposition~\ref{prop:fixed_distribution_equivalence}]
By definition, $L(a,\bm\xi)=M(\bm\xi)-S(a,\bm\xi)$. Therefore, for every $a\in\mathcal A$,
$$
\mathbb E_{\mathbb P}[L(a,\bm\xi)]
=
\mathbb E_{\mathbb P}[M(\bm\xi)]
-
\mathbb E_{\mathbb P}[S(a,\bm\xi)].
$$
The first term is independent of $a$, so minimizing expected regret and maximizing expected score have exactly the same optimizer set. This proves~\eqref{eq:fixed_distribution_equivalence}.
\end{proof}

\begin{proof}[Proof of Proposition~\ref{prop:common_shift_invariance}]
Let $\widetilde M(\bm\xi)=\max_{u\in\mathcal A}\widetilde S(u,\bm\xi)$. Since the added term is independent of the action,
$$
\widetilde M(\bm\xi)
=
\max_u\{S(u,\bm\xi)+c(\bm\xi)\}
=
M(\bm\xi)+c(\bm\xi).
$$
Hence
$$
\widetilde L(a,\bm\xi)
=
\widetilde M(\bm\xi)-\widetilde S(a,\bm\xi)
=
M(\bm\xi)-S(a,\bm\xi)
=
L(a,\bm\xi).
$$
Any fixed-distribution expectation, supremum of expected regret over a fixed ambiguity set, or support-wise maximum applied to the same regret function is therefore unchanged.
\end{proof}

\begin{proof}[Proof of Proposition~\ref{prop:robustification_distinction}]
If $\mathbb E_{\mathbb P}[M(\bm\xi)]=\bar M$ for every $\mathbb P\in\mathfrak P$, then for each action $a$,
$$
J_{\mathrm{reg}}(a;\mathfrak P)
=
\sup_{\mathbb P\in\mathfrak P}\{\bar M-\mathbb E_{\mathbb P}[S(a,\bm\xi)]\}
=
\bar M-\inf_{\mathbb P\in\mathfrak P}\mathbb E_{\mathbb P}[S(a,\bm\xi)]
=
\bar M-J_{\mathrm{abs}}(a;\mathfrak P),
$$
which proves~\eqref{eq:robust_equivalence_condition} and the equality of the optimizer sets after reversing the sign.

To show that decision equivalence need not hold in general, consider two actions $a_1,a_2$, two states $\xi_1,\xi_2$, and the score table
$$
\begin{array}{c|cc}
 & \xi_1 & \xi_2\\ \hline
a_1 & 2 & 98\\
a_2 & 0 & 102
\end{array}.
$$
Then $M(\xi_1)=2$, $M(\xi_2)=102$, and the corresponding regrets are
$$
\begin{array}{c|cc}
 & \xi_1 & \xi_2\\ \hline
a_1 & 0 & 4\\
a_2 & 2 & 0
\end{array}.
$$
Let the nominal distribution be $(1/2,1/2)$, let the distance between the two states be one, and let $\mathfrak P(\varepsilon)$ be the resulting discrete 1-Wasserstein ball with $0\leq\varepsilon\leq1/2$. Writing $p_1=\mathbb P(\xi_1)$, the ambiguity set is $p_1\in[1/2-\varepsilon,1/2+\varepsilon]$. Therefore
$$
J_{\mathrm{reg}}(a_1;\mathfrak P(\varepsilon))=2+4\varepsilon,
\qquad
J_{\mathrm{reg}}(a_2;\mathfrak P(\varepsilon))=1+2\varepsilon,
$$
so robust regret selects $a_2$. For the absolute scores,
$$
J_{\mathrm{abs}}(a_1;\mathfrak P(\varepsilon))=50-96\varepsilon,
\qquad
J_{\mathrm{abs}}(a_2;\mathfrak P(\varepsilon))=51-102\varepsilon.
$$
Whenever $\varepsilon>1/6$, $J_{\mathrm{abs}}(a_1;\mathfrak P(\varepsilon))>J_{\mathrm{abs}}(a_2;\mathfrak P(\varepsilon))$, so robust absolute-score maximization selects $a_1$. Thus the two robustification criteria can select different actions under the same finite Wasserstein ambiguity set.
\end{proof}

\begin{proof}[Proof of Lemma~\ref{lem:dr_regret_dual}]
Associate the nonnegative multiplier $\nu$ with the Wasserstein transport-budget constraint in~\eqref{eq:dr_regret_transport} and a free multiplier $\theta_i$ with each row-sum constraint. The dual of the finite transport linear program is
$$
\begin{array}{ll}
\displaystyle \min_{\nu\geq0,\bm\theta} & \displaystyle \nu\varepsilon_n+\sum_i\widehat\pi_{n,i}\theta_i\\[2pt]
\text{s.t.} & \theta_i\geq L_{n,j}(\bm s_n,t)-\nu d_{n,ij},\quad \forall i,j.
\end{array}
$$
For fixed $\nu$, the smallest feasible row multiplier is $\theta_i=\max_j\{L_{n,j}(\bm s_n,t)-\nu d_{n,ij}\}=g_{n,i}(\nu;\bm s_n,t)$. Substitution gives~\eqref{eq:dr_regret_dual}. Strong duality applies because~\eqref{eq:dr_regret_transport} is a feasible and bounded finite linear program.
\end{proof}

\begin{proof}[Proof of Proposition~\ref{prop:regret_ordering}]
Because $\widehat{\bm\pi}_n\in\mathcal P_n(\varepsilon_n)$, the supremum over the ambiguity set cannot be smaller than the expected regret evaluated at the calibrated distribution. Moreover, every $\bm p_n\in\mathcal P_n(\varepsilon_n)\subseteq\Delta_{K_n}$ produces a convex combination of the scenario regrets and therefore cannot exceed their maximum. Hence $G_n^{\mathrm{Sto}}\leq G_n^{\mathrm{DR}}\leq G_n^{\mathrm{Set}}$, proving~\eqref{eq:regret_ordering}.
\end{proof}

\begin{proof}[Proof of Proposition~\ref{prop:dr_regret_limits}]
When $\varepsilon_n=0$, the discrete Wasserstein ball contains only $\widehat{\bm\pi}_n$, so the DR score equals the stochastic score. For any $j\in\mathcal K_n^{\max}(\bm s_n,t)$, concentrating all nominal probability mass on scenario $j$ has transport cost $\sum_i\widehat\pi_{n,i}d_{n,ij}$. Therefore, if $\varepsilon_n\geq\varepsilon_{n,\bm s}^{\mathrm{Set}}(t)$, the ambiguity set contains a distribution concentrated on a maximum-regret scenario. Its expected regret is $G_n^{\mathrm{Set}}(\bm s_n,t)$; Proposition~\ref{prop:regret_ordering} then forces equality with the DR score.
\end{proof}

\section{Proofs and Extensions for Network Stability}
\label{app:network_stability_proofs}

This appendix provides the technical results underlying Section~\ref{sec:theory}. The derivations follow the order of the main analysis: the common drift framework, conditional pressure approximation, the queue-linear pressure-loss bound and strong stability, and the comparison with Point-estimate MP. The fixed-positive-weight extension is given at the end.

\subsection{Drift framework and perfect-information benchmark}
\label{app:drift_proof}
\label{app:perfect_info_proofs}

\begin{proof}[Proof of Lemma~\ref{lem:quadratic_drift}]
Define $\bm Y(t)=\bm A(t)-[\bm I-\bm R^\top(t)]\bm D(t)$. Then $\bm Q(t+1)=\bm Q(t)+\bm Y(t)$ and
\begin{equation}
    V(\bm Q(t+1))-V(\bm Q(t))
    =
    \bm Q^\top(t)\bm Y(t)
    +
    \frac12\|\bm Y(t)\|_2^2.
    \label{eq:app_exact_drift}
\end{equation}
Because each routing row is nonnegative with sum at most one, the finite network admits $K_R=\sup_t\|\bm I-\bm R^\top(t)\|_2<\infty$. Moreover, $0\leq D_m(t)\leq c_m^{\max}$ and $\|\bm D(t)\|_2^2\leq\sum_m(c_m^{\max})^2\triangleq\bar C_2$. Using $\|\bm a-\bm b\|_2^2\leq2\|\bm a\|_2^2+2\|\bm b\|_2^2$ and Assumption~\ref{ass:bounded_primitives},
$$
\frac12
\mathbb E[\|\bm Y(t)\|_2^2\mid\mathcal H_t]
\leq
\bar A_2+K_R^2\bar C_2
\triangleq B_{\mathrm q}.
$$
Since $\bm Q(t)$ is $\mathcal H_t$-measurable,
$\mathbb E[\bm Q^\top(t)\bm A(t)\mid\mathcal H_t]=\bm Q^\top(t)\bm\lambda$.

To relate realized discharge to potential pressure, write $\bm U=\bm C\bm s$ and $\bm e^{\mathrm{us}}=\bm U-\bm D\succeq\bm0$. Then
$$
\bm Q^\top(\bm I-\bm R^\top)\bm D
=
H(\bm s,t)-\bm Q^\top(\bm I-\bm R^\top)\bm e^{\mathrm{us}}
\geq
H(\bm s,t)-\bm Q^\top\bm e^{\mathrm{us}}.
$$
Whenever $e_m^{\mathrm{us}}>0$, $Q_m<U_m\leq c_m^{\max}$ and $e_m^{\mathrm{us}}\leq c_m^{\max}$, so $Q_me_m^{\mathrm{us}}\leq(c_m^{\max})^2$. Hence, with $B_{\mathrm u}=\sum_m(c_m^{\max})^2$,
$$
\bm Q^\top(\bm I-\bm R^\top)\bm D
\geq
H(\bm s,t)-B_{\mathrm u}.
$$
Substitution into~\eqref{eq:app_exact_drift} proves~\eqref{eq:initial_drift_bound} with the valid choice $B_0=B_{\mathrm q}+B_{\mathrm u}$.
\end{proof}

\begin{proof}[Proof of Corollary~\ref{cor:perfect_information_benchmark}]
When uncertainty collapses to a single latent state, phase regret is
$$
\max_{\bm u_n\in\mathcal S_n}H_n(\bm u_n,t)-H_n(\bm s_n,t).
$$
The first term is independent of the candidate phase, so minimizing regret is equivalent to maximizing the true-state phase pressure. This establishes the phase-selection statement for Stochastic, DR, and Set Regret MP.

For stability, let $\mu\in\mathfrak U_{\mathrm{phy}}$ satisfy $\bar{\bm Z}^{\mu}\succeq\bm\lambda+\epsilon_0\bm1$, and let $\bm s^{\mathrm T}(t)$ denote the resulting true-state MP phase. For every realization of the current network-service state,
\begin{equation}
H(\bm s^{\mathrm T}(t),t)\geq H(\bm s^{\mu}(t),t).
\label{eq:instantaneous_true_mp_domination}
\end{equation}
Let $\mathcal F_t^{-}\subseteq\mathcal H_t$ denote the queue and history available before the current i.i.d. pair $(\bm R(t),\bm C(t))$ is realized. Because this pair is exogenous and i.i.d., $\bm Q(t)$ is $\mathcal F_t^{-}$-measurable and $\mathbb E[\bm Z^\mu(t)\mid\mathcal F_t^{-}]=\bar{\bm Z}^{\mu}$. Therefore,
$$
\mathbb E\!\left[
H(\bm s^{\mathrm T}(t),t)-\bm Q^\top(t)\bm\lambda
\mid\mathcal F_t^{-}
\right]
\geq
\bm Q^\top(t)(\bar{\bm Z}^{\mu}-\bm\lambda)
\geq
\epsilon_0\|\bm Q(t)\|_1.
$$
Taking the conditional expectation of~\eqref{eq:initial_drift_bound} with respect to $\mathcal F_t^{-}$ and then expectations gives
$\mathbb E[\Delta V(t)]\leq B_0-\epsilon_0\mathbb E\|\bm Q(t)\|_1$.
Summing over time, using $V(\bm Q(T))\geq0$, dividing by $T\epsilon_0$, and taking the limit superior yields~\eqref{eq:perfect_information_queue_bound}.
\end{proof}

\subsection{Conditional pressure approximation}
\label{app:conditional_regret_bounds}

\begin{proof}[Proof of Lemma~\ref{lem:conditional_regret_pressure}]
Taking the conditional expectation of~\eqref{eq:continuous_regret} gives
$$
\bar L_n(\bm s_n,t)
=
\mathbb E\!\left[
\max_{\bm u_n\in\mathcal S_n}H_n(\bm u_n,\bm\xi_n(t),t)
\mid\mathcal G_t
\right]
-
\bar H_n(\bm s_n,t).
$$
The first term is independent of the candidate phase. Minimizing $\bar L_n$ is therefore equivalent to maximizing $\bar H_n$, and subtracting the identity for a conditional-pressure maximizer $\bm s_n^\star$ from that for an arbitrary phase gives the stated gap relation.
\end{proof}

The following technical lemma bounds the difference between the aggregate regret used by each controller and the true conditional regret.
\begin{lemma}[Finite-scenario regret approximation]
\label{lem:regret_score_bounds}
Under Assumption~\ref{ass:coverage}, for every candidate phase $\bm s_n$,
\begin{equation}
    \left|G_n^{\mathrm{Sto}}-\bar L_n\right|
    \leq
    (\varepsilon_n^{\mathrm{cov}}+\tau_n)\Lambda_n(t),
    \label{eq:sto_true_regret_bound}
\end{equation}
\begin{equation}
    -\tau_n\Lambda_n(t)
    \leq G_n^{\mathrm{DR}}-\bar L_n
    \leq
    (\varepsilon_n+\varepsilon_n^{\mathrm{cov}}+\tau_n)\Lambda_n(t),
    \label{eq:dr_true_regret_bound}
\end{equation}
and
\begin{equation}
    -\tau_n\Lambda_n(t)
    \leq G_n^{\mathrm{Set}}-\bar L_n
    \leq
    (d_n^{\max}+\tau_n)\Lambda_n(t),
    \label{eq:set_true_regret_bound}
\end{equation}
where the common arguments $(\bm s_n,t)$ of the regret scores are suppressed.
\end{lemma}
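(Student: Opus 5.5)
The plan is to insert an intermediate quantity, namely the expected regret under the \emph{current} representative distribution $\bm\pi_n(t)$ evaluated at the retained scenarios:
\[
\widetilde L_n(\bm s_n,t)=\sum_{k=1}^{K_n}\pi_{n,k}(t)\,L_{n,k}(\bm s_n,t)=\mathbb E\!\left[L_n\bigl(\bm s_n,\bar{\bm\xi}_n^{\iota_n(\bm\xi_n(t))},t\bigr)\mid\mathcal G_t\right].
\]
Each controller score is then compared with $\bar L_n$ through the split $G_n^\psi-\bar L_n=(G_n^\psi-\widetilde L_n)+(\widetilde L_n-\bar L_n)$. The first bracket is a probability-mismatch term. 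The second is a quantization term, common to all three controllers.

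\textbf{Step 1 (quantization term).} Since $L_{n,k}(\bm s_n,t)=L_n(\bm s_n,\bar{\bm\xi}_n^k,t)$, the Lipschitz modulus~\eqref{eq:regret_lipschitz_modulus} gives pointwise
\[
|L_n(\bm s_n,\bm\xi_n,t)-L_n(\bm s_n,\bar{\bm\xi}_n^{\iota_n(\bm\xi_n)},t)|\le\Lambda_n(t)\,d_n(\bm\xi_n,\bar{\bm\xi}_n^{\iota_n(\bm\xi_n)}).
\]
Taking the conditional expectation given $\mathcal G_t$, where $\Lambda_n(t)$ is $\mathcal G_t$-measurable because it depends on the estimate only, and applying~\eqref{eq:quantization_distortion_bound} yields
\[
|\widetilde L_n-\bar L_n|\le\tau_n\Lambda_n(t).
\]

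\textbf{Step 2 (mismatch term), by Kantorovich--Rubinstein duality on the finite support.} The map $k\mapsto L_{n,k}(\bm s_n,t)$ is $\Lambda_n(t)$-Lipschitz with respect to $d_{n,ij}$. Hence, for any $\bm p,\bm q\in\Delta_{K_n}$ and any transport plan $\bm\varphi$ between them,
\[
\Bigl|\sum_k(p_k-q_k)L_{n,k}\Bigr|=\Bigl|\sum_{i,j}\varphi_{ij}(L_{n,j}-L_{n,i})\Bigr|\le\Lambda_n(t)\,W_n(\bm p,\bm q).
\]
The three controllers then follow.
\begin{itemize}
\item \textbf{Sto:} take $\bm p=\widehat{\bm\pi}_n$ and $\bm q=\bm\pi_n(t)$. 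Condition~\eqref{eq:coverage_condition} gives a mismatch of at most $\varepsilon_n^{\mathrm{cov}}\Lambda_n(t)$, which together with Step 1 proves~\eqref{eq:sto_true_regret_bound}.
\item \textbf{DR, lower bound:} since $\varepsilon_n\ge\varepsilon_n^{\mathrm{cov}}$, the current distribution $\bm\pi_n(t)$ is feasible in $\mathcal P_n(\varepsilon_n)$, so $G_n^{\mathrm{DR}}\ge\widetilde L_n$. With Step 1 this gives $G_n^{\mathrm{DR}}-\bar L_n\ge-\tau_n\Lambda_n$.
\item \textbf{DR, upper bound:} for any $\bm p\in\mathcal P_n(\varepsilon_n)$, the triangle inequality for $W_n$ gives $W_n(\bm p,\bm\pi_n(t))\le\varepsilon_n+\varepsilon_n^{\mathrm{cov}}$. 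Hence $\sum_kp_kL_{n,k}\le\widetilde L_n+(\varepsilon_n+\varepsilon_n^{\mathrm{cov}})\Lambda_n$. Taking the supremum over $\bm p$ and adding Step 1 gives~\eqref{eq:dr_true_regret_bound}.
\item \textbf{Set, lower bound:} the maximum over scenarios dominates any convex combination, so $G_n^{\mathrm{Set}}\ge\widetilde L_n$.
\item \textbf{Set, upper bound:} any point mass lies within Wasserstein distance $d_n^{\max}$ of $\bm\pi_n(t)$, since every transported unit travels at most the support diameter. Equivalently, $L_{n,j}-L_{n,i}\le\Lambda_n d_n^{\max}$ for all $i,j$, so $\max_jL_{n,j}-\widetilde L_n\le d_n^{\max}\Lambda_n$. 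This part needs no coverage condition, which matches the reduced requirement stated for Set Regret in Assumption~\ref{ass:coverage}.
\end{itemize}

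\textbf{Main obstacle.} The delicate points are in Step 2. One must verify that $W_n$ as defined in~\eqref{eq:discrete_wasserstein} satisfies the triangle inequality; this holds because $d_n$ is a metric and plans can be glued on the finite support. One must also check that $\Lambda_n(t)$, defined as a supremum over the whole of $\Xi_n$, controls differences between the retained scenarios; this requires the retained scenarios to lie in $\Xi_n$, which holds because they are calibration residuals. A further detail is whether $\bm\pi_n(t)$ depends on $\bm s_n$. It does not, since it is defined through $\iota_n$ and $\mathcal G_t$ only.
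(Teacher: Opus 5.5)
Your proposal is correct and follows essentially the same argument as the paper. Your $\widetilde L_n$ is the paper's conditional representative regret $\bar L_n^{\mathrm{rep}}$, and the two steps coincide: the quantization bound via~\eqref{eq:regret_lipschitz_modulus} and~\eqref{eq:quantization_distortion_bound}, then the Lipschitz--Wasserstein mismatch bound (coverage for Sto, feasibility of $\bm\pi_n(t)$ plus the Wasserstein triangle inequality for DR, and the support-range bound for Set). The only differences are that you prove directly the easy coupling direction that the paper cites as Kantorovich--Rubinstein duality, and you make explicit the triangle-inequality and scenario-membership details the paper uses implicitly.
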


\begin{proof}[Proof of Lemma~\ref{lem:regret_score_bounds}]
Let $\bm\xi_n^{\mathrm{rep}}(t)=\bar{\bm\xi}_n^{\iota_n(\bm\xi_n(t))}$ and define the conditional representative regret
$$
\bar L_n^{\mathrm{rep}}(\bm s_n,t)
=
\sum_{k=1}^{K_n}\pi_{n,k}(t)L_{n,k}(\bm s_n,t).
$$
By~\eqref{eq:regret_lipschitz_modulus} and~\eqref{eq:quantization_distortion_bound},
\begin{equation}
\left|
\bar L_n(\bm s_n,t)-\bar L_n^{\mathrm{rep}}(\bm s_n,t)
\right|
\leq
\tau_n\Lambda_n(t).
\label{eq:continuous_quantized_regret_bound}
\end{equation}
On the retained support, the scenario-regret function is $\Lambda_n(t)$-Lipschitz with respect to $d_n$. Kantorovich--Rubinstein duality therefore gives, for probability vectors $\bm p$ and $\bm q$ on this support,
\begin{equation}
\left|
\sum_kp_kL_{n,k}-\sum_kq_kL_{n,k}
\right|
\leq
\Lambda_n(t)W_n(\bm p,\bm q).
\label{eq:kr_regret}
\end{equation}
For Stochastic Regret MP, applying~\eqref{eq:kr_regret} to $\widehat{\bm\pi}_n$ and $\bm\pi_n(t)$ and then using~\eqref{eq:coverage_condition} and~\eqref{eq:continuous_quantized_regret_bound} proves~\eqref{eq:sto_true_regret_bound}.

For DR Regret MP, $\varepsilon_n\geq\varepsilon_n^{\mathrm{cov}}$ makes $\bm\pi_n(t)$ feasible in $\mathcal P_n(\varepsilon_n)$, so $G_n^{\mathrm{DR}}\geq\bar L_n^{\mathrm{rep}}$. For any $\bm p_n\in\mathcal P_n(\varepsilon_n)$,
$$
W_n(\bm p_n,\bm\pi_n(t))
\leq
W_n(\bm p_n,\widehat{\bm\pi}_n)
+
W_n(\widehat{\bm\pi}_n,\bm\pi_n(t))
\leq
\varepsilon_n+\varepsilon_n^{\mathrm{cov}}.
$$
Equation~\eqref{eq:kr_regret} then gives
$$
0\leq G_n^{\mathrm{DR}}-\bar L_n^{\mathrm{rep}}
\leq
(\varepsilon_n+\varepsilon_n^{\mathrm{cov}})\Lambda_n(t),
$$
which together with~\eqref{eq:continuous_quantized_regret_bound} proves~\eqref{eq:dr_true_regret_bound}.

For Set Regret, the difference between the maximum representative regret and any convex combination of representative regrets is bounded by the range of the regrets. Since the retained support has diameter $d_n^{\max}$,
$$
0\leq
G_n^{\mathrm{Set}}-\bar L_n^{\mathrm{rep}}
\leq
\max_kL_{n,k}-\min_kL_{n,k}
\leq
d_n^{\max}\Lambda_n(t).
$$
Combining this with~\eqref{eq:continuous_quantized_regret_bound} proves~\eqref{eq:set_true_regret_bound}.
\end{proof}

\begin{proof}[Proof of Proposition~\ref{prop:local_approx_pressure}]
Let $\bm s_n^\star$ maximize $\bar H_n$. For Stochastic Regret MP, Lemma~\ref{lem:regret_score_bounds} and optimality of $\bm s_n^{\mathrm{Sto}}$ give
$$
\begin{array}{rcl}
\bar L_n(\bm s_n^{\mathrm{Sto}})
&\leq&G_n^{\mathrm{Sto}}(\bm s_n^{\mathrm{Sto}})+(\varepsilon_n^{\mathrm{cov}}+\tau_n)\Lambda_n\\
&\leq&G_n^{\mathrm{Sto}}(\bm s_n^\star)+(\varepsilon_n^{\mathrm{cov}}+\tau_n)\Lambda_n\\
&\leq&\bar L_n(\bm s_n^\star)+2(\varepsilon_n^{\mathrm{cov}}+\tau_n)\Lambda_n.
\end{array}
$$
For DR Regret MP,
$$
\begin{array}{rcl}
\bar L_n(\bm s_n^{\mathrm{DR}})
&\leq&\bar L_n^{\mathrm{rep}}(\bm s_n^{\mathrm{DR}})+\tau_n\Lambda_n\\
&\leq&G_n^{\mathrm{DR}}(\bm s_n^{\mathrm{DR}})+\tau_n\Lambda_n\\
&\leq&G_n^{\mathrm{DR}}(\bm s_n^\star)+\tau_n\Lambda_n\\
&\leq&\bar L_n^{\mathrm{rep}}(\bm s_n^\star)+(\varepsilon_n+\varepsilon_n^{\mathrm{cov}}+\tau_n)\Lambda_n\\
&\leq&\bar L_n(\bm s_n^\star)+(\varepsilon_n+\varepsilon_n^{\mathrm{cov}}+2\tau_n)\Lambda_n.
\end{array}
$$
For Set Regret, the same argument gives
$\bar L_n(\bm s_n^{\mathrm{Set}})-\bar L_n(\bm s_n^\star)\leq(d_n^{\max}+2\tau_n)\Lambda_n$.
Lemma~\ref{lem:conditional_regret_pressure} converts these regret gaps into~\eqref{eq:local_approx_pressure}.
\end{proof}

\subsection{Queue-linear pressure loss and strong stability}
\label{app:queue_linear_proof}
\label{app:main_stability_proof}

We first verify the affine sensitivity form for the residual reconstructions used in the paper. For the multiplicative reconstruction in~\eqref{eq:scenario_qc_reconstruction}, compactness and the positivity bounds in Section~\ref{sec:uncertain_state} give
$$
\left|\frac{\widehat Q_m}{1+\xi_m^Q}-\frac{\widehat Q_m}{1+\zeta_m^Q}\right|
\leq
\frac{\widehat Q_m}{\underline\chi_Q^2}|\xi_m^Q-\zeta_m^Q|,
\qquad
\left|\frac{\widehat c_m}{1+\xi_m^C}-\frac{\widehat c_m}{1+\zeta_m^C}\right|
\leq
\frac{\widehat c_m}{\underline\chi_C^2}|\xi_m^C-\zeta_m^C|.
$$
Bounded saturation service and compact residual support give a uniform bound on reconstructed service. Euclidean projection onto the closed convex feasible routing set is nonexpansive, and finite-dimensional norm equivalence with the scaling in $d_n$ therefore yields a finite $h_n$ such that
$$
|H_n(\bm s_n,\bm\xi,t)-H_n(\bm s_n,\bm\zeta,t)|
\leq
h_n\|\widehat{\bm Q}_{\mathcal L_n}(t)\|_1d_n(\bm\xi,\bm\zeta).
$$
Thus~\eqref{eq:affine_pressure_sensitivity} holds with $\beta_n=0$. The multiplicative queue relation also gives~\eqref{eq:affine_estimator_growth} with $b_{Q,n}=0$ and finite $\eta_{Q,n}$.

For bounded additive and clipped reconstruction, nonexpansiveness of projection onto the nonnegative orthant gives $|[\widehat Q_m-e_m]^+-[\widehat Q_m-e'_m]^+|\leq|e_m-e'_m|$. Bounded additive residuals contribute a finite queue-independent term to pressure sensitivity, while routing perturbations multiply a queue magnitude bounded by $\|\widehat{\bm Q}\|_1$ plus a finite residual bound. Hence finite $\beta_n$ and $h_n$ exist in~\eqref{eq:affine_pressure_sensitivity}. Likewise, $\widehat{\bm Q}=\bm Q+\bm e$ with bounded $\bm e$ gives~\eqref{eq:affine_estimator_growth} with finite $b_{Q,n}$ and $\eta_{Q,n}=1$; clipping can only reduce the reconstructed nonnegative magnitude.

\begin{proof}[Proof of Proposition~\ref{prop:queue_linear_pressure_loss}]
For any candidate phase, both the scenario-best pressure and the candidate pressure vary by at most $\Upsilon_n(t)d_n(\bm\xi,\bm\zeta)$. Hence
$\Lambda_n(t)\leq2\Upsilon_n(t)$.
Assumption~\ref{ass:affine_reconstruction} then gives
$$
\Lambda_n(t)
\leq
2\beta_n+2h_n b_{Q,n}+2h_n\eta_{Q,n}\|\bm Q_{\mathcal L_n}(t)\|_1
=
\gamma_n+\ell_n\|\bm Q_{\mathcal L_n}(t)\|_1,
$$
which proves~\eqref{eq:queue_linear_regret}. Summing over intersections,
$$
\sum_{n\in\mathcal N}a_n^\psi\Lambda_n(t)
\leq
\sum_na_n^\psi\gamma_n
+
\sum_mQ_m(t)\sum_{n:m\in\mathcal L_n}a_n^\psi\ell_n
\leq
B_\xi^\psi+\delta^\psi\|\bm Q(t)\|_1,
$$
which is~\eqref{eq:global_pressure_loss_bound}. Since the left-hand side is $\mathcal G_t$-measurable, taking the conditional expectation of the right-hand side given $\mathcal G_t$ yields~\eqref{eq:conditional_global_pressure_loss_bound}.
\end{proof}

\begin{proof}[Proof of Theorem~\ref{thm:regret_mp_stability}]
Because $\bm s^\psi(t)$ is $\mathcal G_t$-measurable, summing Proposition~\ref{prop:local_approx_pressure} over intersections and using~\eqref{eq:conditional_global_pressure_loss_bound} gives
$$
\mathbb E[H(\bm s^\psi(t),t)\mid\mathcal G_t]
\geq
\sum_{n\in\mathcal N}\max_{\bm s_n\in\mathcal S_n}\bar H_n(\bm s_n,t)
-B_\xi^\psi
-\delta^\psi\mathbb E[\|\bm Q(t)\|_1\mid\mathcal G_t].
$$
The comparison phase $\widetilde{\bm s}(t)$ is randomized using $\mathcal G_t$ only, so its conditional expected pressure is a convex combination of the local conditional pressures. Therefore,
$$
\sum_n\max_{\bm s_n}\bar H_n(\bm s_n,t)
\geq
\mathbb E[H(\widetilde{\bm s}(t),t)\mid\mathcal G_t].
$$
By~\eqref{eq:conditional_service_reserve},
$$
\mathbb E\!\left[
H(\widetilde{\bm s}(t),t)-\bm Q^\top(t)\bm\lambda
\mid\mathcal G_t^Q
\right]
\geq
\epsilon_0\|\bm Q(t)\|_1.
$$
Taking the conditional expectation given $\mathcal G_t$ and applying the tower property yields
$$
\mathbb E\!\left[
H(\widetilde{\bm s}(t),t)-\bm Q^\top(t)\bm\lambda
\mid\mathcal G_t
\right]
\geq
\epsilon_0\mathbb E[\|\bm Q(t)\|_1\mid\mathcal G_t].
$$
Combining the preceding inequalities gives
$$
\mathbb E[\bm Q^\top(t)\bm\lambda-H(\bm s^\psi(t),t)\mid\mathcal G_t]
\leq
B_\xi^\psi-(\epsilon_0-\delta^\psi)\mathbb E[\|\bm Q(t)\|_1\mid\mathcal G_t].
$$
Taking the conditional expectation of~\eqref{eq:initial_drift_bound} given $\mathcal G_t$, applying this bound, and then taking expectations gives
$$
\mathbb E[\Delta V(t)]
\leq
B_0+B_\xi^\psi-(\epsilon_0-\delta^\psi)\mathbb E\|\bm Q(t)\|_1.
$$
Since $\epsilon_0>\delta^\psi$, summing from $t=0$ to $T-1$, using $V(\bm Q(T))\geq0$ and $\mathbb E V(\bm Q(0))<\infty$, dividing by $T(\epsilon_0-\delta^\psi)$, and taking the limit superior proves~\eqref{eq:average_queue_bound}.
\end{proof}

\subsection{Point-estimate comparison and fixed positive queue weighting}
\label{app:point_comparison}

For Point-estimate MP, let $\widehat H_n(\bm s_n,t)=H_n(\bm s_n,\bm0,t)$. By the definition of $\Upsilon_n(t)$,
$$
|H_n(\bm s_n,\bm\xi,t)-\widehat H_n(\bm s_n,t)|
\leq
\Upsilon_n(t)d_n(\bm\xi,\bm0).
$$
Taking conditional expectations gives
$$
|\bar H_n(\bm s_n,t)-\widehat H_n(\bm s_n,t)|
\leq
\Upsilon_n(t)\rho_n^{\mathrm P}(t).
$$
Let $\bm s_n^\star$ maximize $\bar H_n$ and $\bm s_n^{\mathrm P}$ maximize $\widehat H_n$. Point optimality implies
$$
\bar H_n(\bm s_n^\star)-\bar H_n(\bm s_n^{\mathrm P})
\leq
2\Upsilon_n(t)\rho_n^{\mathrm P}(t).
$$
Because $2\Upsilon_n(t)\leq\gamma_n+\ell_n\|\bm Q_{\mathcal L_n}(t)\|_1$ almost surely and the left-hand side of this sensitivity inequality is $\mathcal G_t$-measurable,
$$
\bar H_n(\bm s_n^\star)-\bar H_n(\bm s_n^{\mathrm P})
\leq
\bar\rho_n^{\mathrm P}
\left[
\gamma_n+
\ell_n\mathbb E[\|\bm Q_{\mathcal L_n}(t)\|_1\mid\mathcal G_t]
\right].
$$
Summing over intersections gives the network-wide Point-estimate pressure-loss bound stated in Section~\ref{sec:point_stability_comparison}. Substituting this bound into the same drift comparison used in Theorem~\ref{thm:regret_mp_stability} proves strong stability for every $\bm\lambda\in\mathcal D_{\mathrm{prov}}^{\mathrm P}$.

\begin{proof}[Proof of Corollary~\ref{cor:stability_region_comparison}]
If $2(\varepsilon_n^{\mathrm{cov}}+\tau_n)\leq\bar\rho_n^{\mathrm P}$ for every $n$, then $a_n^{\mathrm{Sto}}\leq\bar\rho_n^{\mathrm P}$ and therefore $\delta^{\mathrm{Sto}}\leq\delta^{\mathrm P}$. Any comparison policy with reserve $\epsilon_0>\delta^{\mathrm P}$ also has $\epsilon_0>\delta^{\mathrm{Sto}}$, which gives $\mathcal D_{\mathrm{prov}}^{\mathrm P}\subseteq\mathcal D_{\mathrm{prov}}^{\mathrm{Sto}}$.

If $\varepsilon_n+\varepsilon_n^{\mathrm{cov}}+2\tau_n\leq\bar\rho_n^{\mathrm P}$ for every $n$, then $a_n^{\mathrm{DR}}\leq\bar\rho_n^{\mathrm P}$ and hence $\delta^{\mathrm{DR}}\leq\delta^{\mathrm P}$. In addition, $\varepsilon_n\geq\varepsilon_n^{\mathrm{cov}}$ implies $a_n^{\mathrm{Sto}}\leq a_n^{\mathrm{DR}}$ and therefore $\delta^{\mathrm{Sto}}\leq\delta^{\mathrm{DR}}$. The stated inclusions follow from the common reserve definition.
\end{proof}

\subsubsection{Fixed positive queue weighting and link-length normalization}
\label{app:weighted_stability}

Let $\bm W=\operatorname{diag}(w_m:m\in\mathcal M)$ satisfy $0<w_{\min}\leq w_m\leq w_{\max}<\infty$. The weighted network pressure is
$$
H_W(\bm s,t)=\bm Q^\top(t)\bm W[\bm I-\bm R^\top(t)]\bm C(t)\bm s,
$$
which decomposes as $H_W(\bm s,t)=\sum_nH_{n,W}(\bm s_n,t)$. Define the weighted phase regret, conditional pressure and regret, and the moduli $\Lambda_{n,W}(t)$ and $\Upsilon_{n,W}(t)$ by replacing $H_n$ with $H_{n,W}$ in Section~\ref{sec:network_stability}. The aggregation coefficients $a_n^\psi$ in~\eqref{eq:aggregation_pressure_coefficient} are unchanged because the same residual support and probability sets are used.

For the residual reconstructions considered in this paper, the same sensitivity argument yields finite constants $\beta_{n,W}\geq0$ and $h_{n,W}>0$ such that
$\Upsilon_{n,W}(t)\leq\beta_{n,W}+h_{n,W}\|\widehat{\bm Q}_{\mathcal L_n}(t)\|_1$.
Define
$$
\gamma_{n,W}=2(\beta_{n,W}+h_{n,W}b_{Q,n}),
\qquad
\ell_{n,W}=2h_{n,W}\eta_{Q,n},
$$
$$
B_{\xi,W}^\psi=\sum_na_n^\psi\gamma_{n,W},
\qquad
\delta_W^\psi=\max_{m\in\mathcal M}\sum_{n:m\in\mathcal L_n}a_n^\psi\ell_{n,W}.
$$
Then
\begin{equation}
    \sum_na_n^\psi\Lambda_{n,W}(t)
    \leq
    B_{\xi,W}^\psi
    +
    \delta_W^\psi\|\bm Q(t)\|_1.
    \label{eq:weighted_global_pressure_loss}
\end{equation}

\begin{lemma}[Stability with fixed positive queue weighting]
\label{lem:weighted_regret_stability}
Suppose Assumptions~\ref{ass:bounded_primitives}, \ref{ass:coverage}, and~\ref{ass:affine_reconstruction} hold, with the probability-coverage requirement omitted for Set Regret. If a comparison policy satisfies~\eqref{eq:conditional_service_reserve} with reserve $\epsilon_0$ and
\begin{equation}
    w_{\min}\epsilon_0>\delta_W^\psi,
    \label{eq:weighted_stability_condition}
\end{equation}
then weighted Regret MP strongly stabilizes the network. For a finite constant $B_W$ independent of the queue state,
\begin{equation}
    \limsup_{T\rightarrow\infty}\frac1T\sum_{t=0}^{T-1}\mathbb E\|\bm Q(t)\|_1
    \leq
    \frac{B_W+B_{\xi,W}^\psi}{w_{\min}\epsilon_0-\delta_W^\psi}.
    \label{eq:weighted_average_queue_bound}
\end{equation}
\end{lemma}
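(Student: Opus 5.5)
The plan is to rerun the proof of Theorem~\ref{thm:regret_mp_stability} with the weighted Lyapunov function $V_W(\bm Q)=\frac12\bm Q^\top\bm W\bm Q$.

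\emph{Weighted drift bound.} Since $\bm Q(t+1)=\bm Q(t)+\bm Y(t)$ with $\bm Y(t)=\bm A(t)-[\bm I-\bm R^\top(t)]\bm D(t)$, we have
\[
\Delta V_W=\bm Q^\top\bm W\bm Y+\tfrac12\bm Y^\top\bm W\bm Y .
\]
The quadratic term is at most $w_{\max}B_{\mathrm q}$ in conditional expectation. The arrival term gives $\bm Q^\top\bm W\bm\lambda$. For the discharge term, we repeat the unused-service argument of Lemma~\ref{lem:quadratic_drift}: with $\bm e^{\mathrm{us}}=\bm C\bm s-\bm D\succeq\bm0$,
\[
\bm Q^\top\bm W(\bm I-\bm R^\top)\bm D\ \geq\ H_W(\bm s,t)-\bm Q^\top\bm W\bm e^{\mathrm{us}}\ \geq\ H_W(\bm s,t)-w_{\max}B_{\mathrm u}.
\]
The first inequality drops the term $\bm Q^\top\bm W\bm R^\top\bm e^{\mathrm{us}}\geq 0$. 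The second uses that $Q_m e^{\mathrm{us}}_m\leq(c_m^{\max})^2$ wherever $e^{\mathrm{us}}_m>0$. This yields
\[
\mathbb E[\Delta V_W\mid\mathcal H_t]\leq B_W+\bm Q^\top\bm W\bm\lambda-H_W(\bm s(t),t),\qquad B_W=w_{\max}(B_{\mathrm q}+B_{\mathrm u}).
\]

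\emph{Pressure loss and comparison.} Lemma~\ref{lem:conditional_regret_pressure}, Lemma~\ref{lem:regret_score_bounds} and Proposition~\ref{prop:local_approx_pressure} use only the score structure, so they hold verbatim with $H_{n,W}$ and $\Lambda_{n,W}$. Summing over $n$ and applying \eqref{eq:weighted_global_pressure_loss} gives
\[
\mathbb E[H_W(\bm s^\psi)\mid\mathcal G_t]\ \geq\ \sum_n\max_{\bm s_n}\bar H_{n,W}-B^\psi_{\xi,W}-\delta^\psi_W\,\mathbb E[\|\bm Q\|_1\mid\mathcal G_t].
\]
For the comparison policy, the maximum dominates the $\mathcal G_t$-randomized $\widetilde{\bm s}$. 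By \eqref{eq:conditional_service_reserve}, conditioning first on $\mathcal G_t^Q$ gives
\[
\mathbb E\bigl[H_W(\widetilde{\bm s})-\bm Q^\top\bm W\bm\lambda\mid\mathcal G_t^Q\bigr]=\bm Q^\top\bm W\bigl(\mathbb E[\bm Z^{\widetilde s}\mid\mathcal G_t^Q]-\bm\lambda\bigr)\geq\epsilon_0\,\bm Q^\top\bm W\bm 1\geq w_{\min}\epsilon_0\|\bm Q\|_1.
\]
This step uses that $\bm Q$ and $\bm W$ are $\mathcal G_t^Q$-measurable and nonnegative.

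\emph{Averaging.} Combining the two bounds,
\[
\mathbb E[\Delta V_W]\leq B_W+B^\psi_{\xi,W}-(w_{\min}\epsilon_0-\delta^\psi_W)\,\mathbb E\|\bm Q\|_1 .
\]
Telescoping, using $V_W\geq0$ and $\mathbb E V_W(\bm Q(0))<\infty$, and dividing by $T(w_{\min}\epsilon_0-\delta^\psi_W)$ gives \eqref{eq:weighted_average_queue_bound}.

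The main obstacle is the validity of \eqref{eq:weighted_global_pressure_loss}, i.e., that the weighted modulus still satisfies an affine bound. This follows because $\Upsilon_{n,W}\leq w_{\max}\times$ (the unweighted reconstruction-sensitivity bound), and $\Lambda_{n,W}\leq2\Upsilon_{n,W}$ as in Proposition~\ref{prop:queue_linear_pressure_loss}. I would state this explicitly and otherwise take the preceding display as given.
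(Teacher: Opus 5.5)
Your proposal is correct and follows the paper's proof essentially step for step. The steps are the weighted Lyapunov function $V_W$, the weighted drift bound, the weighted conditional-pressure bound against the $\mathcal G_t$-randomized comparison policy, the use of $\bm W\succeq w_{\min}\bm I$ in the reserve condition, and telescoping; your explicit $B_W=w_{\max}(B_{\mathrm q}+B_{\mathrm u})$ simply makes concrete the paper's ``same expansion'' step. Like the paper, you take \eqref{eq:weighted_global_pressure_loss} as given, but the $w_{\max}$ scaling holds for the termwise reconstruction-sensitivity bounds of the specific reconstructions in Appendix~\ref{app:network_stability_proofs}. It is not an abstract consequence of Assumption~\ref{ass:affine_reconstruction}, since unequal positive weights can remove cancellations in $H_n$.
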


\begin{proof}[Proof of Lemma~\ref{lem:weighted_regret_stability}]
Use $V_W(\bm Q)=\frac12\bm Q^\top\bm W\bm Q$. The same expansion as in the unweighted case gives a finite constant $B_W$ such that
\begin{equation}
\mathbb E[\Delta V_W(t)\mid\mathcal H_t]
\leq
B_W+\bm Q^\top(t)\bm W\bm\lambda-H_W(\bm s(t),t).
\label{eq:weighted_drift_bound}
\end{equation}
The weighted conditional-pressure bound and~\eqref{eq:weighted_global_pressure_loss} give
$$
\mathbb E[H_W(\bm s^\psi(t),t)\mid\mathcal G_t]
\geq
\mathbb E[H_W(\widetilde{\bm s}(t),t)\mid\mathcal G_t]
-B_{\xi,W}^\psi
-\delta_W^\psi\mathbb E[\|\bm Q(t)\|_1\mid\mathcal G_t].
$$
From~\eqref{eq:conditional_service_reserve} and $\bm W\succeq w_{\min}\bm I$,
$$
\mathbb E\!\left[
H_W(\widetilde{\bm s}(t),t)-\bm Q^\top(t)\bm W\bm\lambda
\mid\mathcal G_t^Q
\right]
\geq
w_{\min}\epsilon_0\|\bm Q(t)\|_1.
$$
Taking the conditional expectation given $\mathcal G_t$, combining the preceding inequalities with~\eqref{eq:weighted_drift_bound}, and then taking expectations yields
$$
\mathbb E[\Delta V_W(t)]
\leq
B_W+B_{\xi,W}^\psi-(w_{\min}\epsilon_0-\delta_W^\psi)\mathbb E\|\bm Q(t)\|_1.
$$
Summation over time proves~\eqref{eq:weighted_average_queue_bound} under~\eqref{eq:weighted_stability_condition}.
\end{proof}

For link-length normalization, let movement $m=(i,j)$ have upstream-link length $L_i$ and choose $w_m=1/L_i$. If $0<L_{\min}\leq L_i\leq L_{\max}<\infty$, then $w_{\min}=1/L_{\max}$ and $w_{\max}=1/L_{\min}$. The weighted queue component is $X_{ij}=Q_{ij}/L_i$, so this result applies directly to the link-length-normalized pressure used in the microscopic experiments.

\section{Additional Proof for the Numerical Analysis}
\label{app:numerical_proof}

\subsection{Phase-decision invariance}
\label{app:decision_invariance_proof}

\begin{proof}[Proof of Proposition~\ref{prop:decision_invariance}]
For any competing phase $\bm u_n\neq\bm s_n^0$,
$$
\widetilde F_n(\bm u_n)-\widetilde F_n(\bm s_n^0)
\geq
F_n(\bm u_n)-F_n(\bm s_n^0)-2\vartheta_n
\geq
\Gamma_n-2\vartheta_n
>0.
$$
Thus every competing phase retains a strictly larger perturbed score than $\bm s_n^0$, so $\bm s_n^0$ remains the unique minimizer.
\end{proof}

\section{Detector-Based State Estimation and Uncertainty Calibration}
\label{app:microscopic_estimator}

This appendix gives the implementation details summarized in Section~\ref{sec:simulation}. The estimator uses aggregate physical detector counts only; vehicle identities are used by the simulator truth recorder for calibration and evaluation, not by the online estimator.

\subsection{Logical-link conservation and movement composition}

Let $\Delta_{\mathrm{det}}=15$~s be the detector update interval. For logical link $i$, let $\widehat Q_i(t)$ be its estimated vehicle total at the beginning of an update, $N_i^{\mathrm{in}}(t)$ the sum of observed entry-detector counts during the interval, and $N_{ij}^{\mathrm{out}}(t)$ the observed discharge assigned to movement $(i,j)$. The link total is updated by
\begin{equation}
    \widehat Q_i(t+\Delta_{\mathrm{det}})
    =
    \left[
        \widehat Q_i(t)
        +
        N_i^{\mathrm{in}}(t)
        -
        \sum_{j\in\mathcal J_i}
        N_{ij}^{\mathrm{out}}(t)
    \right]^+ ,
    \label{eq:app_link_conservation}
\end{equation}
where $\mathcal J_i$ is the set of downstream movement destinations from link $i$ and $[z]^+=\max\{z,0\}$. Thus, movement composition can redistribute the estimated link mass but cannot create or remove it.

Let $W_{\mathrm{mov}}=300$~s be the rolling movement-composition window and define the rolling movement count
\(N_{ij}^{\mathrm{roll}}(t) = \sum_{\tau\in(t-W_{\mathrm{mov}},t]} N_{ij}^{\mathrm{out}}(\tau).\)
When $\sum_{k\in\mathcal J_i}N_{ik}^{\mathrm{roll}}(t)>0$, the movement share is
\(\widehat\phi_{ij}(t) = \frac{N_{ij}^{\mathrm{roll}}(t)} {\sum_{k\in\mathcal J_i}N_{ik}^{\mathrm{roll}}(t)}.\)
If no movement count is available in the rolling window, the previous valid share is retained; the initial fallback is a topology-based nominal share. The movement state is then $\widehat Q_{ij}(t)=\widehat Q_i(t)\widehat\phi_{ij}(t)$, with $\sum_{j\in\mathcal J_i}\widehat Q_{ij}(t)=\widehat Q_i(t)$.

For an upstream movement $a$, let $\mathcal O_a$ be the set of its feasible downstream logical links. The routing construction used in the microscopic implementation distributes mass across these feasible downstream links and then uses the movement shares on the receiving link. If downstream movement $b=(j,k)$ lies on receiving link $j$, write $\widehat\phi_b(t):=\widehat\phi_{jk}(t)$. In compact form,
$$
\widehat r_{ab}(t)
    =
    \frac{1}{|\mathcal O_a|}
    \widehat\phi_b(t),
    \qquad
    b\ \text{belongs to a link in }\mathcal O_a,
$$
with infeasible transitions assigned zero and each routing row projected onto the feasible row-substochastic set when numerical reconciliation is needed.

\subsection{Length-normalized uncertainty state}

For movement $(i,j)$, define
\(X_{ij}(t)=\frac{Q_{ij}(t)}{L_i}, \qquad \widehat X_{ij}(t)=\frac{\widehat Q_{ij}(t)}{L_i}.\)
The residual stored for calibration is additive in the exact state used by the microscopic pressure calculation: $\epsilon_{ij}^{X}=\widehat X_{ij}-X_{ij}$ and $\epsilon_{ab}^{R}=\widehat r_{ab}-r_{ab}$. We collect these normalized-state and routing residuals in $\bm\epsilon$ and let $\bm E^R=[\epsilon_{ab}^R]$ denote the matrix form of the routing component.
For a retained residual scenario $\bar{\bm\epsilon}^{\,k}$, with routing-residual matrix $\bar{\bm E}^{R,k}$, the plausible latent state is reconstructed by
\begin{equation}
    X_{ij}^{k}
    =
    \left[
        \widehat X_{ij}
        -
        \bar\epsilon_{ij}^{X,k}
    \right]^+,
    \qquad
    \bm R^{k}
    =
    \Pi_{\mathcal R}
    \left[
        \widehat{\bm R}
        -
        \bar{\bm E}^{R,k}
    \right],
    \label{eq:app_clean_scenario_reconstruction}
\end{equation}
where $\Pi_{\mathcal R}$ is the Euclidean projection onto the feasible routing set. The reconstructed movement count passed to the common pressure evaluator is $Q_{ij}^{k}=L_iX_{ij}^{k}$. With fixed finite link lengths, the bounded additive residuals and nonnegative clipping in~\eqref{eq:app_clean_scenario_reconstruction} satisfy the affine reconstruction form in Assumption~\ref{ass:affine_reconstruction}.

Calibration uses three independent simulation runs over the interval $300$--$7200$~s, while formal baseline evaluation uses three disjoint runs. Residual components are scaled by a component-wise median absolute deviation factor before scenario reduction. Each local model retains $K_n=8$ representative joint residual vectors using weighted medoid reduction. The representative probabilities are the empirical assignment frequencies, and the operational DR radius is set to the 95th percentile of bootstrap Wasserstein distances generated from the retained scenario labels. This bootstrap-calibrated radius is the operational ambiguity radius used in the microscopic experiments; the coverage quantity $\varepsilon_n^{\mathrm{cov}}$ in Assumption~\ref{ass:coverage} is an analytical condition used in the stability result.

\subsection{Detector undercount}

Let $N_d(t)$ denote the unbiased interval count from detector $d$. For a detector assigned a fixed undercount fraction $b_d^{\mathrm{uc}}$, the observed count is $\widetilde N_d(t)=(1-b_d^{\mathrm{uc}})N_d(t)$.
The same biased physical-detector count is used wherever that detector enters the estimator. In the undercount experiment, $b_d^{\mathrm{uc}}=b_{\max}u_d$ with $u_d\in[0,1]$ drawn once for each selected detector and held fixed over the complete simulation. The detector sets are selected from the frozen topology and are independent of the traffic calibration data.

\subsection{Detector failure}

When a physical detector fails, its current-run count is not supplied to the estimator. Healthy counts remain fixed observations. Let $\mathcal D_f$ be a connected component of failed detectors entering the same local conservation system, let $\mathcal I_f$ be the set of logical links affected by that component, and let $\bm z$ collect the unknown interval flows of the failed detectors. For $i\in\mathcal I_f$, let $N_i^{\mathrm{in}}(t;\bm z)$ and $N_i^{\mathrm{out}}(t;\bm z)$ denote the reconstructed interval inflow and outflow after inserting the candidate failed-detector flows. The reconstruction uses a bounded $L_1$ conservation program of the form
$$
\begin{array}{ll}
\displaystyle \min_{\bm z,\bm v^+,\bm v^-,\bm e^+,\bm e^-} & \displaystyle \sum_{d\in\mathcal D_f}w_d(v_d^++v_d^-)+\eta_{\mathrm f}\sum_{i\in\mathcal I_f}(e_i^-+e_i^+)\\[2pt]
\text{s.t.} & z_d-\bar z_d=v_d^+-v_d^-,\quad d\in\mathcal D_f,\\
& 0\le z_d\le\bar z_d^{\max},\quad d\in\mathcal D_f,\\
& v_d^+,v_d^-\ge0,\quad d\in\mathcal D_f,\qquad e_i^+,e_i^-\ge0,\quad i\in\mathcal I_f,\\
& 0\le\widehat Q_i(t)+N_i^{\mathrm{in}}(t;\bm z)-N_i^{\mathrm{out}}(t;\bm z)+e_i^--e_i^+\le Q_i^{\max},\quad i\in\mathcal I_f.
\end{array}
$$
where $v_d^+$ and $v_d^-$ are the nonnegative positive and negative deviations of $z_d$ from its historical proxy $\bar z_d$, $w_d>0$ weights that deviation, $\eta_{\mathrm f}>0$ weights conservation slack, $\bar z_d^{\max}$ is a calibrated upper bound, and $Q_i^{\max}$ is the physical logical-link storage bound. The nonnegative slacks $e_i^-$ and $e_i^+$ preserve feasibility under temporary inconsistency. Reconstructed flows are blended with historical movement-share and service information before being passed to the same conservation estimator in~\eqref{eq:app_link_conservation}. Network-entry detectors are protected in the formal failure experiment so that external inflows remain directly observed.

\end{document}